\DeclareMathOperator{\id}{id}
\DeclareMathOperator{\spa}{span}
\DeclareMathOperator{\Tr}{Tr}
\DeclareMathOperator{\Fix}{Fix}
\newcommand{\N}{\mathbb{N}}
\newcommand{\R}{\mathbb{R}}
\newcommand{\C}{\mathbb{C}}
\newcommand{\cA}{\mathcal{A}}
\newcommand{\cB}{\mathcal{B}}
\newcommand{\cH}{\mathcal{H}}
\newcommand{\cJ}{\mathcal{J}}
\newcommand{\cK}{\mathcal{K}}
\newcommand{\cR}{\mathcal{R}}
\newcommand{\1}{{\mathds{1}}}
\newcommand{\msA}{\mathscr{A}}
\newcommand{\bra}[1]{\mathinner{\langle #1|}}
\newcommand{\ket}[1]{\mathinner{|#1\rangle}}
\newcommand{\braket}[2]{\mathinner{\langle #1|#2\rangle}}
\newcommand{\dyad}[2]{| #1\rangle \langle #2|}
\newcommand{\bit}{\{0,1\}}
\newtheorem*{quesnocounter}{Question}
\newtheorem*{theoremnocounter}{Theorem}
\newtheorem{thm}{Theorem}[section]
\newtheorem{lem}[thm]{Lemma}
\newtheorem{prop}[thm]{Proposition}
\theoremstyle{definition}
\newtheorem{defn}[thm]{Definition}
\newtheorem{exa}[thm]{Example}
\theoremstyle{remark}
\DeclarePairedDelimiter\abs{\lvert}{\rvert}
\DeclarePairedDelimiter\norm{\lVert}{\rVert}
\DeclarePairedDelimiter\parens{\lparen}{\rparen}
\DeclarePairedDelimiter\bracks{\lbrack}{\rbrack}
\newcommand{\op}{\oplus}
\newcommand{\snote}[1]{{\textcolor{purple}{[Sigurd: #1]}}}
\definecolor{darkgreen}{rgb}{0,0.5,0}
\newcommand{\YZ}[1]{{\textcolor{darkgreen}{[Yuming: #1]}}}
\title[Guess your neighbor's input: Quantum advantage in Feige's game]{Guess your neighbor's input:\texorpdfstring{\\}{} Quantum advantage in Feige's game}
\author{Simon Schmidt, Sigurd A.~L.~Storgaard, Michael Walter, Yuming Zhao}
\address{Simon Schmidt, Faculty of Computer Science, Ruhr University Bochum, Germany}
\email{s.schmidt@rub.de}
\address{Sigurd Storgaard, Department of Mathematical Sciences, University of Copenhagen, Denmark}
\email{sals@math.ku.dk}
\address{Michael Walter, Ludwig-Maximilians-Universit\"at M\"unchen, Germany; Faculty of Computer Science, Ruhr University Bochum, Germany; Korteweg-de Vries Institute for Mathematics, University of Amsterdam, Netherlands}
\email{michael.walter@lmu.de}
\address{Yuming Zhao, Department of Mathematical Sciences, University of Copenhagen, Denmark}
\email{yuming@math.ku.dk}
\begin{document}

\begin{abstract}
In this article, we study a nonlocal game with two questions and three answers per player, which was first considered by Feige in 1991, and show that there is quantum advantage in this game. We prove that the game is a robust self-test for the $3$-dimensional maximally entangled state.
Furthermore, we show that the game can be seen as the "or" of two games that each do not have quantum advantage. Lastly, we investigate the behavior of the game with respect to parallel repetition in the classical, quantum and non-signalling case and obtain perfect parallel repetition of the non-signalling value if Feige's game is repeated an even amount of times.
\end{abstract}

\maketitle

\tableofcontents

\section{Introduction}
A \emph{nonlocal game} \cite{CHTW} consists of two (or more) cooperative players, often called Alice and Bob, that interact with a referee. In the game, the referee sends a question to each player, for which each player gives  an answer. Based on the tuple of questions and answers, the referee decides if the players win or lose. Communication is not permitted between Alice and Bob, hence each player has no information about the questions given to the other players, nor do they know the answers provided to the referee by the other players. Nevertheless, the description of the game is known to the players ahead of time, allowing them to agree on a strategy beforehand and maximize their probability of winning the game. The \emph{classical value} $\omega_c(G)$ of a nonlocal game $G$ is the maximum winning probability of classical players, and the \emph{quantum value} $\omega_q(G)$ denotes the maximum winning probability of quantum players sharing a (finite) amount of quantum resources (such as entangled quantum states, like EPR pairs). Furthermore, the \emph{non-signalling value} $\omega_{ns}(G)$ is the maximum winning probability if one considers probability distributions such that the marginal distribution of either player’s answers must be independent of the other player’s question, without requiring the probability distribution to be physically realizable.

\noindent\textbf{\emph{Feige's game.}} In this work, we revisit a game first studied by Feige \cite{Feige}, where it was considered as counterexample in parallel repetition since its value does not decrease when played twice in parallel. The game is defined as follows. Alice and Bob both receive a bit as question (with uniform probability), and they are each allowed to choose an answer from the set $\{0,1,\perp\}$. To win the game, exactly one of the players has to answer with $\perp$, while the other player's answer has to agree with the question the player who answered $\perp$ received. More formally, the game $G_F$ \cite{Feige} is defined via
$$
    X= Y=\{0,1 \}, \quad A=B=\{0,1,\perp \}, \quad \pi(x,y)=\tfrac{1}{4} \quad  \text{for all } (x,y)
$$
and has the predicate
$$
    V(a,b,x,y)=\begin{cases}
        1 \quad (a,b)=(\perp,x) \text{ or } (a,b)=(y,\perp),
        \\
        0 \quad \text{otherwise.}
    \end{cases}
$$
It has been conjectured in the literature that $\omega_q(G_F)=\omega_c(G_F)$ (\cite{yuen2016phd}), but so far no attempts for computing the quantum value of the game were made. Since there is no quantum advantage for other variants of "guess your neighbors input" games (\cite{almeidaguess}, \cite{acinguess}), it sounds plausible that there is also no quantum advantage in Feige's game. We show that, surprisingly, there exists a quantum strategy for Feige's game that exceeds the classical value and we compute the quantum value.

\begin{theoremnocounter}
Feige's game $G_F$ has quantum advantage with $\omega_q(G_F)=\frac{9}{16}>\frac{1}{2}=\omega_c(G_F)$.
\end{theoremnocounter}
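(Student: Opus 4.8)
The plan is to prove $\omega_q(G_F)\ge\frac{9}{16}$ and $\omega_q(G_F)\le\frac{9}{16}$ separately, and to treat the (elementary) classical value along the way. For $\omega_c(G_F)=\frac12$ it suffices, by convexity, to bound deterministic strategies $a=f(x)$, $b=g(y)$: a question pair $(x,y)$ is won either when $f(x)=\perp$ and $g(y)=x$, or when $g(y)=\perp$ and $f(x)=y$; since $g$ is a function, the pairs won the first way number at most $|g^{-1}(\{0,1\})|\le 2$, and if this equals $2$ then $g$ never outputs $\perp$, so no pair is won the second way. Hence no deterministic strategy wins more than $2$ of the $4$ pairs, and $2$ is attained (Alice always answers $\perp$, Bob always answers $0$), so $\omega_c(G_F)=\frac12$.

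For the quantum lower bound I would write down an explicit strategy on the $3$-dimensional maximally entangled state $\ket{\psi}=\tfrac{1}{\sqrt3}\bigl(\ket{00}+\ket{11}+\ket{22}\bigr)\in\C^3\otimes\C^3$ — that is, specify the two three-outcome POVMs $\{A^x_0,A^x_1,A^x_\perp\}_{x\in\{0,1\}}$ and $\{B^y_0,B^y_1,B^y_\perp\}_{y\in\{0,1\}}$ — and verify
\[
\omega_q(G_F)\ge\tfrac14\sum_{x,y\in\{0,1\}}\Bigl(\bra{\psi}\,A^x_\perp\otimes B^y_x\,\ket{\psi}+\bra{\psi}\,A^x_y\otimes B^y_\perp\,\ket{\psi}\Bigr)=\tfrac{9}{16},
\]
using $\bra{\psi}M\otimes N\ket{\psi}=\tfrac13\Tr(MN^{T})$. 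Since the game is claimed to be a self-test for $\ket{\psi}$, the optimal measurements are essentially forced, so I would search for them among natural low-rank configurations on $\C^3$: each $\perp$-outcome a rank-one effect along a distinguished direction, the two $\{0,1\}$-outcomes a binary measurement on a complementary plane, with the directions arranged so that Alice's $\perp$-direction has large overlap with the plane Bob uses when he guesses, and symmetrically.

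The crux is the matching upper bound $\omega_q(G_F)\le\frac{9}{16}$, which must hold for every finite-dimensional strategy. Here the plan is a non-commutative sum-of-squares certificate: in the $*$-algebra generated by the commuting families $\{A^x_a\otimes\1\}$ and $\{\1\otimes B^y_b\}$, find low-degree polynomials $T_i$ so that
\[
\tfrac{9}{16}\,\1-\tfrac14\sum_{x,y}\bigl(A^x_\perp\otimes B^y_x+A^x_y\otimes B^y_\perp\bigr)=\sum_i T_i^\dagger T_i+(\text{positive multiplier terms against the POVM relations}),
\]
so that taking the expectation in any state yields the bound. Concretely I would first compute the value numerically by a low level of the NPA hierarchy — plausibly the level containing all monomials $A^x_aB^y_b$ — check that it equals $\tfrac{9}{16}$, extract the dual SDP solution, symmetrise it over the symmetries of $G_F$ (swapping the players together with $x\leftrightarrow y$; relabelling the bits $0\leftrightarrow1$), rationalise the entries, and verify the resulting exact identity. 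A helpful preliminary step is to encode each three-outcome POVM by the single effect $E^x:=A^x_\perp$ and the sub-normalised binary observable $M^x:=A^x_0-A^x_1$ (so $A^x_{0/1}=\tfrac12(\1-E^x\pm M^x)$ and $(M^x)^2\le\1-E^x$), which substantially shrinks the algebra.

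I expect the main obstacle to be exactly this last step: the NPA SDP delivers the certificate only numerically, and converting a numerical dual into an exact algebraic SOS identity is delicate, because the naive rounding need not stay positive semidefinite — one has to use the symmetry group first to cut the free parameters down to a handful before clearing denominators. A secondary task is to make the optimal measurements explicit enough both to confirm the lower bound by direct computation and to extract the self-test; for the latter the plan is to analyse the zero set $\{T_i\ket{\psi}=0\}$ of the certificate, which should force the reference strategy up to local isometry and hence simultaneously pin down the strategy used in the lower bound. (Intuitively, the quantum advantage is possible at all because $G_F$ is won precisely when the players win one of two sub-games — Alice abstains and Bob names her input, or the reverse — each having value $\frac12$ with no quantum advantage but making opposite demands on a player's answer on any given run; classically the two cannot be blended, whereas shared entanglement lets the players correlate which of the two they attempt.)
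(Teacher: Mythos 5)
Your plan coincides with the paper's proof in every structural respect: the classical value via a counting argument over deterministic strategies, a quantum lower bound from an explicit strategy on $\ket{\psi_3}=\frac{1}{\sqrt3}\sum_i\ket{ii}$, and a matching upper bound from an exact sum-of-squares certificate extracted from a low level of the NPA hierarchy (the paper confirms that level $1+AB$ already certifies $\frac{9}{16}$ numerically). Even your closing intuition --- that $G_F$ is won by winning one of two sub-games of value $\frac12$ with no quantum advantage --- is precisely the paper's $(G_1\lor G_2)$ decomposition. The classical-value argument is essentially the paper's; just note that as written it only bounds the total when $|g^{-1}(\{0,1\})|=2$, and you need the symmetric bound via $|f^{-1}(\{0,1\})|$ to cover the remaining cases.

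The genuine gap is that for both quantum bounds you describe a search procedure rather than exhibiting its output, and those two objects \emph{are} the proof. For the lower bound, "search among natural low-rank configurations" does not establish $\omega_q(G_F)\geq\frac{9}{16}$; the paper's construction starts from the observables $\tilde Z=Z\op 1$ and $\tilde X=X\op 1$ on $\C^3=\C^2\op\C$ (non-commuting, but with $\tilde Z$ and $\tilde X\tilde Z\tilde X$ commuting), sets $P^x_a=\tilde X\tilde Z^x\tilde X\ket{\tilde a}\bra{\tilde a}\tilde X\tilde Z^x\tilde X$ and $Q^y_b=\tilde X P^y_b\tilde X$ for a suitable basis $\{\ket{\tilde 0},\ket{\tilde 1},\ket{\bot}\}$, and reduces the winning probability in closed form to $\frac p3\parens*{2-p+2\sqrt{1-p}}$ via $\ket\bot=\sqrt p\,\ket{0}+\sqrt{1-p}\,\ket{2}$, maximized at $p=\frac34$ with value exactly $\frac{9}{16}$; some such exact reduction is needed, since a numerically located configuration does not certify the exact value. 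For the upper bound you correctly flag that rounding a numerical dual need not preserve positive semidefiniteness, but you do not resolve the issue: the paper's resolution is to fix $\mu_L=\frac{9}{16}$, re-solve a feasibility SDP requiring $\tilde Y-\varepsilon I\succeq 0$ (strict positive definiteness leaves slack that survives rounding), and then verify the exact rational identity $F^*YF=\frac{9}{16}-\beta$ for an explicit vector $F$ of $32$ polynomials (taken at level $1+AB+A^2+A^3$ so that the certificate also yields the algebraic relations used for self-testing). Until the explicit strategy and the explicit certificate are produced and verified, neither inequality $\omega_q(G_F)\geq\frac{9}{16}$ nor $\omega_q(G_F)\leq\frac{9}{16}$ is proved.
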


Interestingly, the players use operators acting on $\C^3$ in the optimal quantum strategy and they share the maximally entangled state $\ket{\psi_3}=\sum_{i=0}^2 \ket{i}\otimes \ket{i}$. The strategy that achieves a winning probability of $\frac{9}{16}$ is constructed from the operators
\begin{align*}
    \tilde Z = Z \op 1 \quad\text{and}\quad \tilde X = X \op 1 \quad\text{on}\quad \mathbb{C}^3 = \mathbb{C}^2 \op \mathbb{C},
\end{align*}
where $X$ and $Z$ denote the usual Pauli matrices. The projective measurements are given by
\begin{align} \label{proj_meas}
    &P^x_a= \tilde X {\tilde Z}^x \tilde X \ket{\tilde a}\bra{\tilde a} \tilde X {\tilde Z}^x \tilde X, \quad a \in \{0,1,\bot\} \quad \text{ and}
    &&Q^y_b= \tilde X P^y_b \tilde X, \quad b \in \{0,1,\bot\},
\end{align}
for a carefully chosen basis $\{\ket{\tilde a}\}$ of $\C^3$. To obtain this basis, we crucially use that $\tilde Z$ and $\tilde X$ are two non-commuting observables such that $\tilde Z$ and $\tilde X \tilde Z \tilde X$ commute.

An approach to upper-bounding the quantum value of a nonlocal game is to associate a formal game polynomial $p$ to a game $G$, whose variables are identified with abstract projective measurements $P^{x}_a$ and $Q^{y}_b$ such that $\bra{\psi}p(P^{x}_a,Q^y_b)\ket{\psi}$ is precisely the winning probability of a quantum (commuting) strategy for $G$. To prove that the quantum value of $G_F$ is at most $\frac{9}{16}$, we find a \emph{sums of square decomposition} of the game polynomial of Feige's game by using the computer. More precisely, we obtain a vector of formal variables $V=V(P^{x}_a,Q^y_b)$ and a positive real matrix $Y$ such that $V^*YV=\frac{9}{16}I-p(P^{x}_a,Q^y_b)$, which implies $\omega_q(G_F)\leq \frac{9}{16}$.

\noindent\textbf{\emph{Self-testing.}} A nonlocal game is a \emph{self-test} if its optimal quantum strategy is unique up to local isometries, as introduced in \cite{MY}. Self-testing is a useful property of a nonlocal game as it can be used for certifying quantum devices. We have the following theorem for Feige's game.

\begin{theoremnocounter}
The game $G_F$ is a robust self-test for the operators in (\ref{proj_meas}) and the maximally entangled state $\ket{\psi_3}=\sum_{i=0}^2 \ket{i}\otimes \ket{i}$.
\end{theoremnocounter}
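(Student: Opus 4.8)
\emph{Proof plan.} The plan is to derive the self-testing statement from the sum-of-squares (SOS) certificate $V^*YV=\tfrac{9}{16}\1-p$ that underlies the bound $\omega_q(G_F)\le\tfrac{9}{16}$, following the standard SOS route to rigidity; the new feature here is that the ideal strategy splits as a CHSH-type strategy on a qubit together with a one-dimensional block, matching $\ket{\psi_3}=(\ket{00}+\ket{11})+\ket{22}$ under $\C^3=\C^2\op\C$. First I would fix a quantum strategy $(\ket{\psi},\{P^x_a\},\{Q^y_b\})$ with $\ket{\psi}\in\cH_A\otimes\cH_B$ of full Schmidt rank and $\bra{\psi}p(P^x_a,Q^y_b)\ket{\psi}=\tfrac{9}{16}$. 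Writing $Y=\sum_j y_jy_j^*$, the SOS identity reads $\tfrac{9}{16}\1-p=\sum_j L_j^*L_j$ with $L_j:=y_j^*V$, so substituting the strategy and using optimality forces $L_j\ket{\psi}=0$ for every $j$; this is a list of linear relations among the vectors $P^x_a\ket{\psi}$ and $Q^y_b\ket{\psi}$, and by full Schmidt rank those involving only one party's operators promote to exact operator identities on the support of the corresponding reduced state.

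Next I would reconstruct Pauli-type observables. Guided by the ideal formulas $P^x_a=\tX\tZ^x\tX\ket{\tilde a}\bra{\tilde a}\tX\tZ^x\tX$ and $Q^y_b=\tX P^y_b\tX$, I would build Hermitian operators $\hat Z_A,\hat X_A$ on $\cH_A$ out of the PVM elements (for instance from the $x=0$ and $x=1$ measurements and their comparison), and analogously $\hat Z_B,\hat X_B$ on $\cH_B$. From the relations of the first step one should then obtain (i) state-independent relations, namely that $\hat Z_A,\hat X_A$ are non-commuting involutions on the support projection $\Pi_A$ with $\hat Z_A$ commuting with $\hat X_A\hat Z_A\hat X_A$, equivalently $(\hat Z_A\hat X_A)^4=\Pi_A$ on $\ran\Pi_A$ (and likewise for Bob); and (ii) the state-dependent matching relations $(\hat Z_A\otimes\1)\ket{\psi}=(\1\otimes\hat Z_B)\ket{\psi}$ and $(\hat X_A\otimes\1)\ket{\psi}=(\1\otimes\hat X_B)\ket{\psi}$, which come out clean because $Z$ and $X$ are real-symmetric.

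Two Hermitian involutions $z,x$ with $(zx)^4=1$ generate (a quotient of) the group algebra $\C[D_4]\cong\C^{\oplus 4}\oplus M_2(\C)$: four one-dimensional (abelian) representations, plus a single two-dimensional irrep in which $z,x$ map to two anticommuting Pauli matrices. I would decompose $\ran\Pi_A$ and $\ran\Pi_B$ into isotypic components for these observables; the matching relations then align Alice's and Bob's decompositions, and reusing the first-step relations should show that (a) on the two-dimensional isotypic block the strategy restricts to one that robustly self-tests the two-qubit maximally entangled state via anticommuting Pauli observables, i.e.\ it inherits CHSH rigidity; (b) the $\ket{22}$-type component sits on one fixed one-dimensional block, while the off-block parts of $\ket{\psi}$ vanish; and (c) the weight ratio between the qubit block and the one-dimensional block is pinned by the value $\tfrac{9}{16}$ to the equal-weight Schmidt profile of $\ket{\psi_3}$. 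I would then let $V_A,V_B$ be the standard CHSH ``SWAP'' isometry on the two-dimensional block and the identity on the one-dimensional block, and verify $(V_A\otimes V_B)\ket{\psi}=\ket{\psi_3}\otimes\ket{\mathrm{aux}}$, together with the fact that $P^x_a,Q^y_b$ are carried to the ideal measurements tensored with the identity (up to the usual local-isometry and conjugation ambiguities).

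For robustness I would replay the whole argument starting from an $\ve$-optimal strategy, so that $\norm{L_j\ket{\psi}}=O(\sqrt\ve)$, and check that every implication above is quantitatively stable; the only non-elementary input is a stability bound for near-representations of $\C[D_4]$, which reduces to the known $O(\sqrt\ve)$-robustness of CHSH on the two-dimensional block together with a perturbative eigenprojection estimate on the one-dimensional block, and yields $\norm{(V_A\otimes V_B)\ket{\psi}-\ket{\psi_3}\otimes\ket{\mathrm{aux}}}=O(\sqrt\ve)$. The main obstacle I foresee is the block-structure step: converting the bare relations into the picture above, in particular ruling out any weight of $\ket{\psi}$ on the ``parasitic'' abelian summands of $\C[D_4]$ (which would otherwise permit larger maximally entangled states or effectively classical components) and locking the inter-block weight to the unique value-$\tfrac{9}{16}$ profile. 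This is exactly where the concrete SOS certificate, rather than only the abstract relations it implies, must be used; a secondary subtlety is whether the relations needed in the first two steps can be taken as genuine operator identities on the support projections or only hold against $\ket{\psi}$, and pushing everything through with explicit error constants is the most laborious part.
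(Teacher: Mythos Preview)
Your overall architecture (SOS $\Rightarrow$ state relations $\Rightarrow$ algebraic structure $\Rightarrow$ unique irreducible representation $\Rightarrow$ self-test, with $O(\sqrt\ve)$ robustness) is correct and matches the paper's. The gap is in the middle step: you propose to extract observables $\hat Z_A,\hat X_A$ from the three-outcome PVMs and then run a $\C[D_4]$ isotypic decomposition with CHSH on the two-dimensional block. But the SOS certificate the paper actually finds does not produce dihedral relations among binary observables; it produces relations of the form
\[
P^x_a P^{1-x}_{a'} P^x_a\ket\psi=\nu_{aa'}\,P^x_a\ket\psi,
\qquad
\nu=\begin{bmatrix}\tfrac{9}{16}&\tfrac{1}{16}&\tfrac{3}{8}\\ \tfrac{1}{16}&\tfrac{9}{16}&\tfrac{3}{8}\\ \tfrac{3}{8}&\tfrac{3}{8}&\tfrac{1}{4}\end{bmatrix},
\]
together with explicit degree-two polynomials $\gamma^y_b(p^x_a)$ with $(\1\otimes Q^y_b)\ket\psi=(\gamma^y_b(P^x_a)\otimes\1)\ket\psi$. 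In other words, the certificate speaks directly about the six rank-one projections and their mutual overlaps, not about $\tilde Z,\tilde X$. Your plan never says how to build $\hat Z_A,\hat X_A$ as polynomials in the $P^x_a$ and why the SOS relations would force $(\hat Z_A\hat X_A)^4=\Pi_A$; for three-outcome PVMs there is no canonical binary observable, and I do not see how to make this step go.

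Relatedly, the ``CHSH on $\C^2$ plus a trivial $\C$'' picture is not how the ideal strategy actually sits inside $\C^3=\C^2\oplus\C$: the measurement vectors are \emph{not} block-diagonal (for instance $\ket\bot=\sqrt{\tfrac34}\ket0+\sqrt{\tfrac14}\ket2$ mixes the two summands), so separating the state into a CHSH part and a one-dimensional remainder and then pinning the weight ratio is not the right invariant. The paper instead shows that the quotient $C^*$-algebra $\cA_\nu=\msA_{PVM}^{\{0,1\},\{0,1,\bot\}}/\langle p^x_ap^{1-x}_{a'}p^x_a-\nu_{aa'}p^x_a\rangle$ is isomorphic to $M_3(\C)$, by proving that the ``$\nu$-biased'' transition unitary between the two PVMs is unique up to the natural equivalence (a generalized-Hadamard uniqueness argument, \Cref{prop:Haagerup} and \Cref{thm:irrep}); the $\gamma^y_b$'s make $(\Gamma,\cR)$ a robust determining pair in the sense of \cite{zhao24}, and then \Cref{uniqueirrep-selftest} gives the robust self-test directly. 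If you want to salvage your route, the right replacement for ``recover $\hat Z,\hat X$'' is precisely this: show the two PVMs satisfy the $\nu$-overlap relations and that this forces a unique $3$-dimensional representation.
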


While self-tests for all maximally entangled states are known (\cite{MPS, fu2022constant}), we believe that this is the smallest game that self-tests $\ket{\psi_3}$ with just two questions and three answers per player.

In fact, we obtain the result by showing that a certain interesting algebra has a unique irreducible representation. We deduce the algebra from the sums of square decomposition we have seen before. Since it holds $V^*YV=0$ in an optimal strategy, there are relations that the operators in the strategy have to fulfill. In our case we get the relations for Alice's operators
\begin{align*}
P^x_a P^{1-x}_{a'} P^x_a\ket{\psi}  &= \nu_{aa'} P^x_a\ket{\psi} \text{ for all }x\in \{0,1 \}, a,a'\in \{ 0,1,\bot\}, \text{ where }
\nu = \begin{bmatrix}
 \frac{9}{16} & \frac{1}{16} & \frac{3}{8} \\
 \frac{1}{16} & \frac{9}{16} & \frac{3}{8}  \\
 \frac{3}{8} & \frac{3}{8} & \frac{1}{4}
\end{bmatrix}.
\end{align*}
The same relations hold for Bob's operators. Note that the relations look similar to the relations obtained from mutually unbiased measurements \cite{MUM_Tavakoli_etal}, where for the latter it holds $\nu=\frac{1}{d}J$ (where $J$ is the matrix of all ones). In case of $P^x_a=\ket{\phi^x_a}\bra{\phi_a^x}$ being rank $1$ projections, the transition matrix from the basis $\{\ket{\phi^x_a}\}$ to the basis $\{\ket{\phi^{1-x}_a}\}$ is given by a unitary matrix $H$ with $|h_{aa'}|=\sqrt{\nu_{aa'}}$ (which is a Hadamard matrix in the case of $\nu=\frac{1}{d}J$). Similar to some Hadamard matrices being unique up to equivalence \cite{Haagerup}, we show that the same is true for unitaries with $|h_{aa'}|=\sqrt{\nu_{aa'}}$ for the specific choice of $\nu$ from above. More generally, we give simple criterion for when such a unitary is unique up to a certain kind of equivalence. Using this, and the operator-algebraic framework for self-testing developed in \cite{zhao24}, we obtain that Feige's game is a self-test. Another ingredient we need in our proof is that in an optimal strategy, the actions of Bob's projective measurements on the state are determined by actions of (self-adjoint polynomials in) Alice's projective measurements.

\noindent\textbf{\emph{Feige's game as "or"-game.}} Feige's game has another interesting property, it can be seen as the "or" of two games (\cite{mancinskaschmidt}). For two nonlocal games $G_1$ and $G_2$, the $(G_1\lor G_2)$-game is defined as follows. The referee sends Alice and Bob a pair of questions $(x_1, x_2)$ and $(y_1, y_2)$, respectively, where $x_i, y_i$ are questions in $G_i$, $i\in \{1,2\}$. Each of them chooses one of the questions they received and responds with an answer from the corresponding game. To win the game, two conditions have to be fulfilled:
\begin{itemize}
    \item[(1)] Alice and Bob have to give answers from the same game,
    \item[(2)] their answers have to win the corresponding game.
\end{itemize}

Since in Feige's game, one of the players has to guess the input of the other player, it can be realized as the "or"-game of the following two games: In the first game, Alice always guesses the input of Bob, while he outputs $\perp$ all the time and in the second game the roles are reversed. It is intriguing to note that the games for which a fixed player guesses the input of the other player do not have quantum advantage. We therefore have an example of an "or"-game whose quantum value is strictly higher than the maximum of the quantum values of the games it is constructed from.

\begin{theoremnocounter}
There are nonlocal games $G_1$ and $G_2$ such that $G_F=G_1\lor G_2$ and
\begin{align*}
\omega_q(G_F)=\frac{9}{16}>\frac{1}{2}=\mathrm{max}\{\omega_q(G_1),\omega_q(G_2)\}.
\end{align*}
\end{theoremnocounter}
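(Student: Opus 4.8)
The plan is to exhibit the two games $G_1$ and $G_2$ explicitly, verify the decomposition $G_F = G_1 \lor G_2$ by matching question distributions, answer sets and predicates, observe that each $G_i$ has quantum value exactly $\tfrac12$ for an essentially trivial reason, and combine this with the already-established value $\omega_q(G_F) = \tfrac{9}{16}$ to conclude.

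First I would define $G_1$ to be the game in which Alice receives a trivial question (from a singleton set), Bob receives a uniformly random bit $y \in \{0,1\}$, Alice answers a bit $a \in \{0,1\}$, Bob's only allowed answer is $\perp$, and the players win if and only if $a = y$; and $G_2$ to be its mirror image, with the roles of Alice and Bob exchanged (so Bob guesses Alice's uniform bit and Alice always answers $\perp$). In $G_1 \lor G_2$ each player receives a pair of questions, one coordinate trivial and one a bit, so after discarding the trivial coordinates the question set is $\{0,1\}$ for each player with the uniform distribution $\tfrac14$, matching $G_F$. A player's answer in the "or"-game is a pair (which of the two games is being played, answer in that game); since the answer sets $A_1 = \{0,1\}$ and $A_2 = \{\perp\}$ are disjoint, Alice's answer is equivalently a single symbol in $\{0,1,\perp\}$ that also records which game she played, and dually for Bob with $B_1 = \{\perp\}$ and $B_2 = \{0,1\}$. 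Under this identification, "both played $G_1$" corresponds to $(a,b) = (y,\perp)$, "both played $G_2$" corresponds to $(a,b) = (\perp,x)$, and every remaining outcome---namely $a,b \in \{0,1\}$, or $(a,b) = (\perp,\perp)$---is a loss. Comparing with the predicate $V$ of $G_F$ (a win iff $(a,b) = (\perp,x)$ or $(a,b) = (y,\perp)$) shows that the two games have the same questions, answers, distribution and predicate, i.e.\ $G_F = G_1 \lor G_2$.

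Next I would compute $\omega_q(G_1)$. The point is that the predicate of $G_1$ depends only on Alice's answer and Bob's question (Bob's answer is forced to be $\perp$), while Alice has no input; hence in any quantum strategy with measurement operators $A_a$, $B^y_b$ and shared state $\ket{\psi}$ the winning probability equals $\sum_y \tfrac12\, \bra{\psi} A_y \otimes I \ket{\psi} = \tfrac12 \sum_y p(y) = \tfrac12$, where $p$ is Alice's input-independent output distribution---independently of the shared state and of the measurements. So $\omega_q(G_1) = \tfrac12$ (in fact $\omega_c(G_1) = \omega_{ns}(G_1) = \tfrac12$ as well), and by the symmetry between Alice and Bob the same holds for $G_2$. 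Combining with the first theorem, $\omega_q(G_F) = \tfrac{9}{16}$, yields $\omega_q(G_F) = \tfrac{9}{16} > \tfrac12 = \max\{\omega_q(G_1),\omega_q(G_2)\}$, which is the assertion.

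The only genuinely delicate step is the verification $G_F = G_1 \lor G_2$: one must match the precise conventions of the "or"-game construction of \cite{mancinskaschmidt}---in particular how singleton question and answer sets and the product question distribution are treated---and check that the symbol identification reproduces Feige's predicate exactly, including the two degenerate losing outcomes $(a,b) = (\perp,\perp)$ and $a,b \in \{0,1\}$. The remaining ingredients are elementary bookkeeping together with the one-line observation about $\omega_q(G_i)$ above.
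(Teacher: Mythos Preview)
Your proposal is correct and follows essentially the same approach as the paper: the games $G_1$ and $G_2$ you define are exactly the ones the paper uses, your verification of $G_F=G_1\lor G_2$ matches the paper's (you are in fact slightly more explicit about the losing cases $(\perp,\perp)$ and $a,b\in\{0,1\}$), and your one-line computation $\omega_q(G_i)=\tfrac12$ via $\sum_y\tfrac12\bra{\psi}A_y\otimes I\ket{\psi}=\tfrac12$ is precisely the argument in the paper's lemma.
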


This theorem is in contrast to the case of perfect quantum strategies (\cite{mancinskaschmidt}). Here it is known that if the players give answers for one of the games in a perfect strategy of the "or"-game, then this game has to have a perfect quantum strategy as well.

\noindent\textbf{\emph{Parallel repetition.}} The original motivation of Feige was to give a counterexample in parallel repetition as its value does not decrease when played twice in parallel. In the \emph{$n$-fold parallel repetition} $G^{\times n}$ of a game $G$, the players receive tuples of questions $(x_1, \dots,x_n)$ and $(y_1,\dots, y_n)$ from the original game and they answer with tuples $(a_1, \dots,a_n)$ and $(b_1,\dots, b_n)$. Alice and Bob win in $G^{\times n}$ if the questions and answers $x_i, y_i, a_i, b_i$ win the game $G$ for each $i \in \{1,\dots, n\}$. We study the behaviour of the game with respect to parallel repetition for the classical, quantum and non-signalling value. The results are summarized in \Cref{tab:tabintro}.

\begin{table}[ht]
    \centering
    {\setlength{\extrarowheight}{3pt}
    \begin{tabular}{c||c|c|c|c}
        {} & $n=1$ & $n=2$ & $n=3$ & $n$ even \\[3pt]
        \hline
        $\omega_c$ & $\tfrac{1}{2}$ & $\tfrac{1}{2}$ & $\mathbf{\tfrac{5}{16}}$ & $\tfrac{1}{2^{n/2}}$
        \\[5pt]
        $\omega_q$ & $\mathbf{\tfrac{9}{16}}$ & $\mathbf{\tfrac{1}{2}}$ & ? & $\mathbf{\tfrac{1}{2^{n/2}}}$  \\[5pt]
        $\omega_{ns}$ &  $\mathbf{\tfrac{2}{3}}$ & $\mathbf{\tfrac{1}{2}}$ & $\mathbf{\tfrac{1}{{3}}}$ & $\mathbf{\tfrac{1}{2^{n/2}}}$
        \\
    \end{tabular}}\vspace{0.3cm}
    \caption{Summary of values of the $n$-fold parallel repetition of Feige's game, where we highlight the values that were previously not known.}
    \label{tab:tabintro}
\end{table}
In the table, the classical, quantum and non-signalling values agree in the case of an even number of repetitions, which follows from the following theorem.

\begin{theoremnocounter}
It holds $\omega_{ns}(G_F^{\times n})= \frac{1}{2^{n/2}}$ for $n$ even.
\end{theoremnocounter}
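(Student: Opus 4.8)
The plan is to establish the two matching bounds $\omega_{ns}(G_F^{\times n})\ge 2^{-n/2}$ and $\omega_{ns}(G_F^{\times n})\le 2^{-n/2}$ separately; parity of $n$ enters only in the first, while the second holds for every $n$ (and is simply not tight for odd $n$). For the lower bound I would exhibit an explicit classical strategy when $n$ is even: pair the coordinates as $(2j-1,2j)$ for $j=1,\dots,n/2$, let Alice answer $(\perp,x_{2j-1})$ on the $j$-th pair and Bob answer $(y_{2j},\perp)$. A short check shows the $j$-th pair is won precisely when $x_{2j-1}=y_{2j}$; these $n/2$ events are independent with probability $\tfrac12$ each, so the strategy (being classical, hence non-signalling) wins with probability $2^{-n/2}$, giving $\omega_{ns}(G_F^{\times n})\ge\omega_c(G_F^{\times n})\ge 2^{-n/2}$.

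The heart of the argument is the upper bound, and the key move is a symmetrization. The non-signalling value is a linear program over correlations $p(\vec a,\vec b\mid\vec x,\vec y)$ with $\vec a,\vec b\in\{0,1,\perp\}^n$ and $\vec x,\vec y\in\bit^n$. For every $s,t\in\bit^n$ the relabeling $(\vec a,\vec b,\vec x,\vec y)\mapsto(\vec a\oplus t,\vec b\oplus s,\vec x\oplus s,\vec y\oplus t)$ (with $\perp$ fixed by the flips) preserves the predicate $V$ of $G_F^{\times n}$ as well as the non-signalling polytope, so averaging an optimal correlation over the group $\bit^n\times\bit^n$ I may assume $p$ is invariant. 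For fixed $\vec x,\vec y$ the winning answer pairs are exactly indexed by the set $S\subseteq[n]$ of coordinates on which Alice answers $\perp$: on $S$ one has $b_i=x_i$, and for $i\notin S$ one has $a_i=y_i$ and $b_i=\perp$; call this pair $(\vec a_S(\vec y),\vec b_S(\vec x))$. Invariance forces $p\bigl(\vec a_S(\vec y),\vec b_S(\vec x)\mid\vec x,\vec y\bigr)=:w(S)$ to be independent of $\vec x,\vec y$, and since the question distribution is uniform, $\omega_{ns}(G_F^{\times n})=\sum_{S\subseteq[n]}w(S)$.

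It then remains to bound $\sum_S w(S)$. Non-signalling makes Alice's marginal $p_A(\vec a\mid\vec x)=\sum_{\vec b}p(\vec a,\vec b\mid\vec x,\vec y)$ independent of $\vec y$, and by the symmetry it depends only on the $\perp$-pattern of $\vec a$, say $p_A(\vec a\mid\vec x)=\rho_S$ when $\vec a$ has $\perp$'s exactly on $S$; hence $w(S)\le\rho_S$ and, normalizing $p_A$, $\sum_S 2^{\,n-|S|}\rho_S=1$. Symmetrically $w(S)\le\mu_S$ with $\sum_S 2^{|S|}\mu_S=1$. Using $w(S)\le\sqrt{\rho_S\mu_S}$ and Cauchy--Schwarz twice,
\begin{align*}
\omega_{ns}(G_F^{\times n})=\sum_S w(S)&\le\sum_S\sqrt{\rho_S\mu_S}= 2^{-n/2}\sum_S\sqrt{2^{\,n-|S|}\rho_S}\,\sqrt{2^{|S|}\mu_S}\\
&\le 2^{-n/2}\Bigl(\sum_S 2^{\,n-|S|}\rho_S\Bigr)^{1/2}\Bigl(\sum_S 2^{|S|}\mu_S\Bigr)^{1/2}=2^{-n/2},
\end{align*}
which together with the lower bound yields $\omega_{ns}(G_F^{\times n})=2^{-n/2}$ for $n$ even.

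I expect the main obstacle to be conceptual rather than computational: identifying the right symmetry group — in particular noticing that Alice's answer must be XOR'd by the shift applied to \emph{Bob's} question (since her bit plays the role of a guess of that question), so that after averaging the linear program collapses to the two-parameter problem in $(\rho_S)_S$ and $(\mu_S)_S$ above. A secondary point to check is that this relaxation loses nothing when $n$ is even: symmetrizing the explicit strategy above gives $\rho_S=\mu_S=2^{-n/2}$ supported on the single set $S_0$ of odd coordinates, which saturates both Cauchy--Schwarz steps, so the bound is attained; for odd $n$ no genuine correlation saturates them, consistent with the theorem being stated only for even $n$.
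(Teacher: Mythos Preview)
Your argument is correct and genuinely different from the paper's. The paper proves the upper bound by a direct combinatorial attack on the expression $\sum_{I\subseteq[n]}\sum_{\vec x,\vec y}p(a_I(\vec y),b_I(\vec x)\mid\vec x,\vec y)$: it splits according to whether $|I|>m$, $|I|<m$, or $|I|=m$, applies the non-signalling constraints to reduce the sums over the ``free'' question coordinates, and in the balanced case $|I|=m$ uses a pairing $(i_k,j_k)$ between $I$ and $I^c$ together with a split over the parities $x_{i_k}\oplus y_{j_k}$ to get the factor $2^{m-1}$. Your route is more conceptual: you first exploit the $\{0,1\}^n\times\{0,1\}^n$ symmetry (the crucial observation that Alice's answer must be shifted by Bob's question-shift and vice versa) to reduce the LP to the single vector $(w(S))_S$, then bound it via $w(S)\le\sqrt{\rho_S\mu_S}$ and one application of Cauchy--Schwarz against the marginal normalizations $\sum_S 2^{n-|S|}\rho_S=\sum_S 2^{|S|}\mu_S=1$. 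This is shorter, works uniformly for all $n$ (recovering the paper's bound without the case split or the pairing lemma), and makes transparent why the bound is tight exactly when some $S$ has $|S|=n-|S|$. The paper's approach, by contrast, keeps the non-signalling constraints explicit and shows more directly which of them are saturated, at the cost of a more intricate bookkeeping.
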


Note that this is not the case for odd numbers. From the previous discussion, we know that the classical value does not agree with the quantum value for the original game, and we show that the non-signalling value is even higher. Furthermore, we get that at least the classical and non-signalling value are different for $n=3$, despite this not being the case for $n=2$. We do not know if there is quantum advantage for $n=3$.

\begin{quesnocounter}
    Does the quantum value exceed the classical value for the $3$-fold parallel repetition of Feige's game?
\end{quesnocounter}

We are not aware of a game with quantum advantage, for which the classical and quantum value agree for the $2$-fold parallel repetition, but then they separate again for $n=3$. As we can see in \Cref{tab:tabintro} this happens for the classical and the non-signalling value of Feige's game. Note that the quantum strategy consisting of the optimal quantum strategy for one game and the optimal classical strategy for the $2$-fold parallel repetition performs worse than the optimal classical strategy for $n=3$, since $\frac{9}{16}\cdot\frac{1}{2}=\frac{9}{32}<\frac{5}{16}$. Therefore, there is no natural candidate of a quantum strategy that yields quantum advantage for the $3$-fold parallel repetition. The $\frac{5}{16}$ optimal classical strategy is also interesting: It performs worse on the first two rounds of the game in comparison to the optimal strategy for the $2$-fold parallel repetition, but is constructed such that if the players win the first two rounds, they win the third round automatically.

\section{Preliminaries}
Let $A,B,X,Y$ be finite sets, $\pi:X\times Y\to [0,1]$ a probability distribution and $V:A\times B\times X \times Y\to \{0,1\}$ a function.

\begin{defn}
 A \emph{(two-player) nonlocal game} is a tuple~$G=(X,Y,A,B,\pi,V)$ describing a scenario consisting of non-communicating players, Alice and Bob, interacting with a referee.
In the game, the referee samples a pair of questions $(x,y)\in X\times Y$ according to~$\pi$, sending question~$x$ to Alice and~$y$ to Bob.
Then, Alice (respectively\ Bob) returns answer~$a$ (respectively\ $b$) to the referee. The players win if $V(a,b|x,y)=1$, otherwise they lose.
\end{defn}

The players are not allowed to communicate during the game, but they can agree on a strategy beforehand. For each strategy, we obtain probabilities $p(a,b|x,y)$ of Alice and Bob answering $a,b$ given $x,y$. The collection of numbers $\{p(a,b|x,y)\}$ is called a \emph{correlation}. The winning probability for the game, using this strategy, is given by
\begin{align*}
    \omega(G,S)=\omega(G,p)=\sum_{a,b,x,y} \pi(x,y)V(a,b|x,y)p(a,b|x,y)
\end{align*}

\subsection{Strategies for nonlocal games}
There are different types of strategies, which we now describe. We start with classical strategies.

\begin{defn}
A \emph{classical strategy} for a nonlocal game~$G$ consists of
\begin{enumerate}
    \item a probability distribution~$\gamma\colon\Omega\to [0,1]$ on a (without loss of generality) finite probability space~$\Omega$, along with
    \item probability distributions $\{p_\omega(a|x) : x\in X,\omega\in \Omega\}$ with outcomes in~$A$ and $\{q_\omega(b|y):y\in Y,\omega\in \Omega\}$ with outcomes in~$B$.
\end{enumerate}
In this case, we have
\begin{align*}
    p(a,b|x,y)=\sum_{\omega\in \Omega}\gamma(\omega) \, p_\omega(a|x) \, q_\omega(b|y).
\end{align*}
\end{defn}
A special case of classical strategies are \emph{deterministic strategies}, in which the answers of the players are determined by the questions. In this case, the strategies are described by functions $f:X\to A$ and $g:Y\to B$, and Alice and Bob answer with $f(x)$ and $g(y)$ given $x$ and $y$, respectively.

In a quantum strategy, the players are allowed to share a quantum state and  perform local measurements. A \emph{(quantum) state} $\ket{\psi}$ is a unit vector in a Hilbert space $\mathcal{H}$. The measurements are described by \emph{positive operator valued measures} (POVMs), which consist of a family of positive operators $\{M_i \in B(\mathcal{H})\,|\, i \in [m]\}$ such that $\sum_{i=1}^m M_i= 1_{B(\mathcal{H})}$. If all positive operators are projections ($M_i=M_i^*=M_i^2$), then we call $\{M_i \in B(\mathcal{H})\,|\, i \in [m]\}$ with $\sum_{i=1}^m M_i= 1_{B(\mathcal{H})}$ a \emph{projective measurement} (PVM).

\begin{defn}
A \emph{quantum strategy} for a nonlocal game $G$ consists of
\begin{enumerate}
    \item a (without loss of generality) pure quantum state $\ket{\psi}\in \mathcal{H}_A\otimes \mathcal{H}_B$, along~with
    \item POVMs $\{\{P_{a}^x:a\in A\}: x\in X\}$ acting on $\mathcal{H}_A$ and POVMs $\{\{Q_{b}^y:b\in B\}: y\in Y\}$ acting on $\mathcal{H}_B$.
\end{enumerate}
Then
\begin{equation*}
    p(a,b|x,y)=\bra\psi P_{a}^x \otimes Q_{b}^y \ket\psi.
\end{equation*}
\end{defn}

Now, we will define non-signalling strategies. These strategies do not have to be physical, but the resulting probability distribution ensures that Alice does not use Bob's input for her answer and vice versa.

\begin{defn}\label{def:nonsignalling}
A \emph{non-signalling strategy} consists of a correlation $\{p(a,b|x,y)\}$ such that
\begin{align*}
    \sum_{a\in A} p(a,b|x,y)&=\sum_{a\in A} p(a,b|\tilde{x},y):=p(b|y), \\
    \sum_{b\in B} p(a,b|x,y)&=\sum_{b\in B} p(a,b|x,\tilde{y}):=p(a|b)
\end{align*}
for all $x,\tilde{x}\in X$ and $y, \tilde{y} \in Y$.
\end{defn}

It is natural to ask how well players can perform in a nonlocal game for a specific class of strategies. Therefore, one defines the value of a game as the highest winning probability for a given class of strategies.

\begin{defn}
For a nonlocal game $G$ and we define
\begin{align*}
    \omega_c(G)&=\sup_{S \text{ classical strategy}}\omega(G,S), \\\omega_q(G)&=\sup_{S \text{ quantum  strategy}}\omega(G,S), \\\omega_{ns}(G)&=\sup_{S \text{ non-signalling strategy}}\omega(G,S).
\end{align*}
We call $\omega_c(G)$ the \emph{classical value}, $\omega_q(G)$ the \emph{quantum value} and $\omega_{ns}(G)$ the \emph{non-signalling value} of the game.
\end{defn}
From the definition, it is immediate that
\begin{align*}
    \omega_c(G)\leq \omega_q(G) \leq \omega_{ns}(G).
\end{align*}
Note that there exist games for which all these values are different, for example the CHSH game \cite{chsh}.

The proof of our main theorem follows an operator-algebraic framework for self-testing developed in\cite{zhao24}. For finite sets $X$ and $A$, we use $\msA_{PVM}^{X,A}$ to denote the universal $C^*$-algebra generated by $\{p^x_a:a\in A,x\in X\}$, subject to the relations 
\begin{itemize}
    \item $(p^x_a)^2=(p^x_a)^*=p^x_a \quad \forall a\in A, x\in X$
    \item $\sum_{a\in A}p^x_a=1 \quad \forall x\in X$
\end{itemize}
Given a nonlocal game $G=(X,Y,A,B,\mu,V)$, for any strategy 
\begin{equation} \label{int2}
    S=(\ket{\psi}\in\cH_A\otimes\cH_B,\{P^x_a\},\{Q^y_b\}),
\end{equation}
by the universal property, there are unique $*$-representations 
$$
\pi_A:\msA_{PVM}^{X,A}\rightarrow\cB(\cH_A) \text{ and } \pi_B:\msA_{PVM}^{Y,B}\rightarrow\cB(\cH_B)
$$
such that 
\begin{equation}  \label{int3}
    \pi_A(p^x_a)=P^x_a \text{ and } \pi_B(q^y_b)=Q^y_b \quad \forall a\in A , b\in B, x \in X, y\in Y.
\end{equation}
In the following we reserve the symbols $p^x_a$ and $q^y_b$ for the generators for $\msA_{PVM}^{X,A}$ and $\msA_{PVM}^{Y,B}$ respectively, whereas $P^x_a$ and $Q^y_b$ will denote representations given by operators on Hilbert spaces as in (\ref{int3}). 
We call $\pi_A,\pi_B$ the representations associated with $S$ as in ($\ref{int2}$), and sometimes write
 \begin{equation} \label{int4}
     S=(\ket{\psi}\in \cH_A\otimes\cH_B,\pi_A,\pi_B )
 \end{equation}
for a strategy.

Given a nonlocal game $G=(X,Y,A,B,\mu,V)$, we define the \textit{game polynomial} as 
$$
\Phi_G:=\sum_{a,b,x,y}\mu(x,y)V(a,b|x,y)p^x_a\otimes q^y_b
$$ 
which is an element of $\msA_{PVM}^{X,A}\otimes \msA_{PVM}^{Y,B}$ that encodes all the information of $G$. The value at the strategy (\ref{int4}) is then given by 
$$
\omega(G,S)=\bra{\psi} \pi_A \otimes \pi_B (\Phi_G) \ket{\psi}.
$$

\section{Feige's game}

In this section, we will discuss the classical, quantum and non-signalling value of a game that appeared in \cite{Feige}. The game was considered as counterexample in parallel repetition, as its classical value does not decrease when played twice in parallel. As main result of this section, we will show that its quantum value exceeds the classical value.

We recall again the definition of the game from the introduction. Alice and Bob both receive a bit as question (with uniform probability), and they are each allowed to choose an answer from the set $\{0,1,\perp\}$. To win the game, exactly one of the players has to answer with $\perp$, while the second player has to guess the input of the first player, i.e. the answer is a bit that is equal to the question of first player. More formally, the game $G_F$ \cite{Feige} is defined via
$$
    X= Y=\{0,1 \}, \quad A=B=\{0,1,\perp \}, \quad \pi(x,y)=\tfrac{1}{4} \quad  \forall (x,y)
$$
and has the predicate
$$
    V(a,b,x,y)=\begin{cases}
        1 \quad (a,b)=(\perp,x) \text{ or } (a,b)=(y,\perp),
        \\
        0 \quad \text{otherwise.}
    \end{cases}
$$
We will now argue that the classical value of this game is $\frac{1}{2}$. It is well-known that the classical value of a game is attained by a deterministic strategy. It is easy to see that the strategy in which Alice always answers $\perp$ and Bob always answers $0$ has winning probability $\frac{1}{2}$, since Alice receives the question $0$ halve of the time.

Note that in a deterministic strategy, winning or losing just depends on the input of the players. We have the following observations.

\begin{itemize}
    \item[(i)] Once a player answers a bit given a question (e.g. Alice answers $0$ when receiving $0$), the game will always be lost if the other player receives a question different from this bit (the referee sends $1$ to Bob), which happens halve of the time.
    \item[(ii)] In a deterministic strategy, the case that exactly one player answers $\perp$ and the other player answers with a bit is only possible for zero, two or all four different input pairs $\{(0,0), (0,1), (1,0), (1,1)\}$. If this happens for zero or two input pairs, the winning probability is upper bounded by $\frac{1}{2}$. If it is true for all input pairs, we get by (i) that the winning probability of the strategy is $\frac{1}{2}$.
\end{itemize}
Summarizing, we obtain $\omega_c(G_F)=\frac{1}{2}$. In the following, we will show that there exists a quantum strategy that allows the players to win the game with a higher probability. We will see that the winning probability of the quantum strategy is $\frac{9}{16}$. We show that this is an optimal quantum strategy for Feige's game in \Cref{appedix: SOS decomp.} and obtain a self-testing statement in \Cref{sect:SelftestFeige}.

\subsection{Construction of a quantum strategy exceeding the classical value}\label{sec:quantum_strategy}

The game polynomial for $G_F$ is given by

$$
\beta(p^x_a,q^y_b) = \tfrac{1}{4} \sum_{x,y\in \bit} (p^x_{\perp} \otimes q^y_x + p^x_y \otimes q^y_{\perp}).
$$

Consider the operators
\begin{align*}
    \tilde Z = Z \op 1 \quad\text{and}\quad \tilde X = X \op 1 \quad\text{on}\quad \C^3 = \C^2 \op \C
\end{align*}
(i.e., $\C^2$ is the defining and $\C$ the trivial representation of the Pauli group).
Note that $\tilde Z$ and $\tilde X$ are two non-commuting observables such that $\tilde Z$ and $\tilde X \tilde Z \tilde X$ commute (and every such pair of projections essentially looks like this, up to multiplicities).
Choose $\ket\bot \in \Fix(\tilde Z) {\color{gray}\ = \ket 1^\perp}$, for now arbitrarily, and extend it by $\ket{\tilde0}, \ket{\tilde 1}$ to a basis of~$\C^3$.
Alice's and Bob's strategy is then given by, 
\begin{equation*}
    \tilde{S}_F=(\ket{\psi}\in \C^3\otimes\C^3,\{P^x_a\},\{Q^y_b\})
\end{equation*}
where $\ket\psi=\frac{1}{\sqrt{3}}\left(\ket{00}+\ket{11}+\ket{22}  \right)$ and
\begin{align*}
    P^x_a &= \tilde X {\tilde Z}^x \tilde X \ket{\tilde a}\bra{\tilde a} \tilde X {\tilde Z}^x \tilde X &&(a \in \bit) \\
    P^x_\bot &= \tilde X {\tilde Z}^x \tilde X \ket\bot\bra\bot \tilde X {\tilde Z}^x \tilde X \\
    Q^y_b &= \tilde X P^y_b \tilde X &&(b \in \{0,1,\bot\})
\end{align*}
The winning probability is given by
\begin{align*}
    \omega(G_F,\tilde{S}_F)
&= \frac14 \sum_{x,y\in\bit} \frac13 \Tr\bracks{ P^x_y Q^y_\bot + P^x_\bot Q^y_x }
= \frac16 \sum_{x,y\in\bit} \Tr\bracks{ P^x_y Q^y_\bot } \\
&= \frac16 \sum_{x,y\in\bit} \Tr\bracks{ \tilde X {\tilde Z}^x \tilde X \ket{\tilde y}\bra{\tilde y} \tilde X {\tilde Z}^x \tilde X {\tilde Z}^y \tilde X \ket\bot\bra\bot \tilde X {\tilde Z}^y } \\
&= \frac13 \sum_{y\in\bit} \bra\bot \tilde X {\tilde Z}^y \ket{\tilde y}\bra{\tilde y} {\tilde Z}^y \tilde X \ket\bot
\end{align*}
where the first line holds because $\tilde X^2 = I$, and in the last step we used that $\tilde X \tilde Z \tilde X$ and $\tilde Z$ commute, as well as that $\tilde Z \ket\bot = \ket\bot$.
Now, we can write
\begin{align*}
    \ket\bot = \sqrt p \underbrace{\ket{\text{nontriv}}}_{\color{gray} = \ket 0} + \sqrt{1-p} \underbrace{\ket{\text{triv}}}_{\color{gray} = \ket 2},
\end{align*}
for some $p \in [0,1]$, where we have $\tilde X \ket{\text{triv}} = \tilde Z \ket{\text{triv}} = \ket{\text{triv}}$, while $\tilde Z \ket{\text{nontriv}} = - \tilde X \tilde Z \tilde X \ket{\text{nontriv}} = \ket{\text{nontriv}}$.
Then the above winning probability can be computed as follows
\begin{align*}
    \omega(G_F,\tilde{S}_F)
&= \frac p 3 \underbrace{\sum_{y\in\bit} \bra{\text{nontriv}} \tilde X {\tilde Z}^y \ket{\tilde y}\bra{\tilde y} {\tilde Z}^y \tilde X \ket{\text{nontriv}}}_{= 1} \\
&\quad + \frac{2 \sqrt{p(1-p)}}3 \sum_{y\in\bit} \bra{\text{triv}} \tilde X {\tilde Z}^y \ket{\tilde y}\bra{\tilde y} {\tilde Z}^y \tilde X \ket{\text{nontriv}} \\
&\quad + \frac{1-p}3 \underbrace{\sum_{y\in\bit} \bra{\text{triv}} \tilde X {\tilde Z}^y \ket{\tilde y}\bra{\tilde y} {\tilde Z}^y \tilde X \ket{\text{triv}}}_{= 1 - (1 - p) = p} \\
&= \frac{p(2-p)}3 + \frac{2 \sqrt{p(1-p)}}3 \bra{\text{triv}} \underbrace{\parens*{ \sum_{y\in\bit} (-1)^y \ket{\tilde y}\bra{\tilde y} } \tilde X \ket{\text{nontriv}}}_{= \ (\star)},
\end{align*}
where we used that 
$$
\tilde X \ket{\text{nontriv}} \in \Fix(-\tilde Z) = \Fix(\tilde Z)^\perp \subseteq \ket\bot^\perp.
$$
Note that 
$$
R = \sum_{y\in\bit} (-1)^y \ket{\tilde y}\bra{\tilde y}
$$
is an arbitrary reflection supported on $\ket\bot^\perp$.
Thus, $(\star)$ is an arbitrary state in $\ket\bot^\perp$.
It follows that the winning probability is maximized if we choose $R$ so that it maps $\tilde X \ket{\text{nontriv}}$ onto the orthogonal projection of $\ket{\text{triv}}$ onto $\ket\bot^\perp$.
Specifically, we can choose $R$ so that
\begin{equation}\nonumber
    \begin{split}
        \bra{\text{triv}}R\tilde{X} \ket{\text{nontriv}} = & \norm{(\1-\dyad{\bot}{\bot}) \ket{\text{triv}}}
        \\ = & \norm{p\ket{\text{triv}} + \sqrt{p(1-p)} \ket{\text{nontriv}}}
        \\ = & \sqrt{p}.
    \end{split}
\end{equation}
In this case, the winning probability is
\begin{align*}
  \omega(G_F,\tilde{S}_F)
= \frac p 3 \parens*{ 2-p + 2 \sqrt{1-p} }.
\end{align*}
This is maximized for $p=\frac34$, with corresponding success probability
\begin{align*}
  \omega(G_F,\tilde{S}_F) = \frac 9 {16}.
\end{align*}
Concretely, this means that $\ket{\text{nontriv}} = \ket 0$, $\ket{\text{triv}} = \ket 2$,
\begin{align*}
    \ket{\bot} = \sqrt{\frac34} \ket 0 + \sqrt{\frac14} \ket 2,
\end{align*}
we can pick the reflection $R$ to map $\tilde X \ket{\text{nontriv}} = \ket 1$ to the orthogonal projection of $\ket{\text{triv}} = \ket2$ onto $\ket\bot^\perp$, i.e., to $- \sqrt{\frac14} \ket 0 + \sqrt{\frac34} \ket 2$, while fixing $\ket\bot$.
Note that we can write
\begin{align*}
    R = \ket{\tilde 0}\bra{\tilde 0} - \ket{\tilde 1}\bra{\tilde 1},
\end{align*}
where
\begin{align*}
    \ket{\tilde 0} &= \frac1{\sqrt2} \ket1 - \frac1{2\sqrt2} \ket 0 + \frac{\sqrt3}{2\sqrt2} \ket 2 \\
    \ket{\tilde 1} &= \frac1{\sqrt2} \ket1 + \frac1{2\sqrt2} \ket 0 - \frac{\sqrt3}{2\sqrt2} \ket 2 \\
\end{align*}
and thus we have determined the strategy.

\begin{figure}[h]
    \centering
    \includegraphics[width=0.5\linewidth]{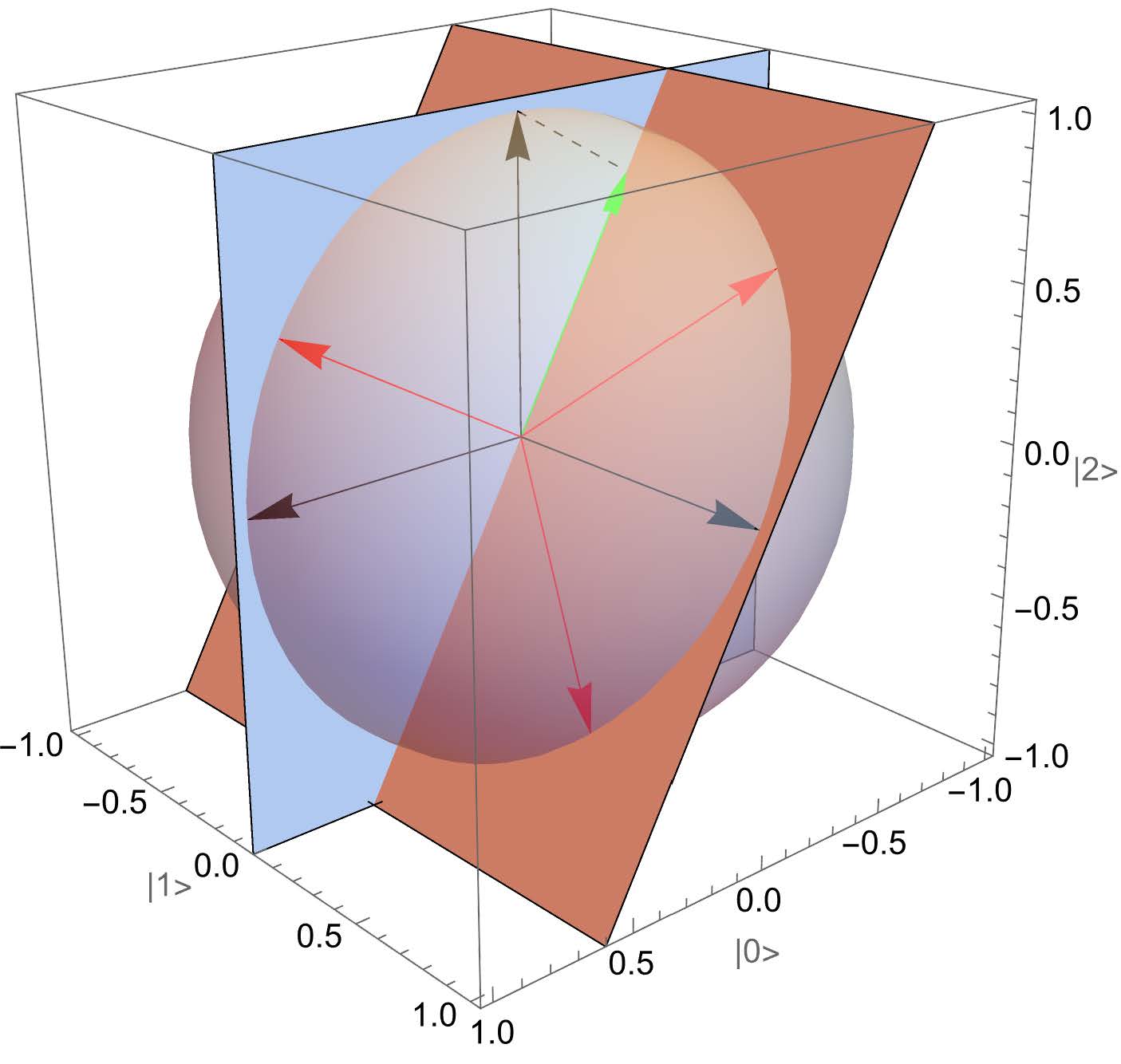}
    \caption{The red basis in this figure is $\{\ket{\tilde{0}}, \ket{\tilde{1}}, \ket \bot   \}$, the orange plane is $\ket \bot^\perp$ and the blue plane is $\ket 1^\perp$. The green vector is $(\1-\dyad{\bot}{\bot}) \ket{2}$.}
    \label{fig:enter-label}
\end{figure}

\subsection{Method for obtaining an upper bounds and algebraic relations} \label{sec:Upper bound}
In order to prove that the above strategy is optimal, we obtain a sum-of-squares (SOS) decomposition using a method similar to that of the NPA hierarchy (\cite{navascues2008convergent}). Let us start by defining the following sets of monomials 
$$
A^sB^t=\{ p^{x_1}_{a_1} ... p^{x_s}_{a_s} q^{y_1}_{b_1} ... q^{y_t}_{b_t} \mid x_1,...,x_s\in X, a_1,...,a_s\in A, y_1,...,y_t\in Y, b_1,...,b_t\in B  \}
$$
%
%
where $s,t\in \N_0$. 
We denote a level of the hierarchy by
$$
k+A^{s_1}B^{t_1}+...+A^{s_m}B^{t_m} : = \bigcup_{s+t\leq k} A^sB^t \cup \bigcup_{i=1}^m A^{s_i}B^{t_i}
$$
where $k\in \N$ and $m\in \N_0$ such that $s_i+t_i>k$ for all $i\in \{1,...,m\}$. 

The method can now be described as follows.

\begin{itemize}
    \item[(1)] Specify a level as described above, denoted $L$, and a vector of polynomials,
    $$
    F(p^x_a, q^y_b) = \left( f_1(p^x_a, q^y_b) , ... f_n(p^x_a, q^y_b) \right)^T
    $$
    where $n\in \N$ and for each $i\in \{1, ... , n \}$, $f_i(p^x_a, q^y_b)$ contains monomials from the set $L$. 
    \item[(2)] Specify an $n$-by-$n$ positive definite real matrix, $Y$, such that
    \begin{equation}\label{int1}
    F^* Y F = \mu_L - \beta.
    \end{equation}
    where $\mu_L \geq 0$.
\end{itemize}
Notice that $n$ may be strictly smaller than the cardinality of $L$. Notice also that we require $Y$ to be positive definite and not just positive semi-definite. 
Now, let 
$$
S=(\ket{\psi}\in \cH_A\otimes \cH_B,\pi_A, \pi_B)
$$ 
be a quantum strategy. Since $Y$ is positive definite it has a spectral decomposition
$$
Y = \sum_{i=1}^n \lambda_i \dyad{\varphi_i}{\varphi_i}
$$
where $\lambda_i > 0$ and the $\ket{\varphi_i}$'s constitute an orthonormal basis of $\C^n$. Let $\tilde{F}_i := \sqrt{\lambda_i} \bra{\varphi_i} F$ which are polynomials. Then (\ref{int1}) can be written as
$$
\sum_{i=1}^n \tilde{F}_i^*\tilde{F}_i = \mu_L - \beta.
$$
so we get
$$
\sum_{i=1}^n \pi_A \otimes \pi_B \left( \tilde{F}_i^*\tilde{F}_i \right) = \mu_L I- \pi_A \otimes \pi_B(\beta) \succeq 0.
$$
Thus, we obtain the upper bound
$$
\omega(G_F,S) = \bra{\psi} \pi_A \otimes \pi_B (\beta) \ket{\psi} \leq \mu_L.
$$
We can also deduce that certain algebraic relations must hold if $S$ in fact reaches the upper bound $\mu_L$. Suppose $\omega(G_F,S)=\mu_L$. Then
$$
\bra{\psi}\pi_A \otimes \pi_B \left( \tilde{F}_i^*\tilde{F}_i \right) \ket{\psi}  = 0 \quad \forall i\in \{1,...,n \}
$$
which implies
$$
\bra{\psi}\pi_A \otimes \pi_B \left( F^* \dyad{\varphi_i}{\varphi_i} F \right) \ket{\psi}=0 \quad \forall i\in \{1,...,n \}
$$
since $\lambda_i>0$. Now, summing over $i$ we obtain
$$
\bra{\psi}\pi_A \otimes \pi_B \left( F^* F \right) \ket{\psi} = \sum_{i=1}^n  \bra{\psi} \pi_A \otimes \pi_B \left( f_i^* f_i \right) \ket{\psi} = \sum_{i=1}^n \norm{ \pi_A \otimes \pi_B \left( f_i \right) \ket{\psi} }^2= 0.
$$
Thus, we get
$$
\pi_A \otimes \pi_B \left( f_i \right) \ket{\psi}=0 \quad \forall i\in \{1,...,n \}.
$$

Our approach is as follows. First, we use standard NPA techniques to verify the upper bound. In the notation here, this means letting the coordinate functions $f_i$ in $F$ be the elements of $L$ for some specific level of hierarchy and solving the following semi-definite program
\begin{equation}\label{standardNPA}
    \begin{split}
         \min & \text{ }  \mu_L 
        \\
        \text{such that } & F^* \tilde{Y} F = \mu_L - \beta
        \\ & \tilde{Y}  \succeq 0
    \end{split}
\end{equation}
From the output of this we are able to guess the relations that hold at optimality. Concretely, this is done by performing Gaussian elimination on the matrix consisting of the eigenvectors of $\tilde{Y}$ associated with nonzero eigenvalues. We then use these relations as in our polynomial basis $F$. Then we solve the following feasibility problem
\begin{equation}\label{eq:feas_prob}
    \begin{split}
         \min & \text{ }  0
        \\
        \text{such that } & F^* \tilde{Y} F = \mu_L - \beta
        \\ & \tilde{Y} - \varepsilon I  \succeq 0
    \end{split}
\end{equation}
for a small $\varepsilon>0$ where $\mu_L$ is fixed as the output of (\ref{standardNPA}). The output of (\ref{eq:feas_prob}) we then use as our matrix $Y$ in (2) above. 

In the case of Feige's game, in the standard NPA approach, using level $1+AB$ suffices to determine that $\frac{9}{16}$ is an upper bound on $\omega(G_F)$. However, in this work, in order to achieve the desired algebraic relations, we take $L$ to be $1+AB+A^2+A^3$ and $F$ as specified in \Cref{appedix: SOS decomp.}. Moreover we are also able to round the numerical output to an exact solution, $Y$, with rational entries satisfying
$$
F^* Y F = \frac{9}{16} - \beta.
$$
We have stored this exact solution in a Mathematica notebook in a GitHub repository \cite{repository}. Running the notebook verifies the claims made in this paper.

\subsection{The non-signalling value of Feige's game}

We have seen that the classical value of Feige's game is $\frac{1}{2}$, and we can exceed this value with a quantum strategy that wins with probability $\frac{9}{16}$. In this section, we will study the non-signalling value of $G_F$ and prove $\omega_{ns}(G_F)=\frac{2}{3}$. Recall the definition of non-signalling strategies from \Cref{def:nonsignalling}.

We start with the description of a non-signalling strategy that achieves the value $\frac{2}{3}$. Consider the non-signalling strategy $p$, where
\begin{align*}
    p(a,b|x,y)=\begin{cases}
        \frac{1}{3} \quad\text{ if }(a,b)=(\perp,x) \text{ or } (a,b)=(y,\perp),\\
        \frac{1}{3} \quad\text{ if } a\neq y, b\neq x \text{ and } a,b \in \{0,1\},\\
        0 \quad \text{ otherwise.}
    \end{cases}
\end{align*}
We see that $p$ is a probability distribution and it holds
\begin{align*}
    \sum_a p(a,\perp|x,y)&= p(y,\perp|x,y)=\frac{1}{3},\\
    \sum_a p(a,x|x,y)&= p(\perp,x|x,y)=\frac{1}{3},\\
    \sum_a p(a,\overline{x}|x,y)&=p(\overline{y},\overline{x}|x,y)=\frac{1}{3},
\end{align*}
where $\overline{x}$ is the complement of the bit $x$. This shows $\sum_a p(a,b|x,y)=\sum_a p(a,b|x',y)$ and similarly $\sum_b p(a,b|x,y)=\sum_b p(a,b|x,y')$ holds. We conclude that $p$ is a non-signalling strategy. We obtain
\begin{align*}
    \omega(G_F,p)&=\frac{1}{4}\sum_{x,y\in\{0,1\}}p(\perp,x|x,y)+p(y,\perp|x,y)=\frac{2}{3},
\end{align*}
since $p(\perp,x|x,y)=p(y,\perp|x,y)=\frac{1}{3}$.

For the upper bound of the winning probability of non-signalling strategies, consider any non-signalling strategy $p$ for $G_F$. We first note that
\begin{align}\label{Eq:nonsig1}
   p(\perp,x|x,y)+p(\perp,\overline{x}|\overline{x},y)+p(y,\perp|x,y) &\leq \sum_a p(a,x|x,y)+p(a,\overline{x}|\overline{x},y)+p(a,\perp|x,y)\nonumber\\
   &=\sum_a p(a,x|x,y)+p(a,\overline{x}|x,y)+p(a,\perp|x,y)\nonumber\\
   &= 1,
\end{align}
by the non-signalling property of $p$ and since it is a probability distribution. Similary, it holds
\begin{align}\label{Eq:nonsig2}
    p(\perp,x|x,y)+p(\overline{y},\perp|x,\overline{y})+p(y,\perp|x,y)\leq 1.
\end{align}
We compute
\begin{align*}
    12 \omega(G_F,p)&=3\sum_{x,y\in\{0,1\}}p(\perp,x|x,y)+p(y,\perp|x,y)\\
    &=2\sum_{x,y\in\{0,1\}}p(\perp,x|x,y)+p(y,\perp|x,y)+\sum_{x,y\in\{0,1\}}p(\perp,\overline{x}|\overline{x},y)+p(\overline{y},\perp|x,\overline{y})\\
    &=\sum_{x,y\in\{0,1\}}p(\perp,x|x,y)+p(\perp,\overline{x}|\overline{x},y)+p(y,\perp|x,y)\\
    &\hspace{1.5cm} +\sum_{x,y\in\{0,1\}}p(\perp,x|x,y)+p(\overline{y},\perp|x,\overline{y})+p(y,\perp|x,y)\\
    &\leq 8,
\end{align*}
by rearranging terms and using \Cref{Eq:nonsig1,Eq:nonsig2} for the inequality. This yields $\omega(G_F,p)\leq \frac{2}{3}$. Summarizing, we obtain $\omega_{ns}(G_F)=\frac{2}{3}$.





\section{Feige's game is a self-test}\label{sect:SelftestFeige}

The goal of this section is to show that $\tilde{S}_F$ given in \Cref{sec:quantum_strategy} is in fact the unique optimal quantum strategy of $G_F$ in the sense made precise in the following definition.

\begin{defn} \label{def:self-test}
    Let $G=(X,Y,A,B,\mu,V)$ be a nonlocal game, and let
    \begin{equation*}
        \tilde{S}=(\ket{\tilde{\psi}}\in \tilde{\cH}_A\otimes\tilde{\cH}_B, \{\tilde{P}^x_a\} , \{\tilde{Q}^y_b\})
    \end{equation*}
    be an optimal quantum strategy. We say that $G$ \textit{robustly self-tests $\tilde{S}$} if there exists a function 
    \begin{equation*}
        \delta:\R_{\geq 0}\rightarrow\R_{\geq 0}, \qquad \delta(\epsilon)\rightarrow 0 \text{ as } \epsilon\rightarrow 0,
    \end{equation*}
such that for any $\epsilon\geq 0$ and any strategy
    \begin{equation*}
       S= (\ket{\psi}\in \cH_A\otimes\cH_B, \{P^x_a\} , \{Q^y_b\}),
    \end{equation*}
    with $\omega(G,S)=\omega_q(G)-\epsilon$, 
    there exist isometries 
    $$
    V_A:\cH_A\rightarrow \tilde{\cH}_A\otimes \cK_A \text{, } V_B:\cH_B\rightarrow \tilde{\cH}_B\otimes \cK_B
    $$ 
    and a unit vector $\ket{\kappa}\in \cK_A\otimes \cK_B$ such that
    \begin{equation}
       \norm{ V_A\otimes V_B \left(P^x_a\otimes Q^y_b\ket{\psi}\right)-\left(   \tilde{P}^x_a\otimes   \tilde{Q}^y_b \ket{\tilde{\psi}} \right) \otimes \ket{\kappa} }\leq \delta(\epsilon) \label{eq:close}
    \end{equation}
    for all $x,y,a,b$.  
\end{defn}

%
%
Recall the following definition from \cite[Definitions 6.1 and 7.1]{zhao24}.
\begin{defn}\label{def:pair}
    Let $\Gamma=\{\gamma^y_b:b\in B,y\in Y\}$ and $\cR$ be two sets of $*$-polynomials in $\msA_{PVM}^{X,A}$, where every $\gamma^y_b$ is self-adjoint. We say $(\Gamma,\cR)$ is a \textit{determining pair} for a nonlocal game $G=(X,Y,A,B,\mu,V)$ if the following conditions hold.
    \begin{enumerate}
        \item For every $y\in Y$, $\{q_{\cR}(\gamma^y_b):b\in B\}$ is a PVM in the quotient $\msA_{PVM}^{X,A}/\langle \cR\rangle$, where $\langle \cR\rangle$ is the  two-sided closed ideal generated by $\cR$ and 
        $$
        q_{\cR}:\msA_{PVM}^{X,A}\rightarrow \msA_{PVM}^{X,A}/\langle \cR\rangle
        $$ 
        is the quotient map.
        \item A strategy $\tilde{S}=(\ket{\tilde{\psi}},\tilde{\pi}_A,\tilde{\pi}_B )$ is optimal if and only if
        \begin{enumerate}
            \item $I_A\otimes \tilde{\pi}_B(q^y_b)\ket{\tilde{\psi}}=\tilde{\pi}_A(\gamma^y_b)\otimes I_B\ket{\psi}$ for all $b,y$, and
            \item the linear functional $\tau$ on $\msA_{PVM}^{X,A}$ defined by 
            $$
            \tau(\alpha):=\bra{\tilde{\psi}}\tilde{\pi}_A(\alpha)\otimes I_B\ket{\tilde{\psi}}
            $$  
            is a tracial state satisfying $\tau(r^*r)=0$ for all $r\in\cR$.
        \end{enumerate}
    \end{enumerate}
    In this case, we define the \textit{game algebra of $G$} as 
    $$
    C^*(G):=\msA_{PVM}^{X,A}/\langle \cR\rangle.
    $$
    We say that a determining pair $(\Gamma,\cR)$ is \textit{$\delta$-robust} for some function  
    \begin{equation*}
        \delta:\R_{\geq 0}\rightarrow\R_{\geq 0}, \qquad \delta(\epsilon)\rightarrow 0 \text{ as } \epsilon\rightarrow 0
    \end{equation*}
    if for any $\epsilon\geq 0$ and any quantum strategy $S=(\ket{\psi},\pi_A,\pi_B)$ with $\omega(G,S)=w_q(G)-\epsilon$,
    \begin{enumerate}
        \item[(3)] $\norm{I_A\otimes \pi_B(q^y_b)\ket{\psi} - \pi_A(\gamma^y_b)\otimes I_B\ket{\psi}  }\leq \delta(\epsilon)$ for all $b,y$, and
        \item[(4)] $\norm{ \pi_A(r)\otimes I_B\ket{\psi}}\leq \delta(\epsilon)$ for all $r\in \cR$.
    \end{enumerate}
\end{defn}

The optimal strategy $\tilde{S}$ in \Cref{def:pair} is sometimes said to be \textit{Alice-determined}: Bob's PVMs $\{q^y_b\}$ are determined by some PMVs $\Gamma=\{\gamma^y_b\}$ on Alice's side (see (2a)), and Alice's PVMs satisfy the algebraic relations $\cR$ (see (2b)). A nearly optimal strategy is then \textit{Alice-robustly-determined}, meaning that the equalities in (2a) and (2b) still hold approximately. In particular, Alice's PVMs in an optimal (resp. nearly optimal) strategy define a representation (resp. an approximate representation) of the associated game algebra $C^*(G)$. 

Thus, to prove $G$ has a unique optimal strategy in the sense of definition \Cref{def:self-test}, it suffices to show that $C^*(G)$ has a unique irreducible representation, and the robustness of this self-test then follows from the robustness of the determining pair.

\begin{thm}[Corollary 7.11 of \cite{zhao24}]\label{uniqueirrep-selftest}
    Let $G$ be a nonlocal game with a $\delta$-robust determining pair $(\Gamma,\cR)$. If the associated game algebra $C^*(G)$ is a full matrix algebra, then $G$ is a robust self-test.
\end{thm}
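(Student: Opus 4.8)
The plan is to leverage the rigidity of the representation theory of full matrix algebras together with the $\delta$-robustness hypothesis. Write $C^*(G)\cong M_d(\C)$ and let $\rho_0\colon M_d(\C)\to\cB(\C^d)$ be its irreducible representation, unique up to unitary equivalence. First I would identify the reference strategy that $G$ self-tests. By \Cref{def:pair}(2), an exactly optimal strategy $\tilde S=(\ket{\tilde\psi},\tilde\pi_A,\tilde\pi_B)$ is Alice-determined: $\tilde\pi_A$ annihilates $\langle\cR\rangle$ and hence factors as $\tilde\rho\circ q_{\cR}$ for a representation $\tilde\rho$ of $M_d(\C)$; the functional $\tau(\alpha)=\bra{\tilde\psi}\tilde\pi_A(\alpha)\otimes I_B\ket{\tilde\psi}$ is a tracial state; and $I_A\otimes\tilde\pi_B(q^y_b)\ket{\tilde\psi}=\tilde\pi_A(\gamma^y_b)\otimes I_B\ket{\tilde\psi}$. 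Since $M_d(\C)$ has a unique tracial state (the normalized trace) and every unital representation of $M_d(\C)$ is an amplification $\rho_0\otimes I_{\cK}$, these constraints pin down $\tilde S$ up to local isometries: one may take $\tilde{\cH}_A=\tilde{\cH}_B=\C^d$, $\ket{\tilde\psi}=\tfrac1{\sqrt d}\sum_i\ket{ii}$, $\tilde P^x_a=\rho_0(q_{\cR}(p^x_a))$, and, using the ricochet identity $(I\otimes M)\ket{\tilde\psi}=(M^{T}\otimes I)\ket{\tilde\psi}$ in (2a), $\tilde Q^y_b=\overline{\rho_0(q_{\cR}(\gamma^y_b))}$. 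For $G_F$ this recovers $\tilde{S}_F$ from \Cref{sec:quantum_strategy} with $d=3$.

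Next, given an $\epsilon$-near-optimal strategy $S=(\ket\psi,\pi_A,\pi_B)$, I would use \Cref{def:pair}(4), i.e.\ $\norm{\pi_A(r)\otimes I_B\ket\psi}\le\delta(\epsilon)$ for all $r\in\cR$, to regard $\pi_A$ as a state-weighted approximate representation of $C^*(G)\cong M_d(\C)$. The technical core is a stability step: because finite-dimensional C*-algebras have stable relations, there is a genuine representation $\pi_A'$ of $\msA_{PVM}^{X,A}$ that factors through $C^*(G)$ and satisfies $\norm{(\pi_A(p^x_a)-\pi_A'(p^x_a))\otimes I_B\ket\psi}\le\delta_1(\epsilon)$ for a modulus $\delta_1(\epsilon)\to 0$ depending only on $d,\abs{X},\abs{A}$. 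Writing $\pi_A'=\rho\circ q_{\cR}$ with $\rho\cong\rho_0\otimes I_{\cK_A}$ produces an isometry $V_A\colon\cH_A\to\C^d\otimes\cK_A$ intertwining $\pi_A'(p^x_a)$ with $\tilde P^x_a\otimes I_{\cK_A}$, and hence
\[
\norm{(V_A\otimes I_B)(P^x_a\otimes I_B\ket\psi)-(\tilde P^x_a\otimes I_{\cK_A}\otimes I_B)(V_A\otimes I_B)\ket\psi}\le\delta_1(\epsilon).
\]

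Then I would transfer the control to Bob. Each $\gamma^y_b$ is a $*$-polynomial in the $p^x_a$, so the previous step controls $\pi_A(\gamma^y_b)\otimes I_B\ket\psi$ up to $O(\delta_1(\epsilon))$ by the $V_A$-image of $\tilde\pi_A(\gamma^y_b)=\rho_0(q_{\cR}(\gamma^y_b))$; combined with \Cref{def:pair}(3) this pins down Bob's vectors $I_A\otimes\pi_B(q^y_b)\ket\psi$ in terms of the reference data up to $O(\delta_1(\epsilon))$. Since $(V_A\otimes I_B)\ket\psi$ is, up to $O(\delta_1(\epsilon))$, an amplification of the maximally entangled state $\ket{\tilde\psi}$ --- so its reduced density operator on the $\C^d$-block is close to $\tfrac1d I_d$, hence invertible there --- this information determines an isometry $V_B\colon\cH_B\to\C^d\otimes\cK_B$ together with a unit vector $\ket\kappa\in\cK_A\otimes\cK_B$ such that
\[
\norm{(V_A\otimes V_B)(P^x_a\otimes Q^y_b\ket\psi)-(\tilde P^x_a\otimes\tilde Q^y_b\ket{\tilde\psi})\otimes\ket\kappa}\le\delta'(\epsilon)\quad\text{for all }x,y,a,b,
\]
where $\delta'$ is a fixed polynomial in $\delta_1$, hence $\delta'(\epsilon)\to 0$; this is exactly the estimate required in \Cref{def:self-test}, so $G$ robustly self-tests $\tilde S$.

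The step I expect to be the main obstacle is the quantitative stability in the rounding: converting the state-weighted approximate representation of $M_d(\C)$ into a genuine nearby representation with an explicit modulus $\delta_1$, and then handling Bob's side given that $\ket\psi$ is only \emph{approximately} maximally entangled, so that the ricochet/transport argument incurs only controlled error. Everything else --- uniqueness of the irreducible representation and of the trace of a full matrix algebra, the classification of its representations as amplifications, and the identification of the reference strategy --- is a routine unwinding of \Cref{def:pair} and the structure theory of $M_d(\C)$.
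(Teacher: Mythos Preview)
The paper does not supply its own proof of this statement: it is quoted as Corollary~7.11 of \cite{zhao24} and used as a black box in the proof of \Cref{thm:self-test}. So there is no in-paper argument to compare against.

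Your outline is the correct architecture and matches how the result is proved in \cite{zhao24}: the unique irreducible representation and unique trace of $M_d(\C)$ pin down the reference strategy (maximally entangled state on $\C^d\otimes\C^d$, Alice's PVMs from the irrep, Bob's from $\Gamma$ via the ricochet identity), and for a near-optimal strategy the $\delta$-robustness conditions (3)--(4) are used to round Alice's approximate representation of $C^*(G)$ to a genuine one, after which control is transported to Bob through the polynomials $\gamma^y_b$.

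One point is worth sharpening. When you write ``finite-dimensional $C^*$-algebras have stable relations,'' that phrase normally refers to operator-norm stability, whereas here the relations in $\cR$ are only approximately satisfied in the state-weighted seminorm $\norm{X}_{\rho_A}:=\norm{X\otimes I_B\ket\psi}$. These are different regimes, and operator-norm stability does not directly apply. The mechanism actually used in \cite{zhao24} is a generalized Gowers--Hatami/Stinespring argument: one lifts the identity on $C^*(G)\cong M_d(\C)$ to a ucp map $\theta\colon C^*(G)\to \msA_{PVM}^{X,A}$ (this is where finite-dimensionality of the game algebra enters, e.g.\ via Choi--Effros), so that each $\theta(q_\cR(p^x_a))-p^x_a$ lies in $\langle\cR\rangle$; composing with $\pi_A$ and dilating via Stinespring produces the isometry $V_A$ and the genuine representation, with the error controlled in $\norm{\cdot}_{\rho_A}$ by the $\cR$-decomposition of $\theta(q_\cR(p^x_a))-p^x_a$. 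You have correctly flagged this rounding step as the crux, but the tool is more specific than ``stable relations'' suggests.
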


Within this framework, to prove that $G_F$ robustly self-tests $\tilde{S}_F$, we need to show that
\begin{enumerate}
    \item[(A)]  $G_F$ has a robust determining pair $(\Gamma,\cR)$, and
    \item[(B)]  the associated game algebra $C^*(G_F)$ has a unique irreducible representation.
\end{enumerate}

Our approach for Task (A), constructing a determining pair and showing its robustness, is based on a sum-of-squares (SOS) decomposition of the game polynomial of $G_F$ as detailed in \Cref{sec:Upper bound}. 
%
%
We get the following algebraic structure of optimal strategies for $G_F$.
\begin{lem}\label{lem:pair}
  Let $ \nu := \begin{bmatrix}
 \frac{9}{16} & \frac{1}{16} & \frac{3}{8} \\
 \frac{1}{16} & \frac{9}{16} & \frac{3}{8}  \\
 \frac{3}{8} & \frac{3}{8} & \frac{1}{4}
\end{bmatrix}$, with the first, second, and third column/row labelled by $0,1$, and $\bot$, respectively. Let 
$$
\Gamma:=\{\gamma^y_b: b\in\{ 0,1,\bot\},y\in\{0,1\} \}\subseteq \msA_{PVM}^{\{0,1\},\{ 0,1,\bot\}}
$$ 
where
\begin{align*}
    \gamma^0_0 : &=-\tfrac{3}{4} \{p^0_1 , p^1_0 \} - \tfrac{5}{4}\{p^0_1 , p^1_1 \}
-\tfrac{5}{16} p^0_0+\tfrac{3}{8} p^1_0 +\tfrac{9}{16} p^0_1+\tfrac{15}{8}p^1_1,\\
\gamma^0_1 :&=-\tfrac{3}{4} \{p^0_1 , p^1_0 \} - \tfrac{5}{4}\{p^0_1 , p^1_1 \}
+\tfrac{3}{16} p^0_0-\tfrac{1}{8} p^1_0 +\tfrac{33}{16} p^0_1+\tfrac{3}{8}p^1_1,\\
\gamma^0_\bot :&=1 - \gamma^0_0- \gamma^0_1,\\
\gamma^1_0 :&=-\tfrac{3}{4} \{p^0_1 , p^1_0 \} - \tfrac{5}{4}\{p^0_1 , p^1_1 \}
-\tfrac{9}{16} p^0_0+\tfrac{9}{8} p^1_0 +\tfrac{13}{16} p^0_1+\tfrac{9}{8}p^1_1,\\
\gamma^1_1 :&=-\tfrac{3}{4} \{p^0_1 , p^1_0 \} - \tfrac{5}{4}\{p^0_1 , p^1_1 \}
+\tfrac{15}{16} p^0_0-\tfrac{3}{8} p^1_0 +\tfrac{21}{16} p^0_1+\tfrac{5}{8}p^1_1,\\
\gamma^1_\bot :&=1 - \gamma^1_0- \gamma^1_1.
\end{align*}
Here $\{p,q\}:=pq+qp$ is the anti-commutator. Let
\begin{equation*}
    \cR:=\{p^x_a p^{1-x}_{a'} p^x_a- \nu_{aa'} p^x_a: a,a'\in \{ 0,1,\bot\},  x\in \{0,1 \} \}\subseteq \msA_{PVM}^{\{0,1\},\{ 0,1,\bot\}}.
\end{equation*}
Then $(\Gamma,\cR)$ is an $O(\sqrt{\epsilon})$-robust determining pair of $G_F$. 
\end{lem}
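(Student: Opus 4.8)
The plan is to verify the two clauses of Definition~\ref{def:pair} directly from the SOS decomposition $F^*YF = \tfrac{9}{16} - \beta$ established in \Cref{sec:Upper bound} and detailed in \Cref{appedix: SOS decomp.}. Recall that if $S=(\ket\psi,\pi_A,\pi_B)$ is an optimal strategy then $\pi_A\otimes\pi_B(f_i)\ket\psi = 0$ for every coordinate polynomial $f_i$ of $F$, and more generally $\pi_A\otimes\pi_B(r)\ket\psi=0$ for every $r$ in the span of the $f_i$. The first task is to identify, among these vanishing polynomials, the ones of the form $q^y_b - (\gamma^y_b\otimes 1)$ (giving the correspondence $\Gamma$) and the ones of the form $p^x_a p^{1-x}_{a'} p^x_a - \nu_{aa'}p^x_a$ (giving the relations $\cR$). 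For the latter, I would first extract, for $a=a'$, the idempotent-type relation, and for $a\neq a'$ the cross terms; a sanity check is that summing $\nu_{aa'}$ over $a'$ gives $1$ for each row (indeed $\tfrac{9}{16}+\tfrac1{16}+\tfrac38 = 1$, $\tfrac38+\tfrac38+\tfrac14=1$), consistent with $\sum_{a'}p^{1-x}_{a'}=1$. The explicit polynomials $\gamma^y_b$ stated in the lemma can be taken as given by the computer-assisted computation; the content to verify by hand is that they satisfy condition~(1), namely that $\{q_{\cR}(\gamma^y_b):b\}$ is a PVM modulo $\langle\cR\rangle$ — self-adjointness is clear from the formulas, $\sum_b\gamma^y_b=1$ holds by the definition of $\gamma^y_\bot$, and idempotency $q_{\cR}(\gamma^y_b)^2 = q_{\cR}(\gamma^y_b)$ as well as orthogonality must be checked by reducing the relevant degree-$\leq 6$ polynomials in $p^x_a$ using $\cR$ (and the PVM relations); this is a finite check best delegated to the Mathematica notebook~\cite{repository}.

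Next I would establish clause~(2), the ``if and only if'' characterization of optimality. The forward direction ($\Rightarrow$) is exactly the SOS argument: an optimal $S$ makes $\pi_A\otimes\pi_B(f_i)\ket\psi=0$, and since the polynomials $q^y_b-\gamma^y_b$ and the elements of $\cR$ lie in the span of the $f_i$ (this is the identification step above), we get (2a) $I_A\otimes\pi_B(q^y_b)\ket\psi = \pi_A(\gamma^y_b)\otimes I_B\ket\psi$ and, for (2b), $\pi_A(r)\otimes I_B\ket\psi=0$; the fact that $\tau$ is then tracial needs the standard observation that $\ket\psi$ has full Schmidt rank in an optimal strategy (one may restrict to the support and argue that the SOS relations force this, or invoke that the stated optimal strategy $\tilde S_F$ is a full-rank state and any optimal strategy dilates it — but for clause (2) of the \emph{determining pair} it suffices that the relations $\cR$ together with (2a) force $\tau$ to be a trace, which is the mechanism of \cite{zhao24}). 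The reverse direction ($\Leftarrow$): given any $S$ satisfying (2a) and (2b) with $\tau$ tracial, one computes $\omega(G_F,S) = \bra\psi\pi_A\otimes\pi_B(\beta)\ket\psi$ and substitutes (2a) to rewrite everything in terms of Alice's algebra, obtaining $\tau$ applied to a fixed polynomial $g\in\msA_{PVM}^{\{0,1\},\{0,1,\bot\}}$; using the relations $\cR$ (which $\tau$ annihilates on $r^*r$, hence — via Cauchy–Schwarz and traciality — on $r$ and on $sr$ for any $s$) and traciality, one reduces $g$ to the constant $\tfrac9{16}$. This last reduction is again a concrete algebraic computation I would carry out (or verify) symbolically.

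Finally, for the robustness clauses (3) and (4), the point is that the equalities used above are all consequences of $\pi_A\otimes\pi_B(f_i)\ket\psi=0$, and in the $\epsilon$-perturbed setting one instead has $\sum_i\norm{\pi_A\otimes\pi_B(f_i)\ket\psi}^2 = \bra\psi(\tfrac9{16}-\pi_A\otimes\pi_B(\beta))\ket\psi = \epsilon$, so $\norm{\pi_A\otimes\pi_B(f_i)\ket\psi}\leq\sqrt\epsilon$ for each $i$. Since $q^y_b-\gamma^y_b$ and each $r\in\cR$ are fixed linear combinations of the $f_i$, the triangle inequality gives $\norm{I_A\otimes\pi_B(q^y_b)\ket\psi - \pi_A(\gamma^y_b)\otimes I_B\ket\psi}\leq C\sqrt\epsilon$ and $\norm{\pi_A(r)\otimes I_B\ket\psi}\leq C\sqrt\epsilon$ with $C$ an absolute constant depending only on the coefficients appearing in $F$; hence $\delta(\epsilon) = O(\sqrt\epsilon)$. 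I expect the main obstacle to be the bookkeeping in the identification step — verifying that the particular polynomials $\gamma^y_b$ listed in the lemma genuinely arise as (linear combinations of) the SOS coordinate polynomials, so that (2a) holds on the nose, and separately that $\cR$ is recovered — together with the algebraic reduction in the ($\Leftarrow$) direction of clause~(2); both are finite, computer-checkable computations rather than conceptual difficulties, but getting the coefficients exactly right (and rational, matching the rounded exact solution $Y$) is where the care is needed.
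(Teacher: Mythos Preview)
Your overall architecture matches the paper's proof closely: extract $(\Gamma,\cR)$ from the span of the SOS coordinate polynomials $f_i$, derive the robustness bounds (3)--(4) from $\epsilon = \bra\psi\hat F^*Y\hat F\ket\psi$, and handle clause~(1) by a direct algebraic check modulo $\langle\cR\rangle$. Two points deserve correction.

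\textbf{Traciality of $\tau$.} Your argument here is the one real gap. Full Schmidt rank of $\ket\psi$ does \emph{not} by itself make $\tau(\alpha)=\bra\psi\pi_A(\alpha)\otimes I_B\ket\psi$ tracial: it only gives $\tau(\alpha)=\Tr(\rho_A\pi_A(\alpha))$, which is a trace iff $\rho_A$ lies in the commutant of $\pi_A(\msA_{PVM})$, and nothing you wrote forces that. Invoking that optimal strategies dilate $\tilde S_F$ is circular, since that is the self-testing conclusion we are building towards. The paper instead uses the \emph{symmetry of the game}: the same SOS analysis, run with Alice and Bob interchanged, produces self-adjoint polynomials $\eta^x_a\in\msA_{PVM}^{Y,B}$ with $\pi_A(p^x_a)\otimes I_B\ket\psi = I_A\otimes\pi_B(\eta^x_a)\ket\psi$ in any optimal strategy. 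One then shows $\tau(w_2w_1)=\tau(w_1w_2)$ for monomials $w_1,w_2$ by peeling the rightmost letter of $w_1$ off to Bob's side (where it commutes with everything on Alice's side), sliding it past $w_2$, and pushing it back --- repeating until $w_1$ is in front. This is the mechanism you should invoke.

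\textbf{A minor quantitative slip.} You write $\sum_i\norm{\pi_A\otimes\pi_B(f_i)\ket\psi}^2=\epsilon$, but the SOS identity is $F^*YF=\tfrac{9}{16}-\beta$ with $Y$ positive \emph{definite}, not $Y=I$. The paper's argument is that $\epsilon=\sum_i\lambda_i\abs{\bra{\varphi_i}\hat F\ket\psi}^2$ (eigen-decomposition of $Y$) forces each summand to be $O(\epsilon)$, whence $\bra\psi\hat F^*\hat F\ket\psi=\sum_i\abs{\bra{\varphi_i}\hat F\ket\psi}^2\leq O(\epsilon)$ and so $\norm{\pi_A\otimes\pi_B(f_i)\ket\psi}\leq O(\sqrt\epsilon)$. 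The conclusion is the same, but the equality you wrote is false in general.
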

\begin{proof}
It is not hard to verify that $\gamma^y_b$'s are self-adjoint such that $\sum_{b}\gamma^y_b=1$ for $y\in \{0,1\}$, and $(\gamma^y_b)^2-\gamma^y_b\in \langle\cR\rangle$ for all $y,b$. So for every $y$, $\{\gamma^y_b:b\in\{0,1,\bot\}\}$ is a PVM in $\msA_{PVM}^{\{0,1\},\{ 0,1,\bot\}}/\langle\cR\rangle.$
This proves Hypothesis (1) in \Cref{def:pair}.

Let $\beta \in \msA_{PVM}^{X,A}\otimes \msA_{PVM}^{Y,B}$ be the game polynomial of $G_F$, and let 
$$
F=F(p_a^x, q_b^y)=\left(f_1(p_a^x, q_b^y), ... ,f_{32}(p_a^x, q_b^y)\right)^T
$$ 
be the vector of 32 polynomials in $p_{a}^x$ and $q_b^y$ given in \Cref{appedix: SOS decomp.}. Let $Y$ be the $32\times 32$ positive definite matrix given in \cite{repository}, then as stated in \Cref{sec:Upper bound}, using a computer, one can check that
\begin{equation*}
    F^* Y F = \tfrac{9}{16}-\beta.
\end{equation*}
%
As in \Cref{sec:Upper bound} we write $Y\in M_{32}(\C)$ in its spectral decomposition
\begin{equation*}
    Y= \sum_{i=1}^{32} \lambda_i \dyad{\varphi_i}{\varphi_i},
\end{equation*}
where $\lambda_i>0$ for all $i$ and $\{\ket{\varphi_i}:1\leq i\leq 32\}$ is an orthonormal basis of $\C^{32}$. For any quantum strategy $S=(\ket{\psi},\pi_A,\pi_B)$, let 
$$
\hat{F}=\big((\pi_A\otimes\pi_B)(f_1),\ldots, (\pi_A\otimes\pi_B)(f_{32})\big)^T,
$$ 
and let $\epsilon=\frac{9}{16}-\omega(G_F,S)$. It follows that
\begin{align*}
    \epsilon&=\bra{\psi} (\pi_A\otimes\pi_B)\left(\frac{9}{16}-\beta \right)\ket{\psi}=\bra{\psi} \hat{F}^*Y\hat{F}\ket{\psi}\\
    &=\sum_{i=1}^{32}\lambda_i \bra{\psi} \hat{F}^* \ket{\varphi_i} \bra{\varphi_i}\hat{F} \ket{\psi} = \sum_{i=1}^{32} \lambda_i \abs{\bra{\varphi_i}\hat{F}\ket{\psi}}^2.
\end{align*}
Since each $\lambda_i>0$, we obtain that
\begin{equation*}
    \abs{\bra{\varphi_i}\hat{F}\ket{\psi}}^2 \leq O(\epsilon)
\end{equation*}
for all $1\leq i\leq 32$. Hence
\begin{equation*}
    \sum_{i=1}^{32} \norm{(\pi_A\otimes\pi_B(f_i)\ket{\psi} }^2= \bra{\psi}\hat{F}^*\hat{F}\ket{\psi}=\sum_{i=1}^{32} \abs{\bra{\varphi_i}\hat{F}\ket{\psi}}^2 \leq O(\epsilon).
\end{equation*}
In particular, 
$$
\norm{(\pi_A\otimes\pi_B)(f_i)\ket{\psi} }\leq O(\sqrt{\epsilon})
$$ 
for every $1\leq i\leq 32$. We further observe that 
$$
1\otimes q^y_b-\gamma^y_b\otimes 1,r\otimes 1\in \spa\{f_i:1\leq i\leq 32\}
$$ 
for all $b,y$ and all $r\in \cR$. We conclude that 
\begin{equation}
    \norm{I_A\otimes \pi_B(q^y_b)\ket{\psi} - \pi_A(\gamma^y_b)\otimes I_B\ket{\psi}  }\leq O(\sqrt{\epsilon}) \label{eq:approxGamma}
\end{equation}
for all $b,y$, and
\begin{equation}
    \norm{ \pi_A(r)\otimes  I_B\ket{\psi}}\leq O(\sqrt{\epsilon}) \label{eq:approxR}
\end{equation}
for all $r\in \cR$. This proves Hypotheses (3) and (4) in \Cref{def:pair} with $\delta(\epsilon)=O(\sqrt{\epsilon})$.

Now suppose $S$ is optimal for $G_F$, then $\epsilon=0$, so the ``only if" part of Hypothesis (2a) follows from \Cref{eq:approxGamma}. By \Cref{eq:approxR}, the linear functional $\tau$ on $\msA_{PVM}^{\{0,1\},\{ 0,1,\bot\}}$ defined by 
$$
\tau(\alpha):=\bra{\psi}\pi_A(\alpha)\otimes I_B\ket{\psi}
$$ 
is a state satisfying $\tau(r^*r)=0$ for all $r\in \cR$. By symmetry, for every $p^x_a$ there exists self-adjoint $\eta^x_a\in \msA_{PVM}^{Y,B}$ such that $\pi_A(p^x_a)\otimes I_B\ket\psi=I_A\otimes \pi_B(\eta^x_a)\ket{\psi}$. So for any monomials $w_1=p^{x_1}_{a_1}\cdots p^{x_k}_{a_k}$ and $w_2=p^{x_{k+1}}_{a_{k+1}}\cdots p^{x_n}_{a_n}$, we have
    \begin{align*}
        \bra\psi \pi_A(w_2w_1)\otimes I_B\ket\psi&= \bra\psi \pi_A(w_2)\pi_A(p^{x_1}_{a_1}\cdots p^{x_{k-1}}_{a_{k-1}})\otimes \pi_B(\eta^{x_k}_{a_k})\ket\psi\\
        &=\bra\psi \pi_A(p^{x_k}_{a_k})\pi_A(w_2)\pi_A(p^{x_1}_{a_1}\cdots p^{x_{k-1}}_{a_{k-1}})\otimes I_B\ket\psi\\
        &= \cdots\\
        &= \bra\psi\pi_A(p^{x_1}_{a_1}\cdots p^{x_k}_{a_k})\pi_A(w_2)\otimes I_B\ket\psi\\
        &= \bra\psi \pi_A(w_1w_2)\otimes I_B\ket\psi.
    \end{align*}
Hence $\tau$ is tracial. This proves the ``only if" part of (2b).

To see the ``if" part of Hypothesis (2), suppose $\tilde{S}=(\ket{\tilde{\psi}},\tilde{\pi}_A,\tilde{\pi}_B)$ is a strategy satisfying (2a) and (2b). Then $\bra{\tilde{\psi}}(\tilde{\pi}_A\otimes \tilde{\pi}_B)(\beta)\ket{\tilde{\psi}}=\frac{9}{16}$. Hence $\tilde{S}$ is optimal. This completes the proof.
\end{proof}

For Task (B), proving the game algebra $C^*(G_F)$ has a unique irreducible representation, we first note that by \Cref{lem:pair}, $C^*(G)=\msA_{PVM}^{\{0,1\},\{ 0,1,\bot\}}/\langle \cR\rangle$ is isometric to $\cA_\nu$, the universal $C^*$-algebra generated by $E_i,F_i,1\leq i\leq 3$, subject to the relations
\begin{enumerate}[label=(\roman*)]
    \item  $E_i = E_i^*$, $F_i^*=F_i$ for all $i$,
    \item $\sum_{i=1}^3 E_i =\sum_{i=1}^3 F_i = 1$, and
    \item $E_iF_jE_i=\nu_{ij} E_i$, $F_iE_jF_i=\nu_{ij}F_i$ for all $i,j$,
\end{enumerate}
where $\nu$ is the symmetric stochastic matrix defined in \Cref{lem:pair}. In fact, for any $n\times n$ symmetric stochastic matrix $\nu$, we can define $\cA_\nu$ analogously, where generators $\{E_i\}_{i=1}^n$ and $\{F_i\}_{i=1}^n$ are PVMs satisfying $E_iF_jE_i=\nu_{ij} E_i$ and $F_iE_jF_i=\nu_{ij}F_i$. In \Cref{sect:generalMUM}, we establish the representation theory for those $\cA_\nu$'s. Interestingly, we show that representations for $\cA_\nu$ can be seen as a generalization of mutually unbiased measurements (MUMs, see e.g. \cite{MUM_Tavakoli_etal}). We generalize the results in \cite{MUM_Tavakoli_etal} and give a characterization of irreducible representations of $\cA_\nu$. In particular, in \Cref{thm:irrep}, we show that for $2\times 2$ and $3\times 3$ symmetric stochastic matrices $\nu$, every $\cA_\nu$ has a unique irreducible representation up to unitary equivalence and relabeling generators. For the matrix $\nu$ defined in \Cref{lem:pair}, we further show that any allowed relabeling still results in a unitarily equivalent representation.
\begin{lem}\label{lem:unique}
    The game algebra $C^*(G_F)$ associated with $G_F$ has a unique irreducible representation up to unitary equivalence. Moreover, $G_F$ is isomorphic to the full matrix algebra $M_3$.
\end{lem}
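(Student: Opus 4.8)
The plan is to reduce the statement to the representation theory of the algebras $\cA_\nu$ developed in \Cref{sect:generalMUM}, and then to upgrade ``unique irreducible representation'' to ``$\cong M_3$'' by a standard ideal argument together with the explicit $3$-dimensional representation coming from $\tilde S_F$. First I would spell out the identification $C^*(G_F)\cong\cA_\nu$ asserted just before the lemma. By \Cref{lem:pair}, $C^*(G_F)=\msA_{PVM}^{\{0,1\},\{ 0,1,\bot\}}/\langle\cR\rangle$ with $\cR=\{p^x_a p^{1-x}_{a'}p^x_a-\nu_{aa'}p^x_a\}$; writing $(a_1,a_2,a_3)=(0,1,\bot)$ and setting $E_i:=q_{\cR}(p^0_{a_i})$, $F_i:=q_{\cR}(p^1_{a_i})$, the PVM relations for the $p^x_a$ together with the relations in $\cR$ say exactly that $\{E_i\}$ and $\{F_i\}$ are PVMs with $E_iF_jE_i=\nu_{ij}E_i$ and $F_iE_jF_i=\nu_{ij}F_i$. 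Conversely, the generators of $\cA_\nu$ satisfy the defining relations of $C^*(G_F)$, and by comparing universal properties these two maps are mutually inverse $*$-isomorphisms.

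Next I would establish that the irreducible representation is unique. By \Cref{thm:irrep}, for any $3\times 3$ symmetric stochastic matrix, in particular for our $\nu$, all irreducible representations of $\cA_\nu$ are unitarily equivalent up to a relabeling of the generators. The relabelings that descend to $*$-automorphisms of $\cA_\nu$ are precisely the symmetries of $\nu$; for the matrix in \Cref{lem:pair} these are generated by the transpose symmetry (swapping the $E$- and $F$-families, using that $\nu$ is symmetric) and by the transposition $0\leftrightarrow 1$ fixing $\bot$ (which, applied simultaneously to both families, preserves $\nu$). The one genuinely computational point here is to check that each of these symmetries, applied to the unique irreducible representation, yields a unitarily equivalent representation, by exhibiting the intertwining unitary explicitly --- this is the assertion made in the text right after \Cref{lem:pair}. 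Combining these, $\cA_\nu$, hence $C^*(G_F)$, has a single irreducible representation $\pi_0$ up to unitary equivalence.

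It then remains to pin down $\dim\pi_0=3$ and conclude $C^*(G_F)\cong M_3$. One checks directly from the explicit operators of \Cref{sec:quantum_strategy} that $P^x_a P^{1-x}_{a'}P^x_a=\nu_{aa'}P^x_a$ (this is also forced by optimality of $\tilde S_F$ via \Cref{lem:pair}, since $\ket\psi$ has full Schmidt rank), so $\pi_A$ descends to a unital $*$-representation $\bar\pi_A\colon C^*(G_F)\to\cB(\C^3)$. Being finite-dimensional, $\bar\pi_A$ is a direct sum of, say, $m$ copies of $\pi_0$, so $m\cdot\dim\pi_0=3$; and $\dim\pi_0=1$ is impossible, since in a one-dimensional representation the $E_i,F_j$ lie in $\{0,1\}$ with exactly one of each equal to $1$, whereas $E_iF_jE_i=\nu_{ij}E_i$ would then force some $\nu_{ij}\in\{0,1\}$, contrary to the entries of $\nu$. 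Hence $\dim\pi_0=3$ and $\bar\pi_A$ is irreducible, so $\bar\pi_A(C^*(G_F))$ is a $C^*$-subalgebra of $M_3$ acting irreducibly on $\C^3$, i.e. all of $M_3$. Finally $\bar\pi_A$ is faithful: if $J:=\ker\bar\pi_A\ne 0$, a pure state of the $C^*$-algebra $J$ gives a nonzero irreducible representation of $J$, which extends to an irreducible representation $\bar\rho$ of $C^*(G_F)$ that is nonzero on $J$; but $\bar\rho\cong\bar\pi_A$ and unitarily equivalent representations have equal kernels, so $\bar\rho$ must vanish on $J$, a contradiction. Therefore $C^*(G_F)\cong M_3$.

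The genuine difficulty does not lie in this lemma but in \Cref{thm:irrep} itself --- the classification of irreducible representations of $\cA_\nu$ for $3\times 3$ symmetric stochastic $\nu$ --- which is carried out separately in \Cref{sect:generalMUM}. Within the present argument, the only step that is not routine bookkeeping with universal $C^*$-algebras and the standard extension property of irreducible representations of closed ideals is the relabeling check in the second paragraph: identifying the symmetry group of this particular $\nu$ and verifying that each of its symmetries acts on $\pi_0$ by an inner automorphism.
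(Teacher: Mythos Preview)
Your approach is essentially the paper's: identify $C^*(G_F)\cong\cA_\nu$, invoke \Cref{thm:irrep} for uniqueness up to $\sim$-equivalence, determine the admissible permutations from the symmetry of $\nu$ (only $(\id,\id)$ and the simultaneous swap $0\leftrightarrow 1$), and exhibit an explicit unitary absorbing the nontrivial one. Two minor remarks: the $\sim$-equivalence of \Cref{thm:irrep} permutes within the $E$- and $F$-families separately but does \emph{not} interchange them, so your ``transpose symmetry'' is not part of the relabeling you need to resolve (and \Cref{thm:irrep} does not hold for \emph{every} $3\times3$ symmetric stochastic $\nu$---it has hypotheses you should check for this particular $\nu$); and \Cref{thm:irrep} already asserts the irreducible representation is $n$-dimensional, so your separate dimension count and faithfulness/ideal argument for $C^*(G_F)\cong M_3$ are correct but more detailed than what the paper writes out.
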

We prove this lemma in \Cref{sect:generalMUM}.

\begin{thm}\label{thm:self-test}
   The game $G_F$ robustly self-tests the strategy $\tilde{S}_F$.
\end{thm}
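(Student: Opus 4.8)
The plan is to obtain the theorem by feeding \Cref{lem:pair} and \Cref{lem:unique} into the abstract self-testing criterion \Cref{uniqueirrep-selftest}, and then to verify that the strategy this criterion produces really is $\tilde S_F$. Concretely, \Cref{lem:pair} supplies an $O(\sqrt\epsilon)$-robust determining pair $(\Gamma,\cR)$ for $G_F$, and \Cref{lem:unique} identifies the associated game algebra $C^*(G_F)$ with the full matrix algebra $M_3$. Since the hypotheses of \Cref{uniqueirrep-selftest} are exactly ``a $\delta$-robust determining pair'' plus ``$C^*(G)$ is a full matrix algebra'', it follows immediately that $G_F$ is a robust self-test, with robustness function $\delta(\epsilon)=O(\sqrt\epsilon)$.

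To make the robustness bound explicit rather than a black box, I would spell out the mechanism behind \Cref{uniqueirrep-selftest}: take the quotient map $q_{\cR}\colon\msA_{PVM}^{\{0,1\},\{0,1,\bot\}}\to C^*(G_F)$ and use the Choi--Effros lifting theorem (applicable because $C^*(G_F)\cong M_3$ is nuclear) to produce a ucp map $\theta\colon C^*(G_F)\to\msA_{PVM}^{\{0,1\},\{0,1,\bot\}}$ splitting $q_{\cR}$, so that $\theta(\tilde p^x_a)-p^x_a\in\langle\cR\rangle$. One then checks that each such element has a finite $\cR$-decomposition, and the $C^*$-algebraic Gowers--Hatami stability theorem of \cite{zhao24}, combined with the $O(\sqrt\epsilon)$ bounds on $\norm{\pi_A(r)\otimes I_B\ket\psi}$ and $\norm{I_A\otimes\pi_B(q^y_b)\ket\psi-\pi_A(\gamma^y_b)\otimes I_B\ket\psi}$ from \Cref{lem:pair}, upgrades this to an $O(\sqrt\epsilon)$ local-dilation bound of the form required by \Cref{def:self-test}, where the dilation is built from the Stinespring dilation of $\pi_A\circ\theta$ and the uniqueness clause of \Cref{lem:unique}.

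The remaining step is to identify the canonical optimal strategy of $C^*(G_F)$ with $\tilde S_F$. First I would recall, from \Cref{sec:quantum_strategy} together with the SOS certificate $F^*YF=\tfrac9{16}-\beta$ of \Cref{appedix: SOS decomp.}, that $\tilde S_F$ is optimal. Because $(\Gamma,\cR)$ is a determining pair, the ``only if'' part of Hypothesis~(2) in \Cref{def:pair} then applies to $\tilde S_F$: the Alice representation $\pi_A$ kills $\cR$ on the state, hence descends to a representation of $C^*(G_F)$, and the induced functional $\tau$ is the unique tracial state of $M_3$, which pins the shared state down to the maximally entangled state on the support of $\pi_A$. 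Next I would check directly that $\pi_A$ is irreducible: Alice's operators are the rank-one projections onto $\ket{\tilde a}$ and onto $\tilde X\tilde Z\tilde X\ket{\tilde a}$ for $a\in\{0,1,\bot\}$, and since $\tilde Z$ and $\tilde X\tilde Z\tilde X$ are non-commuting observables on $\C^3$ (with common one-dimensional eigenspace $\C\ket\bot$), these six vectors are in sufficiently general position that the projections generate all of $\cB(\C^3)=M_3$. As $C^*(G_F)\cong M_3$ has a unique irreducible representation up to unitary equivalence by \Cref{lem:unique} — with no residual relabeling freedom for this particular $\nu$ — $\pi_A$ is that representation up to a unitary, and Bob's PVMs are then forced by Hypothesis~(2a). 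Thus the self-tested strategy coincides with $\tilde S_F$ up to local unitaries.

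I expect the genuine difficulty to live entirely in the two input lemmas rather than in this assembly: \Cref{lem:pair} rests on producing and rigorously certifying the SOS decomposition, and \Cref{lem:unique} on the representation theory of the algebras $\cA_\nu$ developed in \Cref{sect:generalMUM}. Within the present theorem the only new verification is the irreducibility of $\pi_A$ for $\tilde S_F$ — equivalently, that the six defining rank-one projections generate $M_3$ — which should be a short linear-algebra check; and, if one wants the explicit $O(\sqrt\epsilon)$ constant rather than merely ``robust'', one must exhibit finite $\cR$-decompositions of the lifting defects $\theta(\tilde p^x_a)-p^x_a$, which is a bounded bookkeeping task.
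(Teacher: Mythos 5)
Your proposal is correct and follows the same route as the paper: the paper's proof consists precisely of your first paragraph, namely feeding the $O(\sqrt\epsilon)$-robust determining pair from \Cref{lem:pair} and the identification $C^*(G_F)\cong M_3$ from \Cref{lem:unique} into \Cref{uniqueirrep-selftest}. The additional material you supply (the Choi--Effros lifting, finite $\cR$-decompositions, and the explicit identification of the self-tested strategy with $\tilde S_F$ via irreducibility of Alice's representation) is a sound unpacking of the black box \Cref{uniqueirrep-selftest} and of the content of \Cref{lem:unique}, but the paper delegates all of this to \cite{zhao24}.
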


\begin{proof}
  Let $(\Gamma,\cR)$ be the $O(\sqrt{\epsilon})$-robust determining pair of $G_F$ defined in \Cref{lem:pair}. By \Cref{lem:unique}, the associated game algebra  $C^*(G)=\msA_{PVM}^{\{0,1\},\{ 0,1,\bot\}}/\langle \cR\rangle$ is a full matrix algebra. It follows from \Cref{uniqueirrep-selftest} that $G_F$ is a robust self-test.


\end{proof}

\section[Feige's game as "or"-game]{Feige's game as $(G_1\lor G_2)$-game}

Let $G_1$ and $G_2$ be nonlocal games. We describe the $(G_1\lor G_2)$-game as introduced in \cite{mancinskaschmidt}. The referee sends Alice and Bob a pair of questions $(x_1, x_2)$ and $(y_1, y_2)$, respectively, where $x_i, y_i$ are questions in $G_i$, $i\in \{1,2\}$. Each of them chooses one of the questions they received and responds with an answer from the corresponding game. To win the game, two conditions have to be fulfilled:
\begin{itemize}
    \item[(1)] Alice and Bob have to give answers from the same game,
    \item[(2)] their answers have to win the corresponding game.
\end{itemize}
More formally, suppose the nonlocal games $G_1$ and $G_2$ have input sets $X_{i}$, $Y_{i}$, output sets $A_{i}$, $B_{i}$, verification functions $V_{G_i}$ and probability distributions $\pi_i$ on $X_{i}\times Y_{i}$ for $i=1,2$. Then the $(G_1\lor G_2)$-game has input sets $X_{1}\times X_{2}$, $Y_{1}\times Y_{2}$, output sets $A_{1} \dot\cup A_{2}$, $B_{1}\dot\cup B_{2}$ and verification function
\begin{align*}
    V((x_1, x_2), (y_1,y_2), a,b)=\begin{cases}V_{G_i}(x_i,y_i,a,b) \,\text{ if } a \in A_{i} \text{ and } b \in B_{i} \text{ for some $i=1,2$,} \\0 \, \text{ otherwise.}\end{cases}
\end{align*}
For the probability distribution, we take $\pi=\pi_1 \times \pi_2$ on $(X_{1}\times X_{2})\times (Y_{1}\times Y_{2})$, \emph{i.e.} $\pi((x_1,x_2),(y_1,y_2))=\pi_1(x_1,y_1)\pi_2(x_2,y_2)$. In this article, we assume that the probability distributions of $G_1$ and $G_2$ are uniform, so this will also be the case for the $(G_1\lor G_2)$-game.

We will now show that there exist $G_1$ and $G_2$ such that $G_F=G_1\lor G_2$, where $G_F$ denotes the Feige game. We will see $\frac{9}{16}=\omega_q(G_F)>\mathrm{max}\{\omega_q(G_1),\omega_q(G_2)\}=\frac{1}{2}$. This is in contrast the case of perfect strategies \cite[Lemma 4.3]{mancinskaschmidt}, where $\omega_q(G_1\lor G_2)=1$ implies $\omega_q(G_i)=1$ if the players give answers for $G_i$ a perfect strategy of the "or"-game.

Consider the games $G_{1}$ and $G_{2}$, where
\begin{align}
&X_1=\{*\}, \quad Y_1=\{0,1\},\quad  A_1=\{0,1\},\quad B_1=\{\perp\},\quad \pi(x,y)=\frac{1}{2}\label{eq1}\\
&X_2=\{0,1\},\quad Y_2=\{*\},\quad A_2=\{\perp\}, \quad B_2=\{0,1\},\quad \pi(x,y)=\frac{1}{2},\label{eq2}\\
&V_1(a,\perp,*,y)=\begin{cases}1 \text{ if } a=y\\0 \text{ otherwise,}\end{cases} V_2(\perp,b,x,*)=\begin{cases}1 \text{ if } b=x\\0 \text{ otherwise.}\end{cases}
\end{align}
In the game $G_1$, Alice guesses the bit that Bob received and their roles are reversed in the game $G_2$.

\begin{lem}
It holds $\omega_q(G_i)=\frac{1}{2}$ for $i=1,2$.
\end{lem}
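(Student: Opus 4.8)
The plan is to observe that $G_1$ is essentially a one-sided guessing game in which Bob's answer carries no information whatsoever, so quantum resources cannot help. First I would note that in $G_1$ Bob's answer set is the singleton $B_1=\{\perp\}$, so for any quantum strategy $S=(\ket{\psi}\in\cH_A\otimes\cH_B,\{P^x_a\},\{Q^y_b\})$ the normalization condition $\sum_{b\in B_1}Q^y_b=\1_{\cH_B}$ forces $Q^y_\perp=\1_{\cH_B}$ for both $y\in\bit$. Hence Bob's measurement does not actually depend on $y$, which is the whole point.

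Next I would compute the winning probability directly. For all $a\in\bit$ and $y\in\bit$,
\[
p(a,\perp\mid *,y)=\bra{\psi}P^*_a\otimes Q^y_\perp\ket{\psi}=\bra{\psi}P^*_a\otimes \1_{\cH_B}\ket{\psi}=:p_a,
\]
which is independent of $y$, and $p_0+p_1=\bra{\psi}(P^*_0+P^*_1)\otimes\1_{\cH_B}\ket{\psi}=1$. Since $V_1(a,\perp,*,y)=1$ exactly when $a=y$, the winning probability of \emph{any} quantum strategy is
\[
\omega(G_1,S)=\sum_{y\in\bit}\tfrac12\,p(y,\perp\mid *,y)=\tfrac12(p_0+p_1)=\tfrac12 .
\]
Thus every quantum strategy for $G_1$ achieves winning probability exactly $\tfrac12$, so $\omega_q(G_1)=\tfrac12$. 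The game $G_2$ is obtained from $G_1$ by interchanging the roles of Alice and Bob, so I would invoke the identical argument with $P^x_\perp=\1_{\cH_A}$ to conclude $\omega_q(G_2)=\tfrac12$.

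There is no real obstacle here: the only thing to be careful about is the observation that a single-outcome POVM must equal the identity, which immediately removes the $y$-dependence on Bob's side and collapses the analysis. One could equivalently phrase the conclusion as: any strategy for $G_1$ produces a correlation whose only nontrivial marginal is classical, so $\omega_q(G_1)=\omega_c(G_1)=\tfrac12$ (and symmetrically for $G_2$).
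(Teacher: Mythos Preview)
Your proof is correct and follows essentially the same approach as the paper: you use that the single-outcome POVM on Bob's side forces $Q^y_\perp=\1_{\cH_B}$, and then the completeness relation $P^*_0+P^*_1=\1_{\cH_A}$ immediately gives winning probability $\tfrac12$ for every quantum strategy. The paper's proof is slightly more compressed but identical in substance.
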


\begin{proof}
We prove it for $G_1$, the proof for $G_2$ is the same. Note that it holds $B^y_{\perp}=\mathrm{Id}$ for each Bob POVM in any quantum strategy $S$, since he is only allowed to answer with $\perp$. Thus, the winning probability of any quantum strategy for $G_1$ is given by
\begin{align*}
\omega_q(S, G_1)=\frac{1}{2}\bra{\psi}A^*_{0}\otimes B^0_{\perp}+A^*_{1}\otimes B^1_{\perp}\ket{\psi}
=\frac{1}{2}\bra{\psi}(A^*_{0}+A^*_{1})\otimes \mathrm{Id}\ket{\psi}
=\frac{1}{2},
\end{align*}
where we used $A^*_{0}+A^*_{1}=\mathrm{Id}$ in the last step. This completes the proof.
\end{proof}

\begin{thm}
It holds $G_F=G_1\lor G_2$. We therefore have
\begin{align*}
\omega_q(G_F)=\frac{9}{16}>\frac{1}{2}=\mathrm{max}\{\omega_q(G_1),\omega_q(G_2)\}.
\end{align*}
\end{thm}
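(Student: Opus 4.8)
The plan is to verify directly that the two nonlocal games $G_F$ and $G_1 \lor G_2$ coincide as tuples $(X,Y,A,B,\pi,V)$, and then read off the value statement from results already proved. First I would fix the natural identifications $X_1 \times X_2 = \{*\}\times\{0,1\} \cong \{0,1\}$ and $Y_1 \times Y_2 = \{0,1\}\times\{*\} \cong \{0,1\}$, so that the effective question to Alice in $G_1 \lor G_2$ is the bit $x := x_2$ and the effective question to Bob is the bit $y := y_1$. Likewise $A_1 \dot\cup A_2 = \{0,1\}\dot\cup\{\perp\} = \{0,1,\perp\}$ and $B_1 \dot\cup B_2 = \{\perp\}\dot\cup\{0,1\} = \{0,1,\perp\}$, matching the answer sets of $G_F$; and since $\pi_1$ and $\pi_2$ are both uniform, $\pi_1\times\pi_2$ is the uniform distribution $\tfrac14$ on $\{0,1\}\times\{0,1\}$, matching $\pi$ of $G_F$.

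Next I would compare the verification predicates case by case on $(a,b)$. If $a \in A_1 = \{0,1\}$ and $b \in B_1 = \{\perp\}$ (both players answer from $G_1$), then $V((x_1,x_2),(y_1,y_2),a,b) = V_1(a,\perp,*,y) = 1$ iff $a = y$, i.e.\ iff $(a,b) = (y,\perp)$ — exactly one of the two winning patterns of $G_F$. If $a \in A_2 = \{\perp\}$ and $b \in B_2 = \{0,1\}$, then $V = V_2(\perp,b,x,*) = 1$ iff $b = x$, i.e.\ iff $(a,b) = (\perp,x)$ — the other winning pattern of $G_F$. In all remaining cases the answers come from different games: either $a$ is a bit and $b = \perp$ (so $b \notin B_1$ while $a \notin A_2$), or $a = \perp$ and $b$ is a bit; in these cases $V = 0$, and these are precisely the outputs $G_F$ declares a loss (namely $(a,b)$ with two bits, and $(a,b) = (\perp,\perp)$). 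Hence $V$ agrees with the predicate of $G_F$ on every input–output tuple, so $G_F = G_1 \lor G_2$.

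Finally, $\omega_q(G_F) = \tfrac{9}{16}$ is the quantum value established earlier (the strategy of \Cref{sec:quantum_strategy} together with the SOS upper bound), while $\omega_q(G_1) = \omega_q(G_2) = \tfrac12$ by the preceding lemma, so $\max\{\omega_q(G_1),\omega_q(G_2)\} = \tfrac12 < \tfrac{9}{16} = \omega_q(G_F)$, which is the claim.

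I expect no genuine obstacle here; the only thing requiring care is the bookkeeping — keeping straight which coordinate of each product question set carries the trivial symbol $*$, and matching the disjoint-union labels of the answer sets with $\{0,1,\perp\}$ — so that the two winning patterns $(y,\perp)$ and $(\perp,x)$ of $G_F$ correspond respectively to "both players answer in $G_1$'' and "both players answer in $G_2$''.
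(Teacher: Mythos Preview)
Your approach is correct and essentially identical to the paper's own proof, which also identifies the question and answer sets and then simply asserts that ``it is straightforward to see that the verification functions coincide''; you merely supply more detail in the case analysis. There is, however, a slip in your description of the remaining (losing) cases: you write ``either $a$ is a bit and $b = \perp$ \dots\ or $a = \perp$ and $b$ is a bit'', but those are precisely the two \emph{winning} cases you just handled; the actual remaining cases are $a,b$ both bits (so $a \in A_1$, $b \in B_2$) and $a = b = \perp$ (so $a \in A_2$, $b \in B_1$), which you do in fact name correctly in the parenthetical that follows.
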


\begin{proof}
From \Cref{eq1,eq2}, we deduce
\begin{align*}
X_{G_1\lor G_2}&=\{(*,0),(*,1)\}, &&Y_{G_1\lor G_2}=\{(0,*),(1,*)\}, \\
A_{G_1\lor G_2}&=\{0,1,\perp\}, &&B_{G_1\lor G_2}=\{0,1,\perp\}.
\end{align*}
Now, relabeling the sets $\{(*,0),(*,1)\}$ and $\{(0,*),(1,*)\}$ by $\{0,1\}$, we obtain that $G_1\lor G_2$ has the same question and answer sets as the Feige game. It is straightforward to see that the verification functions coincide.
\end{proof}

\section{Feige's game and parallel repetition}

In this section, we look at parallel repetition of Feige's game. From \Cref{tab:tabintro}, we see that Feige's game has interesting behaviour with respect to parallel repetition. In the table, the classical, quantum and non-signalling values agree in the case of an even number of repetitions, but this is not the case for odd numbers. As we saw previously, all values are different for the original game. Furthermore, we get that at least the classical and non-signalling value are different for $n=3$, despite this not being the case for $n=2$. We do not know if there is quantum advantage for $n=3$. Note that the quantum strategy consisting of the optimal quantum strategy for one game and the optimal classical strategy for the $2$-fold parallel repetition performs worse than the optimal classical strategy for $n=3$, since $\frac{9}{16}\cdot\frac{1}{2}=\frac{9}{32}<\frac{5}{16}$.

We first recall the definition of parallel repetition. In the following, we define the product of two games, which then leads to the definition of the parallel repetition of a game.
\begin{defn}
Suppose
\begin{equation} \nonumber
    \begin{split}
        G_1 = (X_1,Y_1,A_1,B_1,\pi_1, V_1),
        \quad
        G_2 = (X_2,Y_2,A_2,B_2,\pi_2, V_2)
    \end{split}
\end{equation}
are nonlocal games. We define their \emph{product} as the game
\begin{equation}\nonumber
    G_1 \times G_2 = \big(X_1\times X_2,Y_1\times Y_2,A_1\times A_2,B_1\times B_2, \pi_1 \times \pi_2,V_1 \times V_2\big)
\end{equation}
Where Alice and Bob receive pairs of questions $(x_1,x_2)\in X_1\times X_2$ and $(y_1,y_2)\in Y_1\times Y_2$ respectively, and provide answers $(a_1,a_2)\in A_1 \times A_2$ and $(b_1,b_2 ) \in B_1 \times B_2$ respectively. The prior distribution is given by
\begin{equation}\nonumber
    \pi_1\times \pi_2 \left( (x_1,x_2),(y_1,y_2) \right):= \pi_1(x_1,y_1) \pi_2(x_2,y_2).
\end{equation}
and $V_1\times V_2$ likewise
\begin{equation}\nonumber
    V_1\times V_2 \Big( (a_1,a_2),(b_1,b_2)|(x_1,x_2),(y_1,y_2) \Big) =V_1(a_1,b_1|x_1,y_1)V_2(a_2,b_2|x_2,y_2)
\end{equation}
\end{defn}
In words, Alice and Bob play the two games independently in parallel and they win if and only if they win both of them simultaneously.

\begin{defn}
    Let $G=(X,Y,A,B, \pi,V)$ be a nonlocal game. The game $G^{\times n}$ is called the \textit{$n$-fold parallel repetition of $G$.}
\end{defn}

\subsection{Classical strategies for the $2m$-fold parallel repetition}
The following strategy was already described by Feige \cite{Feige} for the game $G_F^{\times 2}$. Receiving question $(x_1,x_2)$ and $(y_1,y_2)$, Alice and Bob answer
\begin{align*}
    a_1&=\perp,  &&b_1=y_2,\\
    a_2&=x_1,  &&b_2=\perp,
\end{align*}
respectively. We see that the players win if and only if $x_1=y_2$, which happens with probability $\frac{1}{2}$. This may be surprising, since the classical winning probability of the $2$-fold parallel repetition does not decrease when compared with the original game $G_F$.

For the $2m$-fold parallel repetition $G_F^{\times 2m}$, we obtain a classical strategy with winning probability $\frac{1}{2^m}=\frac{1}{2^{n/2}}$ by repeating the strategy above $m$ times. More precisely, given $(x_1,\dots, x_{2m})$ and $(y_1,\dots, y_{2m})$, the players answer $(a_1,\dots, a_{2m})$ and $(b_1,\dots, b_{2m})$ with
\begin{align*}
    a_{2i-1}&=\perp, && b_{2i-1}=y_{2i},\\
    a_{2i}&=x_{2i-1}, && b_{2i}=\perp,
\end{align*}
for $1\leq i\leq m$. We will see in the next section, that this probability coincides with the upper bound on the non-signalling value of $G_F^{\times 2m}$.

\subsection{The non-signalling upper bound for $2m$-fold parallel repetition}
 Let $n=2m$. Consider a set $I\subseteq [n]:=\{1,\dots, n\}$ and define $a_I(\mathbf{y})\in \{0,1, \perp\}^n, b_I(\mathbf{x})\in \{0,1, \perp\}^n$ for each $\mathbf{x}$, $\mathbf{y}\in \{0,1\}^{n}$ by
 \begin{align*}
 (a_I(\mathbf{y}))_i =\begin{cases} \perp, i\in I \\ y_i, i\notin I \end{cases} \quad \text{and} \quad (b_I(\mathbf{x}))_i =\begin{cases} \perp, i\notin I \\ x_i, i\in I. \end{cases}
 \end{align*}
 Then, we can write 
 \begin{align*}
 \omega(G_F^{\times n}, p)=\frac{1}{2^{4m}}\sum_{I\subseteq [n]}\sum_{\mathbf{x}, \mathbf{y}}p( a_I(\mathbf{y}), b_I(\mathbf{x})|\mathbf{x}, \mathbf{y}) 
 \end{align*}
 for the winning probability of the $2m$-fold parallel repetition. The goal of this subsection is to show $\omega(G_F^{\times n}, p)\leq \frac{1}{2^m}=\frac{1}{2^{n/2}}$, which is equivalent to proving 
  \begin{align*}
 \sum_{I\subseteq [n]}\sum_{\mathbf{x}, \mathbf{y}}p( a_I(\mathbf{y}), b_I(\mathbf{x})|\mathbf{x}, \mathbf{y})\leq 2^{3m}.
 \end{align*}

For this, we first show the following lemma. We define $\mathbf{x}\vert_I\in \{0,1\}^{|I|}$ to be the restriction of $\mathbf{x}$ to a subset $I\subseteq [n]$. Furthermore, we write $|I|$ for the cardinality of a set $I$ and denote the complement by $I^c=[n]\setminus I$. 
 
\begin{lem}\label{lem:evenp1}
  Let $J\subseteq \{0,1\}^{|I^c|}$. It holds
 \begin{align*}
 \sum_{\mathbf{x}\vert_{I^c}\in J}\sum_{\mathbf{x}\vert_I\in \{0,1\}^{|I|}}p( a_I(\mathbf{y}), b_I(\mathbf{x})|\mathbf{x}, \mathbf{y})\leq  |J|\sum_{\mathbf{x}\vert_I\in \{0,1\}^{|I|}}\sum_{\mathbf{a}}p( \mathbf{a}, b_I(\mathbf{x})|\mathbf{z}, \mathbf{y})
 \end{align*}
 for any $\mathbf{z}\in \{0,1\}^n$. Similarly, for $K\subseteq \{0,1\}^{|I|}$, it holds
 \begin{align*}
 \sum_{\mathbf{y}\vert_{I}\in K} \sum_{\mathbf{y}\vert_{I}\in \{0,1\}^{|I^c|}}p( a_I(\mathbf{y}), b_I(\mathbf{x})|\mathbf{x}, \mathbf{y})\leq |K| \sum_{\mathbf{y}\vert_{I^c}\in \{0,1\}^{|I^c|}}\sum_{\mathbf{b}}p( a_I(\mathbf{y}),\mathbf{b}|\mathbf{x}, \mathbf{z})
 \end{align*}
 for any $\mathbf{z}\in \{0,1\}^n$.
 \end{lem}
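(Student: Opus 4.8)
The plan is to exploit the restricted dependence of the two winning answer strings on the questions: $a_I(\mathbf{y})$ depends only on $\mathbf{y}\vert_{I^c}$, whereas $b_I(\mathbf{x})$ depends only on $\mathbf{x}\vert_I$. For the first inequality I would reorganize the double sum on the left-hand side so that the outer sum ranges over $\mathbf{x}\vert_I\in\{0,1\}^{|I|}$ and the inner sum over $\mathbf{x}\vert_{I^c}\in J$. Since $\mathbf{y}$ is fixed throughout and $\mathbf{x}\vert_I$ is fixed in the inner sum, both answer strings $\mathbf{a}_0:=a_I(\mathbf{y})$ and $\mathbf{b}_0:=b_I(\mathbf{x})$ are constant over the inner sum.

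The crucial estimate is then carried out inside the inner sum. As all probabilities are non-negative we have $p(\mathbf{a}_0,\mathbf{b}_0\vert\mathbf{x},\mathbf{y})\leq\sum_{\mathbf{a}}p(\mathbf{a},\mathbf{b}_0\vert\mathbf{x},\mathbf{y})$, and by the non-signalling property of $p$ (\Cref{def:nonsignalling}, in the $n$-fold setting) the right-hand side does not depend on Alice's question, hence equals $\sum_{\mathbf{a}}p(\mathbf{a},\mathbf{b}_0\vert\mathbf{z},\mathbf{y})$ for any $\mathbf{z}\in\{0,1\}^n$. This last quantity no longer depends on $\mathbf{x}\vert_{I^c}$, so summing it over the $|J|$ choices of $\mathbf{x}\vert_{I^c}\in J$ just multiplies it by $|J|$; reinstating the outer sum over $\mathbf{x}\vert_I$ and writing $b_I(\mathbf{x})$ for $\mathbf{b}_0$ yields exactly the claimed bound.

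The second inequality is the mirror image of the first with the roles of Alice and Bob (hence of $\mathbf{x}$ and $\mathbf{y}$) interchanged: now $\mathbf{x}$ is fixed, so $b_I(\mathbf{x})=:\mathbf{b}_0$ is a constant string; one reorganizes so the outer sum runs over $\mathbf{y}\vert_{I^c}\in\{0,1\}^{|I^c|}$ and the inner over $\mathbf{y}\vert_I\in K$, observes that $a_I(\mathbf{y})$ is constant over the inner sum, estimates $p(\mathbf{a}_0,\mathbf{b}_0\vert\mathbf{x},\mathbf{y})\leq\sum_{\mathbf{b}}p(\mathbf{a}_0,\mathbf{b}\vert\mathbf{x},\mathbf{y})$, and applies the non-signalling property on Bob's side to replace $\mathbf{y}$ by an arbitrary $\mathbf{z}$, picking up the factor $|K|$.

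I do not anticipate a genuine obstacle here; the statement is essentially a bookkeeping exercise built on one application of non-signalling. The only points requiring care are keeping straight which of $\mathbf{x}\vert_I$, $\mathbf{x}\vert_{I^c}$, $\mathbf{y}\vert_I$, $\mathbf{y}\vert_{I^c}$ are free and which are being summed at each stage, and verifying that the relevant answer function really is constant on each inner sum before the non-signalling step is invoked. This lemma is then meant to be combined with a suitable splitting of the question strings and a summation over $I\subseteq[n]$ to produce the desired bound $\sum_{I\subseteq[n]}\sum_{\mathbf{x},\mathbf{y}}p(a_I(\mathbf{y}),b_I(\mathbf{x})\vert\mathbf{x},\mathbf{y})\leq 2^{3m}$.
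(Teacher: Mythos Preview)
Your proposal is correct and follows essentially the same approach as the paper: bound $p(a_I(\mathbf{y}),b_I(\mathbf{x})\vert\mathbf{x},\mathbf{y})$ by $\sum_{\mathbf{a}}p(\mathbf{a},b_I(\mathbf{x})\vert\mathbf{x},\mathbf{y})$, use non-signalling to replace $\mathbf{x}$ by $\mathbf{z}$, and observe that the resulting quantity is independent of $\mathbf{x}\vert_{I^c}$ so that the sum over $J$ contributes a factor $|J|$. The only cosmetic difference is that the paper fixes $\mathbf{x}\vert_{I^c}$ first and then sums, whereas you swap the order of the two sums; the logic is identical.
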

 
\begin{proof}
We prove the first inequality, the second follows similarly. For fixed $\mathbf{x}\vert_{I^c}$, we have  
\begin{align}\label{eq:evenparallel}
\sum_{\mathbf{x}\vert_I\in \{0,1\}^{|I|}}p( a_I(\mathbf{y}), b_I(\mathbf{x})|\mathbf{x}, \mathbf{y})\leq \sum_{\mathbf{x}\vert_I\in \{0,1\}^{|I|}}\sum_{\mathbf{a}}p( \mathbf{a}, b_I(\mathbf{x})|\mathbf{z}, \mathbf{y})
 \end{align}
 because of $p( a_I(\mathbf{y}), b_I(\mathbf{x})|\mathbf{x}, \mathbf{y})\leq \sum_{\mathbf{a}}p( \mathbf{a}, b_I(\mathbf{x})|\mathbf{x}, \mathbf{y})$ and the non-signalling property of $p$. Since $b_I(\mathbf{v})=b_I(\mathbf{w})$ if and only if $\mathbf{v}\vert_I=\mathbf{w}\vert_I$, we see that the right hand side of \Cref{eq:evenparallel} is the same for each $\mathbf{x}\vert_{I^c}\in J$. This yields the assertion.
 \end{proof}

The next lemma is concerned with sets that have exactly $m$ elements. 

  \begin{lem}\label{lem:evenp2}
Consider a subset $I\subseteq [n]$ with $m$ elements. Then
 \begin{align*}
 \sum_{\mathbf{x}, \mathbf{y}} p( a_I(\mathbf{y}), b_I(\mathbf{x})|\mathbf{x}, \mathbf{y})\leq 
 2^{m-1}\Bigg(\sum_{\mathbf{y},\mathbf{a}}&\sum_{\mathbf{x}\vert_I\in \{0,1\}^{|I|}}p( \mathbf{a}, b_I(\mathbf{x})|\mathbf{z}, \mathbf{y})\\
 &+\sum_{\mathbf{x},\mathbf{b}}\sum_{\mathbf{y}\vert_{I^c}\in \{0,1\}^{|I^c|}}p( a_I(\mathbf{y}),\mathbf{b}|\mathbf{x}, \mathbf{z})\Bigg)
 \end{align*}
 \end{lem}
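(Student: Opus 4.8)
The plan is to split the double sum $\sum_{\mathbf{x},\mathbf{y}}p(a_I(\mathbf{y}),b_I(\mathbf{x})|\mathbf{x},\mathbf{y})$ into two halves, apply the first inequality of \Cref{lem:evenp1} to one half, the second inequality to the other, and add the resulting bounds. The key point is that, since $n=2m$ and $|I|=m$, we also have $|I^c|=m$, so one can cut the index set $\{0,1\}^n\times\{0,1\}^n$ using a single parity bit. Concretely, set
\[
\phi(\mathbf{x},\mathbf{y}):=\Big(\bigoplus_{j\in I^c}x_j\Big)\oplus\Big(\bigoplus_{i\in I}y_i\Big)\in\{0,1\},\qquad S_c:=\{(\mathbf{x},\mathbf{y}):\phi(\mathbf{x},\mathbf{y})=c\}\ \ (c\in\{0,1\}),
\]
so that $S_0\sqcup S_1=\{0,1\}^n\times\{0,1\}^n$.

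First I would treat $S_0$. For each fixed $\mathbf{y}$ the condition $(\mathbf{x},\mathbf{y})\in S_0$ reads $\bigoplus_{j\in I^c}x_j=\bigoplus_{i\in I}y_i$, which constrains $\mathbf{x}$ only through $\mathbf{x}\vert_{I^c}$ and leaves $\mathbf{x}\vert_I$ free; writing $J(\mathbf{y})$ for the resulting set of admissible $\mathbf{x}\vert_{I^c}$, one has $|J(\mathbf{y})|=2^{m-1}$. Recalling that $a_I(\mathbf{y})$ depends on $\mathbf{y}$ only through $\mathbf{y}\vert_{I^c}$ and $b_I(\mathbf{x})$ on $\mathbf{x}$ only through $\mathbf{x}\vert_I$, the fiber of $S_0$ over $\mathbf{y}$ has exactly the shape required by \Cref{lem:evenp1}, so the first inequality there applied with $J=J(\mathbf{y})$ and then summed over all $\mathbf{y}$ gives
\[
\sum_{(\mathbf{x},\mathbf{y})\in S_0}p(a_I(\mathbf{y}),b_I(\mathbf{x})|\mathbf{x},\mathbf{y})\ \leq\ 2^{m-1}\sum_{\mathbf{y},\mathbf{a}}\ \sum_{\mathbf{x}\vert_I\in\{0,1\}^{|I|}}p(\mathbf{a},b_I(\mathbf{x})|\mathbf{z},\mathbf{y}).
\]
Symmetrically, for $S_1$ and each fixed $\mathbf{x}$ the condition $\phi(\mathbf{x},\mathbf{y})=1$ constrains only $\mathbf{y}\vert_I$, to a set $K(\mathbf{x})$ of size $2^{m-1}$, leaving $\mathbf{y}\vert_{I^c}$ free, so the second inequality of \Cref{lem:evenp1} with $K=K(\mathbf{x})$, summed over $\mathbf{x}$, yields
\[
\sum_{(\mathbf{x},\mathbf{y})\in S_1}p(a_I(\mathbf{y}),b_I(\mathbf{x})|\mathbf{x},\mathbf{y})\ \leq\ 2^{m-1}\sum_{\mathbf{x},\mathbf{b}}\ \sum_{\mathbf{y}\vert_{I^c}\in\{0,1\}^{|I^c|}}p(a_I(\mathbf{y}),\mathbf{b}|\mathbf{x},\mathbf{z}).
\]
Adding these two bounds and using $S_0\sqcup S_1=\{0,1\}^n\times\{0,1\}^n$ produces exactly the claimed inequality.

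The steps requiring care are bookkeeping rather than genuine obstacles: checking that the "free" coordinates are precisely $\mathbf{x}\vert_I$ on $S_0$ and $\mathbf{y}\vert_{I^c}$ on $S_1$, so that the cylinder structure matches the hypotheses of \Cref{lem:evenp1}; noting that \Cref{lem:evenp1} may be invoked with a subset $J$ depending on $\mathbf{y}$ (resp.\ a subset $K$ depending on $\mathbf{x}$), since $\mathbf{y}$ (resp.\ $\mathbf{x}$) appears there as a free parameter; and verifying $|J(\mathbf{y})|=|K(\mathbf{x})|=2^{m-1}$, which is exactly where the hypothesis $|I|=m$ — equivalently $|I|=|I^c|=n/2$ — is used, a parity cut of an $m$-cube and of an $(n-m)$-cube giving the same cardinality only when $m=n-m$. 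I expect no further difficulty: the parity split is the one extra idea beyond \Cref{lem:evenp1}, and it works precisely because $n$ is even.
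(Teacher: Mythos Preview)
Your proof is correct and takes essentially the same approach as the paper. The paper pairs each $i_k\in I$ with some $j_k\in I^c$ and splits according to the parity of $\sum_k(x_{j_k}\oplus y_{i_k})$, but that parity is exactly your $\phi(\mathbf{x},\mathbf{y})=(\bigoplus_{j\in I^c}x_j)\oplus(\bigoplus_{i\in I}y_i)$, so your decomposition $S_0\sqcup S_1$ and the subsequent two applications of \Cref{lem:evenp1} match the paper's argument; if anything, your presentation is slightly more direct in bypassing the auxiliary vector $l$ and going straight to its parity.
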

 
 \begin{proof}
  Since $n=2m$, both sets $I$ and $I^c$ have $m$ elements. Therefore, we can choose pairs $(i_k, j_k)$ for $1\leq k \leq m$ with $i_k\in I, j_k\in I^c$ such that $\{i_k, j_k|1\leq k \leq m\}=[n]$. It holds
 \begin{align}
  \sum_{\mathbf{x}, \mathbf{y}} p( a_I(\mathbf{y}), b_I(\mathbf{x})|\mathbf{x}, \mathbf{y})&=\sum_{l\in \{0,1\}^m} \sum_{\mathbf{x}, \mathbf{y}}\sum_{x_{i_k}\oplus y_{j_k}=l_k \text{ for all } k} p( a_I(\mathbf{y}), b_I(\mathbf{x})|\mathbf{x}, \mathbf{y})\nonumber\\
  &=\sum_{\substack{l\in \{0,1\}^m\\\sum_kl_k=0 \text{ mod } 2}} \sum_{\mathbf{x}, \mathbf{y}}\sum_{x_{i_k}\oplus y_{j_k}=l_k \text{ for all } k} p( a_I(\mathbf{y}), b_I(\mathbf{x})|\mathbf{x}, \mathbf{y})\nonumber\\
  &\quad+\sum_{\substack{l\in \{0,1\}^m\\\sum_kl_k=1 \text{ mod } 2}} \sum_{\mathbf{x}, \mathbf{y}}\sum_{x_{i_k}\oplus y_{j_k}=l_k \text{ for all } k} p( a_I(\mathbf{y}), b_I(\mathbf{x})|\mathbf{x}, \mathbf{y}).\label{eq:evenpa}
 \end{align}
 Now, \Cref{lem:evenp1} yields 
 \begin{align*}
 \sum_{\substack{l\in \{0,1\}^m\\\sum_kl_k=0 \text{ mod } 2}} \sum_{\mathbf{x}, \mathbf{y}}\sum_{\substack{x_{i_k}\oplus y_{j_k}=l_k\\ \text{ for all } k}} p( a_I(\mathbf{y}), b_I(\mathbf{x})|\mathbf{x}, \mathbf{y})\leq 2^{m-1}\sum_{\mathbf{y}, \mathbf{a}}\sum_{\mathbf{x}\vert_I\in \{0,1\}^{|I|}}p( \mathbf{a}, b_I(\mathbf{x})|\mathbf{z}, \mathbf{y})
 \end{align*}
 since the set $\{l\in \{0,1\}^m|\sum_kl_k=0 \text{ mod } 2\}$ has $2^{m-1}$ elements. In the same way, we obtain
  \begin{align*}
 \sum_{\substack{l\in \{0,1\}^m\\\sum_kl_k=1 \text{ mod } 2}} \sum_{\mathbf{x}, \mathbf{y}}\sum_{\substack{x_{i_k}\oplus y_{j_k}=l_k\\ \text{ for all } k}} p( a_I(\mathbf{y}), b_I(\mathbf{x})|\mathbf{x}, \mathbf{y})\leq 2^{m-1}\sum_{\mathbf{x},\mathbf{b}}\sum_{\mathbf{y}\vert_{I^c}\in \{0,1\}^{|I^c|}}p( a_I(\mathbf{y}),\mathbf{b}|\mathbf{x}, \mathbf{z}).
 \end{align*}
 Together with \Cref{eq:evenpa} this yields the result. 
 \end{proof}

Now, we can prove the main theorem of this subsection. 
 
 \begin{thm}
 It holds $\omega_{ns}(G_F^{\times n})\leq \frac{1}{2^{n/2}}$ for $n$ even.
 \end{thm}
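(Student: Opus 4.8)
The plan is to leverage the identity
\[
\omega(G_F^{\times n},p)=\frac{1}{2^{4m}}\sum_{I\subseteq[n]}\ \sum_{\mathbf{x},\mathbf{y}}p(a_I(\mathbf{y}),b_I(\mathbf{x})\mid\mathbf{x},\mathbf{y})
\]
recorded above, so that the claim becomes equivalent to showing $\sum_{I\subseteq[n]}\sum_{\mathbf{x},\mathbf{y}}p(a_I(\mathbf{y}),b_I(\mathbf{x})\mid\mathbf{x},\mathbf{y})\le 2^{3m}$ for every non-signalling $p$. I would prove this by splitting the outer sum into the three ranges $|I|>m$, $|I|=m$ and $|I|<m$ (here $m=n/2$), treating the middle range with \Cref{lem:evenp2} and the outer ones by a direct appeal to non-signalling. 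Throughout, $p(\mathbf{b}\mid\mathbf{y}):=\sum_{\mathbf{a}}p(\mathbf{a},\mathbf{b}\mid\mathbf{x},\mathbf{y})$ and $p(\mathbf{a}\mid\mathbf{x}):=\sum_{\mathbf{b}}p(\mathbf{a},\mathbf{b}\mid\mathbf{x},\mathbf{y})$ denote the one-sided marginals, which are well defined by non-signalling.

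For $|I|>m$, discarding the constraint on Alice's answer and then applying non-signalling gives
\[
\sum_{\mathbf{x},\mathbf{y}}p(a_I(\mathbf{y}),b_I(\mathbf{x})\mid\mathbf{x},\mathbf{y})\ \le\ \sum_{\mathbf{x},\mathbf{y}}p(b_I(\mathbf{x})\mid\mathbf{y})\ =\ 2^{n-|I|}\sum_{\mathbf{y}}\ \sum_{\mathbf{x}\vert_I}p(b_I(\mathbf{x})\mid\mathbf{y}),
\]
where the last equality uses that $b_I(\mathbf{x})$ depends only on $\mathbf{x}\vert_I$; since $|I|\ge m+1$ the prefactor is $2^{n-|I|}\le 2^{m-1}$. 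Symmetrically, for $|I|<m$ one obtains $\sum_{\mathbf{x},\mathbf{y}}p(a_I(\mathbf{y}),b_I(\mathbf{x})\mid\mathbf{x},\mathbf{y})\le 2^{|I|}\sum_{\mathbf{x}}\sum_{\mathbf{y}\vert_{I^c}}p(a_I(\mathbf{y})\mid\mathbf{x})$, with $2^{|I|}\le 2^{m-1}$. For the middle range $|I|=m$ this crude estimate is not quite enough — it would produce the prefactor $2^m$ instead of $2^{m-1}$ — and this is exactly where \Cref{lem:evenp2} is needed; summing out $\mathbf{a}$ and $\mathbf{b}$ in its right-hand side (again using non-signalling) rewrites it as
\[
\sum_{\mathbf{x},\mathbf{y}}p(a_I(\mathbf{y}),b_I(\mathbf{x})\mid\mathbf{x},\mathbf{y})\ \le\ 2^{m-1}\left(\sum_{\mathbf{y}}\sum_{\mathbf{x}\vert_I}p(b_I(\mathbf{x})\mid\mathbf{y})+\sum_{\mathbf{x}}\sum_{\mathbf{y}\vert_{I^c}}p(a_I(\mathbf{y})\mid\mathbf{x})\right).
\]

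To conclude I would add up all the summands involving Bob's marginal — the bounds for the terms with $|I|>m$ together with the first summand of each $|I|=m$ bound — and separately all those involving Alice's marginal. Since $b_I(\mathbf{x})\in\{0,1,\perp\}^n$ is precisely the vector equal to $\mathbf{x}\vert_I$ on $I$ and to $\perp$ on $I^c$, the map $(I,\mathbf{x}\vert_I)\mapsto b_I(\mathbf{x})$ is a bijection from $\{(I,\mathbf{x}\vert_I):|I|\ge m\}$ onto $\{\mathbf{b}:|\supp(\mathbf{b})|\ge m\}$, so
\[
2^{m-1}\sum_{\mathbf{y}}\ \sum_{|I|\ge m}\ \sum_{\mathbf{x}\vert_I}p(b_I(\mathbf{x})\mid\mathbf{y})\ =\ 2^{m-1}\sum_{\mathbf{y}}\ \sum_{\mathbf{b}:|\supp(\mathbf{b})|\ge m}p(\mathbf{b}\mid\mathbf{y})\ \le\ 2^{m-1}\cdot 2^{n}\ =\ 2^{3m-1},
\]
using $\sum_{\mathbf{b}}p(\mathbf{b}\mid\mathbf{y})=1$. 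In the same way the Alice-marginal summands are bounded by $2^{m-1}\sum_{\mathbf{x}}\sum_{\mathbf{a}:|\supp(\mathbf{a})|\ge m}p(\mathbf{a}\mid\mathbf{x})\le 2^{3m-1}$, and adding the two gives $\sum_{I\subseteq[n]}\sum_{\mathbf{x},\mathbf{y}}p(a_I(\mathbf{y}),b_I(\mathbf{x})\mid\mathbf{x},\mathbf{y})\le 2^{3m}$, i.e.\ $\omega(G_F^{\times n},p)\le 2^{-n/2}$; combined with the classical strategy for $G_F^{\times 2m}$ described earlier in this section, this also yields $\omega_{ns}(G_F^{\times n})=2^{-n/2}$ for even $n$.

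I expect the one genuinely delicate point to be the middle layer $|I|=m$: the straightforward non-signalling bound loses a factor of two there, and recovering it forces \emph{half} of the support-$m$ answer vectors to be charged to Bob's marginal and the other half to Alice's. This redistribution is precisely what the even/odd parity decomposition of \Cref{lem:evenp1}, feeding into \Cref{lem:evenp2}, accomplishes, and it is also the reason a perfect matching between $I$ and $I^c$ — and hence the assumption that $n$ is even — is essential.
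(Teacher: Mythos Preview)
Your proposal is correct and follows essentially the same route as the paper: split the sum over $I$ into $|I|>m$, $|I|=m$, $|I|<m$, use \Cref{lem:evenp2} for the middle range and the direct non-signalling estimate (which is the content of \Cref{lem:evenp1} with $J=\{0,1\}^{|I^c|}$) for the outer ones, then regroup using the bijection $(I,\mathbf{x}\vert_I)\mapsto b_I(\mathbf{x})$ --- this is exactly the paper's observation that $b_{I_1}(\mathbf{v})\neq b_{I_2}(\mathbf{w})$ for $I_1\neq I_2$. Your marginal notation and the explicit bijection make the final bookkeeping a bit cleaner, but the argument is the same.
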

 
 \begin{proof}
 We will prove
   \begin{align*}
 2^{4m}\omega(G_F^{\times n}, p)=\sum_{I\subseteq [n]}\sum_{\mathbf{x}, \mathbf{y}}p( a_I(\mathbf{y}), b_I(\mathbf{x})|\mathbf{x}, \mathbf{y})\leq 2^{3m}
 \end{align*}
 for any nonsignalling strategy $p$. From \Cref{lem:evenp1}, we obtain for $|I|>m$ that 
\begin{align}
\sum_{\mathbf{x}, \mathbf{y}}p( a_I(\mathbf{y}), b_I(\mathbf{x})|\mathbf{x}, \mathbf{y})&= \sum_\mathbf{y}\sum_{\mathbf{x}\vert_{I^c}\in \{0,1\}^{|I|^c}}\sum_{\mathbf{x}\vert_I\in \{0,1\}^{|I|}}p( a_I(\mathbf{y}), b_I(\mathbf{x})|\mathbf{x}, \mathbf{y})\nonumber\\
&\leq  2^{m-1}\sum_{\mathbf{y}}\sum_{\mathbf{x}\vert_I\in \{0,1\}^{|I|}}\sum_{\mathbf{a}}p( \mathbf{a}, b_I(\mathbf{x})|\mathbf{z}, \mathbf{y}),\label{eq:evenp1}
\end{align} 
since $|\{0,1\}^{|I|^c}|\leq 2^{m-1}$. Similarly, we get 
\begin{align}
\sum_{\mathbf{x}, \mathbf{y}}p( a_I(\mathbf{y}), b_I(\mathbf{x})|\mathbf{x}, \mathbf{y})\leq 2^{m-1}\sum_{\mathbf{x}}\sum_{\mathbf{y}\vert_{I^c}\in \{0,1\}^{|I^c|}}\sum_{\mathbf{b}}p( a_I(\mathbf{y}),\mathbf{b}|\mathbf{x}, \mathbf{z})\label{eq:evenp2}
\end{align}
for $|I|<m$. 
Together with \Cref{lem:evenp2}, \Cref{eq:evenp1,eq:evenp2} yield
   \begin{align}
 \sum_{I\subseteq [n]}\sum_{\mathbf{x}, \mathbf{y}}p( a_I(\mathbf{y}), b_I(\mathbf{x})|\mathbf{x}, \mathbf{y})\leq 2^{m-1}\Bigg(&\sum_{I, |I|\geq m} \sum_{\mathbf{y}}\sum_{\mathbf{x}\vert_I\in \{0,1\}^{|I|}}\sum_{\mathbf{a}}p( \mathbf{a}, b_I(\mathbf{x})|\mathbf{z}, \mathbf{y})\nonumber\\&+\sum_{I, |I|\leq m}\sum_{\mathbf{x}}\sum_{\mathbf{y}\vert_{I^c}\in \{0,1\}^{|I^c|}}\sum_{\mathbf{b}}p( a_I(\mathbf{y}),\mathbf{b}|\mathbf{x}, \mathbf{z})\Bigg) \label{eq:evenp3}
 \end{align}
 Note that we have 
 \begin{align*}
\sum_{I, |I|\geq m}\sum_{\mathbf{x}\vert_I\in \{0,1\}^{|I|}}\sum_{\mathbf{a}}p( \mathbf{a}, b_I(\mathbf{x})|\mathbf{z}, \mathbf{y})\leq \sum_{\mathbf{a}, \mathbf{b}} p( \mathbf{a}, \mathbf{b})|\mathbf{z}, \mathbf{y})=1
 \end{align*}
 since $b_{I_1}(\mathbf{v})\neq b_{I_2}(\mathbf{w})$ for $I_1\neq I_2$ and any $\mathbf{v}, \mathbf{w}$. Similarly, we obtain 
 \begin{align*}
 \sum_{I, |I|\leq m}\sum_{\mathbf{x}}\sum_{\mathbf{y}\vert_{I^c}\in \{0,1\}^{|I^c|}}\sum_{\mathbf{b}}p( a_I(\mathbf{y}),\mathbf{b}|\mathbf{x}, \mathbf{z})\leq 1.
 \end{align*}
From \Cref{eq:evenp3}, we then deduce 
\begin{align*}
 \sum_{I\subseteq [n]}\sum_{\mathbf{x}, \mathbf{y}}p( a_I(\mathbf{y}), b_I(\mathbf{x})|\mathbf{x}, \mathbf{y})\leq 2^{m-1}(2^{2m}+2^{2m})=2^{3m}
\end{align*}
since there are $2^{2m}$ different $\mathbf{x}$ and $\mathbf{y}$, respectively. This finishes the proof. 
 \end{proof}

 \subsection{Values for the $3$-fold parallel repetition}
Now, we will look at the case of $3$-fold parallel repetition. We will start with describing a classical strategy with winning probability $\frac{5}{16}$. We note that this strategy performs better that the quantum strategy consisting of the optimal quantum strategy for one game and the optimal classical strategy for the $2$-fold parallel repetition, since $\frac{9}{16}\cdot\frac{1}{2}=\frac{9}{32}<\frac{5}{16}$.
Alice's and Bob's strategies are of the form $\left(a_1, a_2 , a_3 \right)$, $\left(b_1 , b_2 , b_3 \right)$
where
$$
a_i : \bit^3 \rightarrow \{0,1,\perp \}, \text{ } b_i : \bit^3 \rightarrow \{0,1,\perp \}, \quad i=1,2,3
$$
We will show that the success probability using the following strategy is $5/16$. Given $(x_1,x_2,x_3)$ and $(y_1,y_2,y_3)$, Alice and Bob answer with
\begin{align*}
    a_1&=\perp, &&b_1=y_2\wedge y_3,\\
    a_2&=x_1 \vee x_3, &&b_2=\perp,\\
    a_3&=\begin{cases}
        \perp \text{ if }x_1=0\\
        1 \hspace{0.2cm}\text{ if }x_1=1
    \end{cases}\hspace{-0.4cm},
    &&b_3=\begin{cases}
        y_2 \hspace{0.2cm}\text{ if }y_2\wedge y_3=0\\
        \perp \hspace{0.15cm}\text{ if }y_2\wedge y_3=1
    \end{cases}\hspace{-0.4cm},
\end{align*}
respectively. Let us denote the event that Alice and Bob win round $i=1,2,3$ by $W_i$. Then
\begin{align*}
\Pr[W_1]&=\Pr[x_1=y_2\wedge y_3 ]=\frac{1}{2},\\
\Pr[W_2|W_1]&=\Pr[y_2=x_1 \vee x_3|x_1=y_2\wedge y_3]= \frac{5}{8}.
\end{align*}
To see the latter, note that
\begin{align*}
\Pr[y_2=x_1 \vee x_3|x_1=y_2\wedge y_3]&=\Pr[y_2=(y_2\wedge y_3)\vee x_3]\\
&=\frac{1}{2}\Pr[y_3\vee x_3=1]+\frac{1}{2}\Pr[x_3=0]\\
&=\frac{1}{2}\cdot \frac{3}{4}+\frac{1}{4}\\
&=\frac{5}{8},
\end{align*}
where we used that the questions are sampled independently. Now, since
\begin{align}
    \Pr[W_1,W_2,W_3] &= \Pr[W_2|W_1,W_3] \Pr[W_1|W_3]\Pr[W_3]\nonumber\\
        &=\frac{5}{16} \Pr[W_2|W_1,W_3],\label{eq:classicaln3strat}
\end{align}
the goal is to show that $\Pr[W_2|W_1,W_3] =1$. Assume
\begin{align*}
    &x_1=y_2\wedge y_3, &&y_2=x_1\vee x_3.
\end{align*}
If $x_1=y_2\wedge y_3=0$, the players win round $3$ if and only if $y_2=x_3$. But this is true because $y_2=x_1\vee x_3=x_3$ for $x_1=0$. If $x_1=y_2\wedge y_3=1$, the players win if and only if $y_3=1$, which is true since $y_2\wedge y_3=1$ implies $y_3=1$. This shows $\Pr[W_2|W_1,W_3] =1$ and by \Cref{eq:classicaln3strat}, the strategy has winning probability $\frac{5}{16}$. By exhaustive search through all deterministic strategies using a computer, we have confirmed that this strategy is indeed optimal.

While we see that the previous classical strategy performs better than the quantum strategy consisting of the optimal quantum strategy for one game and the optimal classical strategy for the $2$-fold parallel repetition, there is no non-signalling strategy that outperforms the one where we use optimal non-signalling strategy for one game and the optimal classical strategy for the $2$-fold parallel repetition. Therefore, the non-signalling value of the $3$-fold parallel repetition is $\frac{1}{3}$. The following is an optimal strategy achieving this winning probability.

\begin{align*}
    p(\mathbf{a},\mathbf{b}|\mathbf{x},\mathbf{y})=\begin{cases}
        \frac{1}{3} \quad\text{ if }[(a_1,b_1)=(\perp,x_1) \text{ or } (a_1,b_1)=(y_1,\perp)] \\\hspace{1cm}\text{ and }(a_2,a_3)=(\perp,x_2), (b_2,b_3)=(y_3,\perp),\\
        \frac{1}{3} \quad\text{ if } a_1\neq y_1, b_1\neq x_1, \text{ and } a_1,b_1 \in \{0,1\}\\
        \hspace{1cm}\text{ and }(a_2,a_3)=(\perp,x_2), (b_2,b_3)=(y_3,\perp),\\
        0 \quad \text{ otherwise.}
    \end{cases}
\end{align*}

We also obtain $\omega_{ns}(G^{\times 3})=\frac{1}{3}$, which can be verified in a similar fashion as done for the cases $n=1$ and $n=2m$ in this article.

\section{Feige's game as synchronous game}

A nonlocal game is called \emph{synchronous} if the question and answer sets of the players coincide, and additionally they have to return the same answer if they receive the same question in order to win. More formally, it holds $X=Y$, $A=B$ and $V(a,b \,|\, x,x)=0$ for $a\neq b$. There is the following synchronous version of the Feige game, which we denote by $G_F^s$:
\begin{align*}
    X= Y=\{0,1 \}, \quad A=B=\{A0,A1,B0,B1 \}, \quad \pi(x,y)=\tfrac{1}{4}
\end{align*}

\begin{align*}
    V(a,b,x,y)=\begin{cases}
        1 \quad (a,b)=(Ax,Ax) \text{ or } (a,b)=(By,By)
        \\
        0 \quad \text{otherwise}.
    \end{cases}
\end{align*}

Note that Alice should never answer $A1$, when she receives $0$, since the players always lose in this case. The same is true for Bob, which shows that to achieve maximum winning probability, the players will restrict to three answers as in the original Feige game. The difference is that $G_F^s$ is a synchronous game, whereas $G_F$ is not synchronous.

We define a strategy $S$ to be \emph{synchronous} if $p(a,b\,|\, x,x)=0$ for $a\neq b$, that is, the players never give different answers when receiving the same question. We can now define the \emph{synchronous value} $\omega_q^s(G)$ of a nonlocal game $G$ to be the supremum over all synchronous quantum strategies, i.e.
\begin{align*}
\omega_q^s(G)=\mathrm{sup}\{\omega(S,G)\,|\,S \text{ synchronous quantum strategy}\}.
\end{align*}
The goal of this section is to prove that
\begin{align*}
\omega_q^s(G_F^s)=\frac{1}{2}<\frac{9}{16}=\omega_q(G_F^s).
\end{align*}
Therefore, we have a synchronous game for which the synchronous value is equal to the classical value, but the game has quantum advantage. Such games are known from \cite{HMNPR}, where they considered certain graph coloring games. Note that $\omega_q(G_F^s)\geq\frac{9}{16}$ by using the optimal quantum strategy for the Feige game for the synchronous version, and $\omega_q(G_F^s)\leq\frac{9}{16}$ follows from the NPA hierarchy at level $1+AB$, in the same way as for $G_F$. In the following, we will show $\omega_q^s(G_F^s)=\frac{1}{2}$. Recall the following lemma from \cite[Corollary 3.6 a)]{MPS}.

\begin{lem}\label{lem:MPS}
Let $S=(\ket{\psi}, \{P^x_a\}, \{Q^y_b\})$ be a synchronous quantum strategy. Then
\begin{align*}
P^x_a\otimes \mathrm{Id}\ket{\psi}= \mathrm{Id}\otimes Q^x_a\ket{\psi}
\end{align*}
for any $x,a$.
\end{lem}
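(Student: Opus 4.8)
The plan is to prove the identity separately for each pair $(x,a)$, by showing that the difference vector
\[
  \xi^x_a := (P^x_a\otimes\mathrm{Id})\ket\psi - (\mathrm{Id}\otimes Q^x_a)\ket\psi
\]
has norm zero. I would expand $\norm{\xi^x_a}^2$ and observe that the synchronicity hypothesis forces each of the three resulting inner products to equal the single number $p(a,a\mid x,x)=\bra\psi P^x_a\otimes Q^x_a\ket\psi$. Since all the operators involved are self-adjoint, the expansion reads
\[
  \norm{\xi^x_a}^2 = \bra\psi (P^x_a)^2\otimes\mathrm{Id}\ket\psi + \bra\psi\mathrm{Id}\otimes (Q^x_a)^2\ket\psi - 2\bra\psi P^x_a\otimes Q^x_a\ket\psi ,
\]
where I have already used that $\bra\psi P^x_a\otimes Q^x_a\ket\psi = p(a,a\mid x,x)$ is real and nonnegative, so that nothing survives the $-2\,\mathrm{Re}(\cdot)$.

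For the first term I would insert $\sum_b Q^x_b=\mathrm{Id}$ to get $\bra\psi P^x_a\otimes\mathrm{Id}\ket\psi=\sum_b p(a,b\mid x,x)=p(a,a\mid x,x)$, the last equality being exactly synchronicity ($p(a,b\mid x,x)=0$ for $b\neq a$); then $(P^x_a)^2\le P^x_a$ (valid since $0\le P^x_a\le\mathrm{Id}$, with equality for projective measurements) gives $\bra\psi (P^x_a)^2\otimes\mathrm{Id}\ket\psi\le p(a,a\mid x,x)$. Symmetrically, inserting $\sum_{a'}P^x_{a'}=\mathrm{Id}$ and using synchronicity again bounds $\bra\psi\mathrm{Id}\otimes (Q^x_a)^2\ket\psi\le\bra\psi\mathrm{Id}\otimes Q^x_a\ket\psi=\sum_{a'}p(a',a\mid x,x)=p(a,a\mid x,x)$. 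Substituting these into the expansion yields $\norm{\xi^x_a}^2\le p(a,a\mid x,x)+p(a,a\mid x,x)-2p(a,a\mid x,x)=0$, hence $\xi^x_a=0$, which is the asserted equality.

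There is no substantive obstacle here: the whole argument is the elementary norm expansion above together with the single observation that synchronicity collapses all three terms to $p(a,a\mid x,x)$. The only points that need a moment's care are that the cross term is genuinely real (so $-2\,\mathrm{Re}(\cdot)$ becomes $-2(\cdot)$) and, if one allows general POVMs rather than only PVMs, that the estimate $(P^x_a)^2\le P^x_a$ turns the two completeness identities into inequalities pointing in the direction that still pins $\norm{\xi^x_a}^2$ to $0$. This reproduces \cite[Corollary 3.6 a)]{MPS}.
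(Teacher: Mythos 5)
Your proof is correct: the norm expansion of $\xi^x_a$ is valid, the cross term is indeed $2\,p(a,a\mid x,x)$ since $P^x_a\otimes\mathrm{Id}$ and $\mathrm{Id}\otimes Q^x_a$ commute and are self-adjoint, and inserting the completeness relations together with synchronicity collapses the two diagonal terms to $p(a,a\mid x,x)$ (with $(P^x_a)^2\le P^x_a$ giving the right-pointing inequality in the POVM case), so $\norm{\xi^x_a}^2\le 0$. The paper does not prove this lemma itself but only cites \cite[Corollary 3.6 a)]{MPS}, and your argument is precisely the standard proof of that cited result, so there is nothing to reconcile.
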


With this we can prove our next result.

\begin{prop}
It holds $\omega_q^s(G_F^s)=\frac{1}{2}$.
\end{prop}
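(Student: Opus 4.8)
The plan is to establish the two bounds $\omega_q^s(G_F^s)\le\tfrac12$ and $\omega_q^s(G_F^s)\ge\tfrac12$ separately. For the lower bound, the simplest candidate works: take the deterministic strategy in which both players answer $Ax$ upon receiving question $x$. It is synchronous, since both players return the same answer on equal questions, and for a question pair $(x,y)$ the answer pair is $(Ax,Ay)$, which wins precisely when $x=y$; hence this strategy has value $\tfrac12$ and $\omega_q^s(G_F^s)\ge\tfrac12$.

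For the upper bound, let $S=(\ket\psi,\{P^x_a\},\{Q^y_b\})$ be an arbitrary synchronous quantum strategy and write $p_A(a\mid x)=\sum_b p(a,b\mid x,y)$ and $p_B(b\mid y)=\sum_a p(a,b\mid x,y)$ for the marginals (these do not depend on the conditioned variable since a quantum correlation is non-signalling). Only the answer pairs $(Ax,Ax)$ and $(By,By)$ contribute to the payoff, so
\[
4\,\omega(G_F^s,S)=\sum_{x,y\in\{0,1\}}\big(p(Ax,Ax\mid x,y)+p(By,By\mid x,y)\big).
\]
The first step is to bound each joint probability by the appropriate one-party marginal, choosing which player's marginal to use according to whether the two questions agree: for the $A$-part I would bound the $y=x$ terms by $p_A(Ax\mid x)$ and the $y=1-x$ terms by $p_B(Ax\mid 1-x)$; for the $B$-part I would bound the $x=y$ terms by $p_B(By\mid y)$ and the $x=1-y$ terms by $p_A(By\mid 1-y)$. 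This gives
\[
4\,\omega(G_F^s,S)\le \sum_{x}\big(p_A(Ax\mid x)+p_B(Ax\mid 1-x)\big)+\sum_{y}\big(p_B(By\mid y)+p_A(By\mid 1-y)\big).
\]

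The second step is the only genuinely ``quantum'' input: by \Cref{lem:MPS}, pairing $P^x_c\otimes\mathrm{Id}\ket\psi=\mathrm{Id}\otimes Q^x_c\ket\psi$ with $\bra\psi$ and using completeness of the measurements yields $p_B(c\mid x)=\bra\psi\mathrm{Id}\otimes Q^x_c\ket\psi=\bra\psi P^x_c\otimes\mathrm{Id}\ket\psi=p_A(c\mid x)$ for every answer $c$ and question $x$. Substituting $p_B(\cdot\mid\cdot)=p_A(\cdot\mid\cdot)$ into the bound above, the eight terms regroup, after performing the sum over the question, into $\sum_{z\in\{0,1\}}\big(p_A(A0\mid z)+p_A(A1\mid z)+p_A(B0\mid z)+p_A(B1\mid z)\big)=\sum_{z\in\{0,1\}}1=2$ by normalization of $p_A(\cdot\mid z)$ over the answer set $\{A0,A1,B0,B1\}$. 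Hence $4\,\omega(G_F^s,S)\le 2$, i.e.\ $\omega(G_F^s,S)\le\tfrac12$, and combined with the lower bound we conclude $\omega_q^s(G_F^s)=\tfrac12$.

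The part that needs care is the bookkeeping in the first step: for each of the eight joint probabilities there are two a priori valid marginal bounds (Alice's or Bob's), and one must choose them so that, after the identification $p_A=p_B$ of the second step, each term $p_A(a\mid z)$ occurs exactly once for each question $z$, which is precisely what lets the two normalization identities close the estimate at $2$. A natural alternative route would phrase the argument via the tracial state $\tau(T)=\bra\psi T\otimes\mathrm{Id}\ket\psi$ on Alice's operators (traciality following as in the proof of \Cref{lem:pair}), rewriting the payoff as $\tfrac14\sum_{x,y}\big(\tau(P^x_{Ax}P^y_{Ax})+\tau(P^x_{By}P^y_{By})\big)$ and bounding the cross terms by $\tau(P^{1-x}_aP^x_a)=\tau(P^x_aP^{1-x}_aP^x_a)\le\tau(P^x_a)$; this is essentially the same computation, but the marginal formulation above avoids invoking traciality and any reduction to projective measurements.
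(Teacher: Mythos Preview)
Your proof is correct and follows essentially the same route as the paper: bound each joint probability by a one-party marginal, use \Cref{lem:MPS} to identify Alice's and Bob's marginals, and then sum so that normalization over the four answers $\{A0,A1,B0,B1\}$ gives~$2$. The paper's version is slightly slicker in the first step: it uniformly drops Alice's operator in every $(Ax,Ax)$-term (bounding by $\bra\psi I\otimes Q^y_{Ax}\ket\psi$) and Bob's operator in every $(By,By)$-term, rather than splitting cases according to whether $x=y$. After invoking \Cref{lem:MPS} this yields $\tfrac14\sum_{x,y}\bra\psi(P^y_{Ax}+P^x_{By})\otimes I\ket\psi$, which telescopes directly to $\tfrac12$ with no bookkeeping at all---so the ``care'' you flagged in choosing marginals can be avoided entirely. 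Your explicit lower-bound strategy (both players answer $Ax$ on input $x$) is a useful addition, since the paper's proof only argues the upper bound.
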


\begin{proof}
Consider a synchronous quantum strategy $S=(\ket{\psi}, \{P^x_a\}, \{Q^y_b\})$ for $G_F^s$. The winning probability of $S$ is given by
\begin{align*}
\omega_q(S, G_F^s)=\frac{1}{4}\sum_{x,y\in\{0,1\}}\bra{\psi} P^x_{Ax}\otimes Q^y_{Ax}+P^x_{By}\otimes Q^y_{By}\ket{\psi}.
\end{align*}
Since $\mathrm{Id}-P^x_{Ax}$ and $\mathrm{Id}-Q^y_{By}$ are positive operators, we get $\bra{\psi} P^x_{Ax}\otimes Q^y_{Ax}\ket{\psi}\leq \bra{\psi} \mathrm{Id}\otimes Q^y_{Ax}\ket{\psi}$ and $\bra{\psi} P^x_{By}\otimes Q^y_{By}\ket{\psi}\leq \bra{\psi} P^x_{By}\otimes \mathrm{Id}\ket{\psi}$. Therefore
\begin{align*}
    \omega_q(S, G_F^s)&\leq\frac{1}{4}\sum_{x,y\in\{0,1\}}\bra{\psi} \mathrm{Id}\otimes Q^y_{Ax}+P^x_{By}\otimes \mathrm{Id}\ket{\psi}\\
    &=\frac{1}{4}\sum_{x,y\in\{0,1\}}\bra{\psi} (P^y_{Ax}+P^x_{By})\otimes \mathrm{Id}\ket{\psi},
\end{align*}
where we used \Cref{lem:MPS} with $Q^y_{Ax}$ in the second step. We conclude
\begin{align*}
    \omega_q(S, G_F^s)&\leq\frac{1}{4} \bra{\psi} (P^0_{A0}+P^0_{B0}+P^0_{A1}+P^0_{B1})+(P^1_{A0}+P^1_{B0}+P^1_{A1}+P^1_{B1})\otimes \mathrm{Id}\ket{\psi}
    =\frac{1}{2}.
\end{align*}
\end{proof}

\section*{Acknowledgments}
We thank Laura Man\v{c}inska for helpful discussions about "or"-games. We acknowledge support by the European Research Council (ERC Grant SYMOPTIC, 101040907), the German Federal Ministry of Research, Technology and Space (QuSol, 13N17173), and the German Research Foundation under Germany's Excellence Strategy (EXC 2092 CASA, 390781972). Y.Z.~is supported by VILLUM FONDEN via QMATH Centre of Excellence grant number 10059 and Villum Young Investigator grant number 37532. S.A.L.S is supported by the European Union via a ERC grant (QInteract, Grant No 101078107) as well as VILLUM FONDEN via Villum Young Investigator grant (No 37532).

\clearpage

\begin{appendix}
    \section{Algebraic structure of Feige's game from sums of square decompositions} \label{appedix: SOS decomp.}

The vector of polynomials
$$
F=F(p_a^x, q_b^y)=\left(f_1(p_a^x, q_b^y), ... ,f_{32}(p_a^x, q_b^y)\right)^T
$$ 
in the proof of \cref{lem:pair} is given by
\begin{equation}\nonumber
    f_1(p_a^x, q_b^y)  = p_0^0 p_1^1 p_0^0-\frac{1}{16}p_0^0
\end{equation}
\begin{equation} \nonumber
    f_2(p_a^x, q_b^y)  = p_1^0 p_1^1 p_1^0-\frac{9 }{16}p_1^0
\end{equation}
\begin{equation} \nonumber
    f_3(p_a^x, q_b^y)  = p_0^1 p_1^0 p_0^1-\frac{1}{16}p_0^1
\end{equation}
\begin{equation} \nonumber
    f_4(p_a^x, q_b^y)  = p_1^1 p_1^0 p_1^1-\frac{9 }{16}p_1^1
\end{equation}
\begin{equation} \nonumber
    f_5(p_a^x, q_b^y)  = p_0^0 p_0^1 p_0^0-\frac{9 }{16}p_0^0
\end{equation}
\begin{equation} \nonumber
    f_6(p_a^x, q_b^y)  = p_0^1 p_0^0 p_0^1-\frac{9 }{16}p_0^1
\end{equation}
\begin{equation} \nonumber
    f_{7} (p_a^x, q_b^y)  = p_1^0 p_0^1 p_1^0-\frac{1}{16}p_1^0
\end{equation}

\begin{equation} \nonumber
f_8(p_a^x , q_b^y)=p_1^1 p_0^0 p_1^1-\frac{1}{16}p_1^1
\end{equation}

\begin{equation} \nonumber
f_9(p_a^x , q_b^y)= \frac{3}{4} p_0^1 p_1^0+\frac{3}{4} p_1^0 p_0^1+\frac{5}{4} p_1^0 p_1^1+\frac{5}{4} p_1^1 p_1^0+\frac{5 }{16}p_0^0-\frac{3 }{8}p_0^1-\frac{9 }{16}p_1^0-\frac{15 }{8}p_1^1+q_0^0
\end{equation}

\begin{equation} \nonumber
f_{10}(p_a^x , q_b^y)= \frac{3}{4} p_0^1 p_1^0+\frac{3}{4} p_1^0 p_0^1+\frac{5}{4} p_1^0 p_1^1+\frac{5}{4} p_1^1 p_1^0-\frac{3 }{16}p_0^0+\frac{1}{8}p_0^1-\frac{33 }{16}p_1^0-\frac{3 }{8}p_1^1+q_1^0
\end{equation}

\begin{equation} \nonumber
f_{11}(p_a^x , q_b^y)= \frac{3}{4} p_0^1 p_1^0+\frac{3}{4} p_1^0 p_0^1+\frac{5}{4} p_1^0 p_1^1+\frac{5}{4} p_1^1 p_1^0+\frac{9 }{16}p_0^0 - \frac{9 }{8}p_0^1 - \frac{13 }{16} p_1^0 -\frac{9 }{8} p_1^1 +q_0^1
\end{equation}

\begin{equation} \nonumber
f_{12}(p_a^x , q_b^y)= \frac{3}{4} p_0^1 p_1^0+\frac{3}{4} p_1^0 p_0^1+\frac{5}{4} p_1^0 p_1^1+\frac{5}{4} p_1^1 p_1^0-\frac{15 }{16}p_0^0 + \frac{3 }{8} p_0^1 - \frac{21 }{16} p_1^0 - \frac{5 }{8}p_1^1 + q_1^1
\end{equation}

\begin{equation} \nonumber
f_{13}(p_a^x , q_b^y)= I+ \frac{2}{3} p_0^1 p_1^0+\frac{2}{3} p_1^0 p_0^1+2 p_1^0 p_1^1+2 p_1^1 p_1^0-\frac{1}{2}p_0^0-\frac{2 }{3}p_0^1 -\frac{13 }{6}p_1^0 - 2 p_1^1
\end{equation}

\begin{equation} \nonumber
f_{14}(p_a^x , q_b^y)= p_0^0  q_0^0-\frac{3}{8} p_0^1 p_1^0-\frac{3}{4} p_1^0 p_0^1+\frac{3}{4} p_1^1 p_0^0+\frac{1}{8} p_1^1 p_1^0-\frac{5 }{32}p_0^0+\frac{3 }{16}p_0^1+\frac{3 }{32}p_1^0-\frac{3 }{16}p_1^1
\end{equation}

\begin{equation} \nonumber
f_{15}(p_a^x , q_b^y)= p_0^0  q_0^1+\frac{3}{8} p_0^1 p_1^0+\frac{9}{4} p_1^0 p_0^1-\frac{9}{4} p_1^1 p_0^0-\frac{5}{8} p_1^1 p_1^0+\frac{9 }{32}p_0^0-\frac{9 }{16}p_0^1-\frac{3 }{32}p_1^0+\frac{9 }{16}p_1^1
\end{equation}

\begin{equation} \nonumber
f_{16}(p_a^x , q_b^y)= p_0^1  q_0^0+\frac{9}{8} p_0^1 p_1^0+\frac{15}{8} p_1^0 p_0^1+\frac{5}{8} p_1^0 p_1^1-\frac{15}{8} p_1^1 p_0^0+\frac{15 p_0^0}{32}-\frac{9 p_0^1}{16}-\frac{15 p_1^0}{32}
\end{equation}

\begin{equation} \nonumber
f_{17}(p_a^x , q_b^y)= p_0^1  q_0^1+\frac{7}{8} p_0^1 p_1^0+\frac{21}{8} p_1^0 p_0^1+\frac{7}{8} p_1^0 p_1^1-\frac{21}{8} p_1^1 p_0^0+\frac{21 }{32}p_0^0-\frac{21 }{16}p_0^1-\frac{21 }{32}p_1^0
\end{equation}

\begin{equation} \nonumber
f_{18}(p_a^x , q_b^y)= p_0^0  q_1^0+\frac{9}{8} p_0^1 p_1^0-\frac{3}{4} p_1^0 p_0^1+\frac{3}{4} p_1^1 p_0^0+\frac{5}{8} p_1^1 p_1^0-\frac{9 }{32}p_0^0+\frac{3 }{16}p_0^1-\frac{9 }{32}p_1^0-\frac{3 }{16}p_1^1
\end{equation}

\begin{equation} \nonumber
f_{19}(p_a^x , q_b^y)= p_0^0  q_1^1+\frac{7}{8} p_0^1 p_1^0-\frac{7}{4} p_1^0 p_0^1+\frac{7}{4} p_1^1 p_0^0+\frac{7}{8} p_1^1 p_1^0-\frac{35 }{32}p_0^0+\frac{7 }{16}p_0^1-\frac{7 }{32}p_1^0-\frac{7 }{16}p_1^1
\end{equation}

\begin{equation} \nonumber
f_{20}(p_a^x , q_b^y)= p_0^1  q_1^0-\frac{3}{8} p_0^1 p_1^0+\frac{3}{8} p_1^0 p_0^1+\frac{1}{8} p_1^0 p_1^1-\frac{3}{8} p_1^1 p_0^0+\frac{3 }{32}p_0^0-\frac{1}{16}p_0^1-\frac{3 }{32}p_1^0
\end{equation}

\begin{equation} \nonumber
f_{21}(p_a^x , q_b^y)= p_0^1  q_1^1+\frac{3}{8} p_0^1 p_1^0-\frac{15}{8} p_1^0 p_0^1-\frac{5}{8} p_1^0 p_1^1+\frac{15}{8} p_1^1 p_0^0-\frac{15 }{32}p_0^0+\frac{3 }{16}p_0^1+\frac{15 }{32}p_1^0
\end{equation}

\begin{equation} \nonumber
f_{22}(p_a^x , q_b^y)= p_1^0  q_0^0+\frac{3}{8} p_1^0 p_0^1-\frac{5}{8} p_1^0 p_1^1+\frac{3 }{16}p_1^0
\end{equation}

\begin{equation} \nonumber
f_{23}(p_a^x , q_b^y)= p_1^0  q_0^1-\frac{3}{8} p_1^0 p_0^1+\frac{1}{8} p_1^0 p_1^1-\frac{1}{16}p_1^0
\end{equation}

\begin{equation} \nonumber
f_{24}(p_a^x , q_b^y)= p_1^1  q_0^0+\frac{7}{8} p_1^1 p_0^0+\frac{7}{8} p_1^1 p_1^0-\frac{21 }{16}p_1^1
\end{equation}

\begin{equation} \nonumber
f_{25}(p_a^x , q_b^y)= p_1^1  q_0^1+\frac{9}{8} p_1^1 p_0^0+\frac{5}{8} p_1^1 p_1^0-\frac{9 }{16}p_1^1
\end{equation}

\begin{equation} \nonumber
f_{26}(p_a^x , q_b^y)= p_1^0  q_1^0+\frac{7}{8} p_1^0 p_0^1+\frac{7}{8} p_1^0 p_1^1-\frac{21 }{16}p_1^0
\end{equation}

\begin{equation} \nonumber
f_{27}(p_a^x , q_b^y)= p_1^0  q_1^1+\frac{9}{8} p_1^0 p_0^1+\frac{5}{8} p_1^0 p_1^1-\frac{9 }{16}p_1^0
\end{equation}

\begin{equation} \nonumber
f_{28}(p_a^x , q_b^y)= p_1^1  q_1^0+\frac{3}{8} p_1^1 p_0^0-\frac{5}{8} p_1^1 p_1^0+\frac{3 }{16}p_1^1
\end{equation}

\begin{equation} \nonumber
f_{29}(p_a^x , q_b^y)= p_1^1  q_1^1-\frac{3}{8} p_1^1 p_0^0+\frac{1}{8} p_1^1 p_1^0-\frac{1}{16}p_1^1
\end{equation}

\begin{equation} \nonumber
f_{30}(p_a^x , q_b^y)= p_0^0 p_0^1+3 p_1^0 p_0^1-3 p_1^1 p_0^0-p_1^1 p_1^0-\frac{3 }{4}p_0^1+\frac{3 }{4}p_1^1
\end{equation}

\begin{equation} \nonumber
f_{31}(p_a^x , q_b^y)= p_0^1 p_0^0-3 p_1^0 p_0^1-p_1^0 p_1^1+3 p_1^1 p_0^0-\frac{3 }{4}p_0^0+\frac{3 }{4}p_1^0
\end{equation}

\begin{equation} \nonumber
f_{32}(p_a^x , q_b^y)= p_0^0 p_1^1-p_0^1 p_1^0-p_1^0 p_0^1+p_1^1 p_0^0-\frac{1}{4}p_0^0+\frac{1}{4}p_0^1+\frac{1}{4}p_1^0-\frac{1}{4}p_1^1
\end{equation}

\section{A generalization of mutually unbiased measurements}\label{sect:generalMUM}
For $n\in \N$ and $\nu=(\nu_{ij})_{i,j=1}^n$ a real symmetric $n\times n$ matrix with $\nu_{ij}>0$ satisfying $\sum_{j} \nu_{ij}=1$, let $\cA_\nu$ be the universal $C^*$-algebra generated by $E_1,...,E_n,F_1, ... ,F_n$ satisfying:
\begin{itemize}
    \item[i)] $E_i = E_i^*$, $F_i=F_i^*$ for all $i=1,...,n$
    \item[ii)] $\sum_{i=1}^n E_i =\sum_{i=1}^n F_i = I$
    \item[iii)] $E_iF_jE_i=\nu_{ij} E_i$, $F_iE_jF_i=\nu_{ij}F_i$ for all $i,j=1,...,n$
\end{itemize}
Observe that by summing over $j$ in iii) and using i) and ii), one easily sees that the generators consist of two sets of orthogonal projections.

\begin{defn}
    We call a representation $\rho_\nu: \cA_\nu \rightarrow B(H)$ a \textit{$\nu$-biased measurement}.
\end{defn}
This definition becomes clear when considering the following example
\begin{exa}
    A pair of $n$-outcome \textit{mutually unbiased measurements (MUMs)} as defined in \cite{MUM_Tavakoli_etal} constitute a representation $\cA_{d^{-1}1_n}$ where $1_n$ is the $n\times n$ matrix of all ones.
\end{exa}
Hence $\nu$-biased bases generalize the concept of MUMs (see e.g. \cite{MUM_Tavakoli_etal}). It turns out we can generalize results from \cite{MUM_Tavakoli_etal} as follows.
\begin{prop}
Let $\rho_\nu: \cA_\nu \rightarrow B(H)$ ($H$ separable) be a $\nu$-biased measurement, $P_a:=\rho(E_a)$ and $Q_b:=\rho(F_b)$. Then
$$
H \cong H' \otimes \C^n
$$
for some Hilbert space $H'$ and
\begin{align}
    P_a & =  I_{H'} \otimes \dyad{a}{a}
    \\
    Q_b & =  \sum_{jk} V_j^{(b)} (V_k^{(b)})^*  \otimes \dyad{j}{k}
\end{align}
where the operators $V^{(b)}_j$ satisfy
\begin{align}
    V_{j}^{(b)}(V_{j}^{(b)})^* & = \nu_{jb} I_{H'}
    \\
    V_1^{(b)} & = \sqrt{\nu_{1b}}I_{H'}
    \\
    V_j^{(1)} & = \sqrt{\nu_{1j}}I_{H'}
    \\
    \sum_k  (V_k^{(b)})^* V_{k}^{(b')} & = 0, \quad \forall b\neq b
    \\
    \sum_b V_j^{(b)} (V_k^{(b)})^* & = \delta_{jk}I_{H'}
\end{align}
\end{prop}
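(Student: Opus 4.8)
The plan is to bootstrap from the observation made right after the definition of $\cA_\nu$, namely that $P_a:=\rho_\nu(E_a)$ and $Q_b:=\rho_\nu(F_b)$ are two projective measurements on $H$: two families of mutually orthogonal projections each summing to the identity. Writing $H_j:=\ran(P_j)$ we then have $H=\bigoplus_{j=1}^n H_j$, with $P_a$ the orthogonal projection onto $H_a$. The first goal is to identify each $H_j$ with the single space $H':=H_1=\ran(P_1)$. To do so I would use the connecting operators $W_j:=P_jQ_1P_1\colon H_1\to H_j$. A short computation with relation (iii) and the symmetry $\nu_{ij}=\nu_{ji}$ gives $Q_1P_jQ_1=\nu_{1j}Q_1$ and $P_1Q_1P_1=\nu_{11}P_1$, hence $W_j^*W_j=\nu_{11}\nu_{1j}\,I_{H_1}$ and, similarly, $W_jW_j^*=\nu_{11}\nu_{1j}\,I_{H_j}$. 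Since all $\nu_{ij}>0$, the rescaled map $\tilde W_j:=(\nu_{11}\nu_{1j})^{-1/2}W_j$ is a unitary $H_1\to H_j$ with $\tilde W_1=I_{H_1}$. These unitaries assemble into a unitary $\Theta\colon H'\otimes\C^n\to H$, $\Theta(\xi\otimes\ket{j}):=\tilde W_j\xi$, under which $\Theta^*P_a\Theta=I_{H'}\otimes\dyad{a}{a}$; this proves the first assertion.

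It remains to compute $Q_b$ in this identification. The main subtlety is that the ranges $\ran(Q_b)$ are \emph{pairwise orthogonal} (the $F_b$'s form a PVM), so they cannot be tied directly to one another; instead I would route everything through $H'=\ran(P_1)$. Let $\iota_b\colon\ran(Q_b)\hookrightarrow H$ be the inclusion isometry, so $\iota_b\iota_b^*=Q_b$ and $\iota_b^*\iota_{b'}=\delta_{bb'}I_{\ran(Q_b)}$. The same kind of computation as above, using $P_1Q_bP_1=\nu_{1b}P_1$ and $Q_bP_1Q_b=\nu_{1b}Q_b$, shows that $U_b:=\nu_{1b}^{-1/2}\,P_1\iota_b\colon\ran(Q_b)\to H'$ satisfies $U_bU_b^*=I_{H'}$ and $U_b^*U_b=I_{\ran(Q_b)}$, i.e.\ is a unitary. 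I would then define
\[
V_j^{(b)}:=\tilde W_j^*\,P_j\,\iota_b\,U_b^*\colon H'\to H',
\]
which is possible precisely because both the $\tilde W_j$ and the $U_b$ are on hand. Inserting $I_{\ran(Q_b)}=U_b^*U_b$ into the block expression $\tilde W_j^*P_jQ_bP_k\tilde W_k=\tilde W_j^*P_j\iota_b\iota_b^*P_k\tilde W_k$ for the $(j,k)$-block of $\Theta^*Q_b\Theta$ shows that this block equals exactly $V_j^{(b)}(V_k^{(b)})^*$, i.e.\ $\Theta^*Q_b\Theta=\sum_{jk}V_j^{(b)}(V_k^{(b)})^*\otimes\dyad{j}{k}$.

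Finally I would verify the five listed identities, each of which collapses to a one-line manipulation once the right substitutions are made: $V_j^{(b)}(V_j^{(b)})^*=\tilde W_j^*(P_jQ_bP_j)\tilde W_j=\nu_{jb}\,\tilde W_j^*\tilde W_j=\nu_{jb}\,I_{H'}$; using $\tilde W_1=I_{H_1}$ and $P_1\iota_b=\sqrt{\nu_{1b}}\,U_b$ one gets $V_1^{(b)}=\sqrt{\nu_{1b}}\,U_bU_b^*=\sqrt{\nu_{1b}}\,I_{H'}$; using $P_j\iota_1U_1^*=\sqrt{\nu_{1j}}\,\tilde W_j$ and $\tilde W_1=I_{H_1}$ one gets $V_j^{(1)}=\sqrt{\nu_{1j}}\,\tilde W_j^*\tilde W_j=\sqrt{\nu_{1j}}\,I_{H'}$; from $\tilde W_k\tilde W_k^*=P_k$ and $\sum_kP_k=I$ one gets $\sum_k(V_k^{(b)})^*V_k^{(b')}=U_b\,\iota_b^*\iota_{b'}\,U_{b'}^*=\delta_{bb'}I_{H'}$ (in particular each column $V^{(b)}=(V_j^{(b)})_j$ is an isometry, which is what makes $\Theta^*Q_b\Theta$ a projection); and from $\sum_bQ_b=I$ one gets $\sum_bV_j^{(b)}(V_k^{(b)})^*=\tilde W_j^*P_jP_k\tilde W_k=\delta_{jk}I_{H'}$. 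The step I expect to be the actual content is the idea in the second paragraph — recognizing that $\ran(Q_b)$ must be connected to $\ran(P_1)$ rather than to the other $\ran(Q_{b'})$ — together with fixing the normalizations (the choice $H'=\ran(P_1)$ and the $\nu_{11}$-factors hidden in $\tilde W_j$ and $U_b$) so that one lands on exactly the stated forms $V_1^{(b)}=\sqrt{\nu_{1b}}\,I_{H'}$ and $V_j^{(1)}=\sqrt{\nu_{1j}}\,I_{H'}$; everything else is bookkeeping, and separability of $H$ plays no essential role beyond making these statements meaningful.
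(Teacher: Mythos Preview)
Your proposal is correct and follows essentially the same approach as the paper: both build the isomorphism $H\cong H'\otimes\C^n$ from the partial isometries $P_jQ_1P_1$ (your $\tilde W_j$ are, up to normalization, the paper's blocks $X_{1j}^{(1)}$), and your $V_j^{(b)}=\tilde W_j^*P_j\iota_bU_b^*$ unwinds to exactly the paper's $V_j^{(b)}=(\nu_{11}\nu_{1j}\nu_{1b})^{-1/2}X_{1j}^{(1)}X_{j1}^{(b)}$. The only cosmetic difference is that the paper first picks an arbitrary identification, writes $Q_b$ in block form, and then conjugates by a block-diagonal unitary $U$ to reach the stated form, whereas you construct the correct identification $\Theta$ directly---slightly cleaner, but the same content.
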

\begin{proof}
The operator $T_{ab}:=\tfrac{1}{\sqrt{\nu_{ab}}} P_a Q_b$ is a partial isometry with support projection $Q_b$ and range projection $P_a$ i.~e.
$$
T_{ab}T_{ab}^*=P_a, \quad T_{ab}^*T_{ab} =Q_b.
$$
This implies that either all projections are finite rank and $\Tr(P_i)=\Tr(Q_j)=:\tau$ for all $i,j=1,...,n$ or none of them are.
Notice that for finite dimensional representations we immediately have that the dimension is a multiple of $n$. Indeed,
$$
n\tau =\sum_{i=1}^n \Tr\left( P_i\right) = \Tr(I) = d.
$$
The completeness relation $\sum_i P_i =I$ and the above considerations allow us to write
$$
H\cong \bigoplus_{i=1}^n H_i \cong H'\otimes \C^n
$$
where $H_i\subseteq H$ is the subspace, $P_i$ projects onto and $H'$ is another (potentially infinite dimensional) Hilbert space. Now we write
$$
P_a = I_{H'} \otimes \dyad{a}{a}, \quad Q_b =\sum_{ij} X_{ij}^{(b)} \otimes \dyad{i}{j}
$$
for some ONB $\{ \ket a\}$ and some operators $X_{ij}^{(b)} \in B(H')$. First note that
$$
(X_{ij}^{(b)})^*=X_{ji}^{(b)}
$$
since $Q_b$ is self-adjoint. Now, by assumption,
$$
P_a Q_b P_a = X_{aa}^{(b)} \otimes \dyad{a}{a} = \nu_{ab} I_{H'} \otimes \dyad{a}{a}
$$
which implies
\begin{equation}\label{eq:rel1}
    X_{aa}^{(b)}=\nu_{ab}I_{H'}
\end{equation}
and
$$
Q_b P_a Q_b = \sum_{ij} X_{ia}^{(b)}X_{aj}^{(b)} \otimes \dyad{i}{j} =  \sum_{ij} \nu_{ab}X_{ij}^{(b)} \otimes \dyad{i}{j}$$
which implies
\begin{equation} \label{eq:rel2}
    X_{ia}^{(b)}X_{aj}^{(b)} = \nu_{ab}X_{ij}^{(b)}
\end{equation}
We will use \eqref{eq:rel1}-\eqref{eq:rel2} several times in the following. We claim now that the operator,
$$
U := \tfrac{1}{\sqrt{\nu_{11}}} \sum_{j} \tfrac{1}{\sqrt{\nu_{1j}}} X_{1j}^{(1)} \otimes \dyad{j}{j}
$$
is unitary. Indeed,
$$
UU^* =\tfrac{1}{\nu_{11}} \sum_{j}\tfrac{1}{\nu_{1j}} X_{1j}^{(1)}X_{j1}^{(1)} \otimes \dyad{j}{j} = \tfrac{1}{\nu_{11}}  \sum_{j}  X_{11}^{(1)} \otimes \dyad{j}{j}= \sum_{j} I_{H'} \otimes \dyad{j}{j}=I_{H}
$$
and similarly $U^*U =I_H$.
Then we have
$$
UP_aU^*=\tfrac{1}{\nu_{11}} \tfrac{1}{\nu_{1a}} X_{1a}^{(1)}X_{a1}^{(1)} \otimes \dyad{j}{j}=\tfrac{1}{\nu_{11}}  X_{11}^{(1)} \otimes \dyad{j}{j}=P_a
$$
and
$$
Q_b':=U Q_b U^*= \tfrac{1}{\nu_{11}}  \sum_{jk} \tfrac{1}{\sqrt{\nu_{1j} \nu_{1k} }} X_{1j}^{(1)} X_{jk}^{(b)}X_{k1}^{(1)} \otimes \dyad{j}{k}
$$
In particular we find
\begin{equation} \nonumber
    \begin{split}
        Q_1' = & \tfrac{1}{\nu_{11}}  \sum_{jk} \tfrac{1}{\sqrt{\nu_{1j} \nu_{1k} }} X_{1j}^{(1)} X_{jk}^{(1)}X_{k1}^{(1)} \otimes \dyad{j}{k} = \tfrac{1}{\nu_{11}}  \sum_{jk} \tfrac{\sqrt{\nu_{1j}}}{\sqrt{ \nu_{1k} }} X_{1k}^{(1)}X_{k1}^{(1)} \otimes \dyad{j}{k}
        \\
        = & \tfrac{1}{\nu_{11}}  \sum_{jk} \sqrt{\nu_{1j}}\sqrt{ \nu_{1k}}  X_{11}^{(1)} \otimes \dyad{j}{k} =   \sum_{jk} \sqrt{\nu_{1j}}\sqrt{ \nu_{1k}}  I_{H'} \otimes \dyad{j}{k} =I_{H'}\otimes \dyad{v}{v}
    \end{split}
\end{equation}
where $\ket v := \sum_{j} \sqrt{\nu_{1j}} \ket j$ is a unit vector.
Now we characterize the remaining $Q_b'$. Let
$$
Y_{jk}^{(b)} = \tfrac{1}{\nu_{11} \sqrt{\nu_{1j}\nu_{1k}}} X_{1j}^{(1)} X_{jk}^{(b)}X_{k1}^{(1)}
$$
so that
$$
Q_b' = \sum_{jk} Y_{jk}^{(b)} \otimes \dyad{j}{k}
$$
and notice that
\begin{equation} \label{int}
    \begin{split}
        Y_{ja}^{(b)}Y_{ak}^{(b)}= & \tfrac{1}{\nu_{11}^2 \nu_{1a} \sqrt{\nu_{1j}\nu_{1k}} } (X_{1j}^{(1)} X_{ja}^{(b)}X_{a1}^{(1)})(X_{1a}^{(1)} X_{ak}^{(b)}X_{k1}^{(1)})
        =\tfrac{1}{\nu_{11}\sqrt{\nu_{1j}\nu_{1k}}} X_{1j}^{(1)} X_{ja}^{(b)} X_{ak}^{(b)}X_{k1}^{(1)}
        \\
        = & \tfrac{\nu_{ab}}{\nu_{11}\sqrt{\nu_{1j}\nu_{1k}}} X_{1j}^{(1)} X_{jk}^{(b)} X_{k1}^{(1)} =\nu_{ab} Y_{jk}^{(b)}
    \end{split}
\end{equation}
%
Hence we can write
$$
Q_b' = \sum_{jk} V_j^{(b)} (V_k^{(b)})^*  \otimes \dyad{j}{k}
$$
where
$$
V_j^{(b)} :=\tfrac{1}{\sqrt{\nu_{1b}}} Y_{j1}^{(b)}= \tfrac{1}{\sqrt{\nu_{1b}\nu_{1j}\nu_{11}}} X_{1j}^{(1)} X_{j1}^{(b)}.
$$
so
$$
V_1^{(b)}=\tfrac{1}{\sqrt{\nu_{1b}\nu_{11}\nu_{11}}} X_{11}^{(1)} X_{11}^{(b)} = \sqrt{\nu_{1b}}I
$$
and
$$
V_j^{(1)} :=\tfrac{1}{\sqrt{\nu_{11}}} Y_{j1}^{(1)}= \tfrac{1}{\sqrt{\nu_{11}\nu_{1j}\nu_{11}}} X_{1j}^{(1)} X_{j1}^{(1)}= \sqrt{\nu_{1j}}I.
$$
Moreover, we have
\begin{equation} \nonumber
    \begin{split}
        V_{j}^{(b)}(V_{j}^{(b)})^* = & \tfrac{1}{\nu_{1b}\nu_{1j}\nu_{11}}X_{1j}^{(1)} X_{j1}^{(b)}
        X_{1j}^{(b)} X_{j1}^{(1)}=\tfrac{1}{\nu_{1j}\nu_{11}}X_{1j}^{(1)} X_{jj}^{(b)} X_{j1}^{(1)}
\\
        = &\tfrac{\nu_{jb}}{\nu_{1j}\nu_{11}}X_{1j}^{(1)} X_{j1}^{(1)} = \tfrac{\nu_{jb}}{\nu_{11}}X_{11}^{(1)} = \nu_{jb}I
    \end{split}
\end{equation}
Orthogonality of the $Q_b'$'s gives us
\begin{equation}\nonumber
    \begin{split}
        Q_b' Q_{b'}' & =  \sum_{jk'} V_j^{(b)}\left[ \sum_k  (V_k^{(b)})^* V_{k}^{(b')} \right] (V_{k'}^{(b')})^*   \otimes \dyad{j}{k'}
        \\
        & =  V_1^{(b)}\left[ \sum_k  (V_k^{(b)})^* V_{k}^{(b')} \right] (V_{1}^{(b')})^*   \otimes \dyad{1}{1} + ...
        \\
        & =  \sqrt{\nu_{1b}\nu_{1b'}}\left[ \sum_k  (V_k^{(b)})^* V_{k}^{(b')} \right]  \otimes \dyad{1}{1} + ...
        \\
        & = 0
    \end{split}
\end{equation}
which implies
$$
\sum_k  (V_k^{(b)})^* V_{k}^{(b')}=0, \quad \forall b\neq b'
$$
Completeness gives us
\begin{equation}\nonumber
    \begin{split}
         I_{H} = \sum_{jk} \left[ \sum_b V_j^{(b)} (V_k^{(b)})^* \right]  \otimes \dyad{j}{k} =\sum_{j} I_{H'} \otimes \dyad{j}{j}
    \end{split}
\end{equation}
and hence
$$
\sum_b V_j^{(b)} (V_k^{(b)})^* = \delta_{jk}I_{H'}
$$
\end{proof}

\begin{prop}\label{prop:rank1}
    Let $\rho: \cA_\nu \rightarrow B(H)$ ($H$ separable) be a representation, $P_a:=\rho(E_a)$ and $Q_b:= \rho(F_b)$. If the $V_j^{(b)}$'s commute, then the $P_a$'s and $Q_b$'s are direct sums of rank 1 projections.
\end{prop}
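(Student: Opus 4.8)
The plan is to use the explicit description of $P_a$ and $Q_b$ from the previous proposition together with the fact that a commuting family of normal operators can be simultaneously diagonalized. Under the commutativity hypothesis (read as: the $*$-algebra generated by the $V_j^{(b)}$ is commutative, which is the natural assumption — note each $V_j^{(b)}$ is a scalar multiple of a partial isometry, since $V_j^{(b)}(V_j^{(b)})^*=\nu_{jb}I_{H'}$), the $C^*$-algebra $\cC\subseteq B(H')$ generated by the $V_j^{(b)}$ is abelian and unital (it contains $I_{H'}=\tfrac1{\sqrt{\nu_{11}}}V_1^{(1)}$). By Gelfand duality $\cC\cong C(\Omega)$ for a compact Hausdorff space $\Omega$, and each $V_j^{(b)}$ is identified with a function $\omega\mapsto v_j^{(b)}(\omega)\in\C$.

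First I would extract the pointwise versions of the relations in the previous proposition. Evaluating $V_j^{(b)}(V_j^{(b)})^*=\nu_{jb}I_{H'}$ yields $\lvert v_j^{(b)}(\omega)\rvert^2=\nu_{jb}$, and evaluating $\sum_b V_j^{(b)}(V_k^{(b)})^*=\delta_{jk}I_{H'}$ shows that the matrix $U(\omega):=\big(v_j^{(b)}(\omega)\big)_{j,b}\in M_n(\C)$ is unitary for every $\omega$. Hence, setting $\ket{w_b(\omega)}:=\sum_j v_j^{(b)}(\omega)\ket j$, the family $\{\ket{w_b(\omega)}\}_b$ is an orthonormal basis of $\C^n$; in particular each $\ket{w_b(\omega)}$ is a unit vector.

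Next, since $H$ (hence $H'$) is separable, the inclusion $\cC\hookrightarrow B(H')$ is unitarily equivalent to a direct integral $\int^\oplus_\Omega H'_\omega\,d\mu(\omega)$ of character representations, on which $V_j^{(b)}$ acts fibrewise as the scalar $v_j^{(b)}(\omega)$; in finite dimensions this is simply the simultaneous diagonalization of the pairwise commuting normal operators $V_j^{(b)}$. Transporting this through $H\cong H'\otimes\C^n$, the operators become
\[
P_a=\int^\oplus_\Omega I_{H'_\omega}\otimes\dyad{a}{a}\,d\mu(\omega),\qquad Q_b=\int^\oplus_\Omega I_{H'_\omega}\otimes\dyad{w_b(\omega)}{w_b(\omega)}\,d\mu(\omega),
\]
so that fibrewise each $P_a$ and each $Q_b$ equals $I_{H'_\omega}$ tensored with a rank-one projection on $\C^n$. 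Choosing (measurably) an orthonormal basis of every $H'_\omega$ refines $H$ into a direct sum of copies of $\C^n$, on each of which $P_a=\dyad{a}{a}$ and $Q_b=\dyad{w_b(\omega)}{w_b(\omega)}$ are genuine rank-one projections. This simultaneously exhibits all the $P_a$ and $Q_b$ as direct sums of rank-one projections, as claimed.

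The step I expect to require the most care is pinning down the hypothesis and the measure-theoretic details: commutativity must be strong enough to force the $V_j^{(b)}$ to be normal — otherwise, e.g.\ with a unilateral shift appearing among the $V_j^{(b)}$, the $Q_b$ would not decompose into rank-one pieces — and the fibrewise choice of basis must be made measurably so that the ``direct sum'' conclusion is literally correct in infinite dimensions. In the finite-dimensional case relevant to $C^*(G_F)$ these issues disappear: the $V_j^{(b)}$ are commuting normal matrices, simultaneously diagonalizable, and $H$ splits as an honest finite direct sum of $\C^n$-blocks.
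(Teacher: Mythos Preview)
Your proposal is correct and follows essentially the same route as the paper: simultaneously diagonalize the commuting (normal) operators $V_j^{(b)}$ and verify that on each resulting one-dimensional piece of $H'$ the operator $Q_b$ restricts to a rank-one projection on $\C^n$. The only difference is cosmetic --- the paper simply posits an orthonormal basis $\{\ket{e_\ell}\}$ of $H'$ diagonalizing all the $Y_{jk}^{(b)}=V_j^{(b)}(V_k^{(b)})^*$ and then checks $\Tr T_\ell^{(b)}=\sum_j\nu_{jb}=1$ to obtain rank one, whereas you phrase the diagonalization via Gelfand duality and a direct integral, which is the more careful formulation in infinite dimensions.
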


\begin{proof}
    We have
    $$
    [Y_{jk}^{(b)}, Y_{j'k'}^{(b')}]=0
    $$
    So there's an ONB $\ket{e_j}$ for $H'$ in which
    $$
    Y_{jk}^{(b)}=\sum_{\ell} (\lambda_{jk}^{(b)})_{\ell}\dyad{e_\ell}{e_\ell}
    $$
    and
    $$
    Q_b' = \sum_{jk} \sum_{\ell} (\lambda_{jk}^{(b)})_{\ell}\dyad{e_\ell}{e_\ell} \otimes \dyad{j}{k} = \sum_{\ell} \dyad{e_\ell}{e_\ell} \otimes T_{\ell}^{(b)}
    $$
    where
    $$
    T_{\ell}^{(b)}:=\sum_{jk} (\lambda_{jk}^{(b)})_{\ell} \dyad{j}{k}
    $$
    Note
    $$
    Y_{ja}^{(b)}Y_{ak}^{(b)} = \sum_{\ell} (\lambda_{ja}^{(b)})_{\ell}(\lambda_{ak}^{(b)})_{\ell}\dyad{e_\ell}{e_\ell} = \sum_{\ell} \nu_{ab} (\lambda_{jk}^{(b)})_{\ell}\dyad{e_\ell}{e_\ell}
    $$
    so in particular we have
    $$
    (\lambda_{ja}^{(b)})_{\ell}(\lambda_{ak}^{(b)})_{\ell}=\nu_{ab} (\lambda_{jk}^{(b)})_{\ell}
    $$
    Since $Q_b=Q_b^*=Q_b^2$ we must have $T_{\ell}^{(b)}=(T_{\ell}^{(b)})^*=(T_{\ell}^{(b)})^2$. We have from earlier that
    $$
    Y_{jj}^{(b)}= \tfrac{\nu_{jb}}{\nu_{11}\nu_{1j}} X_{1j}^{(1)}X_{jj}^{(b)}X_{j1}^{(1)}=\tfrac{\nu_{jb}}{\nu_{11}\nu_{1j}} X_{1j}^{(1)}  X_{j1}^{(1)}= \nu_{jb}I_{H'}
    $$
    hence we must have $(\lambda_{jj}^{(b)})_{\ell}=\nu_{jb}$ and so
    $$
    \Tr(T_{\ell}^{(b)}) = \sum_{j} (\lambda_{jj}^{(b)})_\ell = \sum_{j} \nu_{jb} = 1
    $$
    so the $T_\ell^{(b)}$'s are rank one projections. Now, notice that
    $$
    \dyad{a}{a} T_\ell^{(b)} \dyad{a}{a} = \sum_{jk} (\lambda_{jk}^{(b)})_{\ell}\braket{a}{j} \braket{k}{a} \dyad{a}{a} = (\lambda_{aa}^{(b)})_{\ell} \dyad{a}{a}= \nu_{ab} \dyad{a}{a}
    $$
    and
    \begin{equation} \nonumber
        \begin{split}
            T_{\ell}^{(b)} \dyad{a}{a} T_{\ell}^{(b)} = & \sum_{jkj'k'} (\lambda_{jk}^{(b)})_{\ell}(\lambda_{j'k'}^{(b)})_{\ell} \dyad{j}{k} \dyad{a}{a}  \dyad{j'}{k'}
            \\
            = & \sum_{jk'} (\lambda_{ja}^{(b)})_{\ell}(\lambda_{ak'}^{(b)})_{\ell}  \dyad{j}{k'}
            \\
            = & \nu_{ab} \sum_{jk'} (\lambda_{jk'}^{(b)})_{\ell}  \dyad{j}{k'} =\nu_{ab} T_{\ell}^{(b)}
        \end{split}
    \end{equation}
    We conclude that $\{\dyad{a}{a} \}_{a=1}^n$ and $\{T_\ell^{(b)} \}_{b=1}^n$ are rank 1 representations of $\cA_\nu$ for all $\ell$.
\end{proof}

\subsection{Equivalences}
Let us refer to rank 1 representations of $\cA_\nu$ as \textit{$\nu$-biased bases}.
Following \cite{MUBs} we define the following equivalence relation:
\begin{defn}[Equivalence of $\nu$-biased measurements, $\sim$]
    Let
    $$
    (\{\ket{\varphi_i^{(1)}}\}_{i=1}^n, \{\ket{\psi_j^{(1)}}\}_{j=1}^n) \text{ and } (\{\ket{\varphi_i^{(2)}}\}_{i=1}^n, \{\ket{\psi_j^{(2)}}\}_{j=1}^n)
    $$
    be two pairs of $n$-dimensional $\nu$-biased bases i.~e.~
    $$
    \abs{\braket{\varphi_i^{(k)}}{\psi_j^{(k)}}}^2= \nu_{ij}
    $$
    for $k=1,2$. Let
    $$
    M_k:=(\ket{\varphi_i^{(k)}}_j)_{ij}, \quad N_k:= (\ket{\psi_i^{(k)}}_j)_{ij}.
    $$
    Then these pairs of $\nu$-biased bases are equivalent denoted
    $$ (M_1,N_1) \sim (M_2,N_2)$$
    if and only if there exist permutation matrices $P,P'$, diagonal unitaries $D,D'$ and a unitary $U$ such that
    $$
    M_2 = U M_1 PD, \quad N_2 = U N_1 P'D'.
    $$
\end{defn}

We also need the following definition.
\begin{defn}[Equivalence of unitaries, $\approx$]
Two unitary matrices $U_1$ and $U_2$ are equivalent, written  $U_1 \approx U_2$ if and only if there exist permutation matrices $P,P'$ and diagonal unitaries $D,D'$ such that
$$
    U_2=P D U_1 D' P'.
$$
\end{defn}
This equivalence reflects the fact that permutation of basis elements as well as multiplication with phase factors leave the basis in question essentially unchanged. Now, it is clear that for any pair of $\nu$-biased bases $(M,N)$ we have
$$
(M,N)\sim (I,M^{*}N)
$$
It is also clear that since $M^{*}N$ is a basis, which is $\nu$-biased with the computational basis it must be given by a (generalized) complex Hadamard matrix
$$
M^{*}N = H = (h_{ij})_{ij}
$$
where $\abs{h_{ij}}=\sqrt{\nu_{ij}}$ and $HH^* =I_n$. Hence, a pair of $\nu$-biased bases can always be written in \textit{standard form}, $(I,H)$, for some (generalized) complex Hadamard matrix. It follows by the definitions that two pairs of $\nu$-biased bases, $(I,H)$ and $(I,H')$ are $\sim$-equivalent if and only if $H\approx H'$.

In the following, we will study in which cases $\nu$-biased bases are unique up to $\sim$-equivalence.

\begin{lem}\label{lem:uniquesol}
Let $a_1,\dots, a_n >0$ be strictly positive. If it holds $\sum_{k\neq l}a_k=a_l$ for some $1\leq l\leq n$, then the equation
\begin{align}\label{eq:triangle}
    \sum_{j=1}^{n-1} e^{2\pi i\phi_j}a_j +a_n=0
\end{align}
has a unique solution for $\phi_j\in [0,1)$, $1\leq j\leq n-1$.
\end{lem}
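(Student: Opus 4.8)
The plan is to read \eqref{eq:triangle} geometrically. Set $z_j := a_j e^{2\pi i\phi_j}$ for $1\le j\le n-1$ and $z_n := a_n$; then the equation says $\sum_{j=1}^n z_j = 0$, i.e. the vectors $z_1,\dots,z_n$ — of prescribed lengths $|z_j| = a_j > 0$, with $z_1,\dots,z_{n-1}$ of free argument and $z_n$ anchored to the positive real axis — form a closed polygon. The hypothesis says one side length $a_l$ equals the sum of all the others, which is exactly the boundary case of the polygon inequality and forces the polygon to be degenerate; I will turn this into a uniqueness statement for the $\phi_j$.

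Concretely, first fix an index $l$ with $a_l = \sum_{k\ne l} a_k$ (this already forces $n\ge 2$, since for $n=1$ it would read $a_1 = 0$). From $\sum_j z_j = 0$ we get $z_l = -\sum_{k\ne l} z_k$, hence
\[
a_l = |z_l| = \Bigl|\sum_{k\ne l} z_k\Bigr| \le \sum_{k\ne l} |z_k| = \sum_{k\ne l} a_k = a_l,
\]
so equality holds throughout the triangle inequality. Since every $z_k$ with $k\ne l$ is nonzero (as $a_k > 0$), the equality case of the triangle inequality for complex numbers yields a single unit vector $u\in\C$ with $z_k = a_k u$ for all $k\ne l$; consequently $z_l = -\bigl(\sum_{k\ne l} a_k\bigr) u = -a_l u$.

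Next I would pin down $u$ using the anchored term $z_n = a_n$ with $a_n>0$. If $l\ne n$, then $z_n = a_n u$ must equal the positive real $a_n$, so $u = 1$; reading off $z_k = a_k u$ then forces $e^{2\pi i\phi_k}=1$, i.e. $\phi_k = 0$, for every $k\le n-1$ with $k\ne l$, while $z_l = -a_l$ forces $e^{2\pi i\phi_l} = -1$, i.e. $\phi_l = \tfrac12$. If instead $l = n$, then $z_n = -a_n u = a_n$ gives $u = -1$, so $e^{2\pi i\phi_j} = -1$ and $\phi_j = \tfrac12$ for all $1\le j\le n-1$. In either case all arguments are uniquely determined in $[0,1)$, giving uniqueness; and substituting this assignment back — using exactly the hypothesis $\sum_{k\ne l} a_k = a_l$ to cancel the terms — confirms that it solves \eqref{eq:triangle}, so a solution exists.

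The argument is essentially careful bookkeeping; the only points needing attention are the equality case of the triangle inequality (that $|\sum_k w_k| = \sum_k |w_k|$ with all $w_k\ne 0$ forces a common phase) and keeping straight which index plays the role of the degenerate side $l$ versus the phase-anchored index $n$. That case split, $l = n$ versus $l\ne n$, is the main — and essentially only — subtlety.
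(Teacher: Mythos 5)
Your proof is correct. It reaches the same conclusion as the paper's but by a slightly different mechanism: the paper normalizes the equation by multiplying through by $e^{-2\pi i\phi_l}$, substitutes the hypothesis $a_l=\sum_{k\neq l}a_k$ to obtain $\sum_{j\neq l}\bigl(e^{2\pi i(\phi_j-\phi_l)}+1\bigr)a_j=0$, and then uses positivity of the $a_j$ together with $\mathrm{Re}(e^{i\theta})\geq -1$ to force each relative phase to be $-1$; you instead isolate $z_l=-\sum_{k\neq l}z_k$ and invoke the equality case of the triangle inequality to align all $z_k$, $k\neq l$, along a common unit vector $u$, which is then pinned down by the anchored term $z_n=a_n$. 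The two arguments encode the same degeneracy, but yours has two small advantages worth noting: you treat the case $l=n$ explicitly (the paper's line $e^{-2\pi i\phi_l}=e^{2\pi i(\phi_n-\phi_l)}=-1$ tacitly assumes $l\neq n$, and in the application to \Cref{prop:Haagerup} one cannot a priori exclude $l=n$), and you verify existence by substituting the candidate back in, whereas the paper only establishes that a solution, if it exists, is unique. Both points are easy to repair in the paper's version, so the difference is one of care rather than substance.
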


\begin{proof}
Assume $\sum_{k\neq l}a_k=a_l$ holds for some $l$. Multiplying \Cref{eq:triangle} with $e^{-2\pi i\phi_l}$, we obtain
\begin{align*}
    \sum_{j\neq l} e^{2\pi i(\phi_j-\phi_l)}a_j +a_l=0,
\end{align*}
where we $\phi_n=0$.
From $\sum_{k\neq l}a_k=a_l$, we get
\begin{align*}
    0=\sum_{j\neq l} e^{2\pi i(\phi_j-\phi_l)}a_j +a_l=\sum_{j\neq l} e^{2\pi i(\phi_j-\phi_l)}a_j +\sum_{j\neq l}a_j=\sum_{j\neq l} (e^{2\pi i(\phi_j-\phi_l)}+1)a_j.
\end{align*}
Since $a_j>0$, we obtain $\mathrm{Re}(e^{2\pi i(\phi_j-\phi_l)})=-1$, which yields $e^{2\pi i(\phi_j-\phi_l)}=-1$. Especially, we have $e^{-2\pi i\phi_l}=e^{2\pi i(\phi_n-\phi_l)}=-1$. This shows $\phi_l=\frac{1}{2}$ and we deduce $\phi_j=0$ for $j\neq l$.
\end{proof}

It is known for example from \cite{Haagerup}, that for $n=2,3$, all Hadamard matrices are equivalent. The next proposition gives a generalization of this result.

\begin{prop}\label{prop:Haagerup}
Consider a matrix $\nu=\{\nu_{ij}\}_{ij}$ with positive real entries such that there is a unitary matrix $H$ with $|h_{ij}|=\sqrt{\nu_{ij}}$. If for every $a,b$, there exists $l$ such that
\begin{align}\label{eq:degngon}
\sum_{k\neq l}\sqrt{\nu_{ak}}\sqrt{\nu_{bk}}=\sqrt{\nu_{al}}\sqrt{\nu_{bl}},
\end{align}
then $H'\approx H$ for every unitary $H'$ with $|h'_{ij}|=\sqrt{\nu_{ij}}$.
\end{prop}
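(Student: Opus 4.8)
The plan is to bring every unitary with the prescribed moduli into a canonical \emph{dephased} form, and to show that this form is completely forced by the matrix $\nu$; since $H$ and $H'$ then dephase to the \emph{same} matrix, it follows at once that $H'\approx H$.

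First I would dephase. Since every $\nu_{ij}>0$, all entries of $H$ are nonzero, so one may multiply the columns of $H$ by the phases $\overline{h_{1j}}/\abs{h_{1j}}$ (a diagonal unitary on the right) to make the first row equal to $\bigl(\sqrt{\nu_{11}},\dots,\sqrt{\nu_{1n}}\bigr)$, and then multiply rows $2,\dots,n$ by suitable phases (a diagonal unitary on the left, which leaves the first row untouched) to make the first column equal to $\bigl(\sqrt{\nu_{11}},\dots,\sqrt{\nu_{n1}}\bigr)^{T}$. Call the result $\hat H$; by construction $\hat H = D\,H\,D'$ for diagonal unitaries $D,D'$, hence $\hat H\approx H$. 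Writing $\hat h_{ij}=\sqrt{\nu_{ij}}\,e^{2\pi i\theta_{ij}}$ we then have $\theta_{1j}=\theta_{i1}=0$ for all $i,j$, and it remains only to prove that the remaining phases $\theta_{ij}$ $(i,j\geq 2)$ are determined by $\nu$ alone.

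Fix $i\geq 2$. Since $\hat H$ is unitary, the first and the $i$-th rows are orthogonal, which (using $\theta_{1j}=0=\theta_{i1}$) reads
\[
\sqrt{\nu_{11}\nu_{i1}}\;+\;\sum_{j=2}^{n}\sqrt{\nu_{1j}\nu_{ij}}\;e^{-2\pi i\theta_{ij}}\;=\;0 .
\]
This is precisely an equation of the type treated in \Cref{lem:uniquesol}, with the $n$ strictly positive coefficients $a_j:=\sqrt{\nu_{1j}\nu_{ij}}$ $(1\leq j\leq n)$, the coefficient $a_1$ carrying no phase. Applying the hypothesis \Cref{eq:degngon} to the pair $(a,b)=(1,i)$ produces an index $l$ with $\sum_{k\neq l}a_k=a_l$, i.e.\ the polygon with side lengths $a_1,\dots,a_n$ is degenerate, so \Cref{lem:uniquesol} forces the phases $\theta_{i2},\dots,\theta_{in}$ to a unique value (explicitly, each lies in $\{0,\tfrac12\}$ according to which side is the long one). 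Hence the $i$-th row of $\hat H$ is determined by $\nu$; as $i\geq 2$ was arbitrary and the first row is already fixed, $\hat H$ itself is determined by $\nu$. (The orthogonality relations among rows $2,\dots,n$ and the unit-norm conditions impose no further restriction — they are automatically consistent, since a valid $H$, hence $\hat H$, was assumed to exist.)

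Running the identical dephasing procedure on $H'$ yields $\hat H'$ with $\hat H'\approx H'$, and by the previous paragraph $\hat H'=\hat H$. Since $\approx$ is an equivalence relation, $H'\approx\hat H'=\hat H\approx H$, which is the claim. I expect no serious obstacle: the only point needing care is the bookkeeping when invoking \Cref{lem:uniquesol} — in particular that the degenerate index $l$ returned by \Cref{eq:degngon} may equal $1$ (the phase-free coefficient) or be $\geq 2$, and that \Cref{lem:uniquesol} covers both cases — but the essential content, the rigidity of degenerate polygons, is exactly what that lemma supplies. (Note also that the statement is really used only for the pairs $(1,i)$, $i\geq 2$; the full strength of \Cref{eq:degngon} over all pairs just makes the choice of which row to dephase on irrelevant.)
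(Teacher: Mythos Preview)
Your proposal is correct and follows essentially the same approach as the paper: dephase to make the first row and column real positive, then use the orthogonality of the first row with each other row together with \Cref{lem:uniquesol} (invoking the hypothesis \eqref{eq:degngon} for the pairs $(1,i)$) to pin down all remaining phases uniquely. Your write-up is in fact slightly more explicit than the paper's, e.g.\ in spelling out the equivalence-relation step and in observing that only the pairs $(1,i)$ are actually needed.
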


\begin{proof}
Let $\nu$ be as in the statement and consider $H$ with $|h_{ij}|=\sqrt{\nu_{ij}}$. By multiplying with appropriate diagonal matrices $D,D'$ from left and right, we get
\begin{align*}
    DHD'=\begin{pmatrix}
        \sqrt\nu_{11}&\sqrt\nu_{12}&\dots&\sqrt\nu_{1n}\\
        \sqrt\nu_{21}&e^{2\pi i \phi_{22}}\sqrt\nu_{22}&\dots&e^{2\pi i \phi_{2n}}\sqrt\nu_{2n}\\
        \vdots&\vdots&\ddots&\vdots\\
        \sqrt\nu_{n1}&e^{2\pi i \phi_{n2}}\sqrt\nu_{n2}&\dots&e^{2\pi i \phi_{nn}}\sqrt\nu_{nn}.
    \end{pmatrix}
\end{align*}
Since $DHD'$ is a unitary matrix, we have
\begin{align*}
    \sum_{k=2}^ne^{2\pi i \phi_{ak}}\sqrt{\nu_{1k}}\sqrt{\nu_{ak}}+\sqrt{\nu_{11}}\sqrt{\nu_{a1}}=0
\end{align*}
for all $2\leq a\leq n$. Because of \Cref{eq:degngon}, we can use \Cref{lem:uniquesol} and get that all equations above have unique solutions for $\phi_{ak}\in[0,1)$, $2\leq a,k \leq n$. Therefore $DHD'$ is unique.
\end{proof}

\begin{thm}\label{thm:irrep}
Consider a matrix $\nu=\{\nu_{ij}\}_{ij}\in M_n(\C)$ for $n=2,3$ with positive real entries such that there is a unitary matrix $H$ with $|h_{ij}|=\sqrt{\nu_{ij}}$ and \Cref{eq:degngon} holds. Then the algebra $\cA_\nu$ has a unique irreducible representation up to $\sim$-equivalence, which is of dimension $n$.
\end{thm}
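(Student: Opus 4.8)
The plan is to chain together the three structural results already developed in this section---the normal-form decomposition of an arbitrary $\nu$-biased measurement, the rank-one criterion of \Cref{prop:rank1}, and the rigidity statement of \Cref{prop:Haagerup}---in that order. First I would take an irreducible representation $\rho\colon\cA_\nu\to B(H)$ (it is enough to classify these) and apply the structure result for $\nu$-biased measurements to bring it into normal form: after a unitary change of basis $H\cong H'\otimes\C^n$ with $P_a=\rho(E_a)=I_{H'}\otimes|a\rangle\langle a|$ and $Q_b=\rho(F_b)=\sum_{j,k}V_j^{(b)}(V_k^{(b)})^*\otimes|j\rangle\langle k|$, where the operators $V_j^{(b)}\in B(H')$ satisfy the relations listed there, in particular $V_j^{(b)}(V_j^{(b)})^*=\nu_{jb}I$, $V_1^{(b)}=\sqrt{\nu_{1b}}\,I$ and $V_j^{(1)}=\sqrt{\nu_{1j}}\,I$.

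The crux---and the step I expect to cost the most work---is to show that for $n=2$ and $n=3$ all the $V_j^{(b)}$ commute, so that \Cref{prop:rank1} becomes applicable. For $n=2$ this is immediate: the relation $\sum_k (V_k^{(1)})^*V_k^{(2)}=0$ collapses to $\sqrt{\nu_{11}\nu_{12}}\,I+\sqrt{\nu_{12}}\,V_2^{(2)}=0$, so every $V_j^{(b)}$ is a scalar multiple of $I$. For $n=3$, the two relations $\sum_b V_1^{(b)}(V_2^{(b)})^*=0$ and $\sum_k(V_k^{(1)})^*V_k^{(2)}=0$ together force $V_2^{(3)}=V_3^{(2)}$, and all three nontrivial operators $V_2^{(2)}$, $V_2^{(3)}=V_3^{(2)}$, $V_3^{(3)}$ turn out to be affine functions of the single operator $P:=V_2^{(2)}$. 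I would then combine $PP^*=\nu_{22}I$ with $V_3^{(2)}(V_3^{(2)})^*=\nu_{23}I$ to deduce that $P+P^*$ is a scalar multiple of $I$; writing $P=\mu I+iK$ with $K=\tfrac{1}{2i}(P-P^*)$ self-adjoint then shows $P$ is normal, so $C^*(P)$ is commutative and contains all of the $V_j^{(b)}$. Extracting these linear identities cleanly from the relation list, and checking that the remaining relations are consistently satisfied, is the delicate part; note that for $n\geq 4$ the linear relations no longer pin down all $V_j^{(b)}$ in terms of a single operator, which is why the argument is limited to $n=2,3$.

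Once the $V_j^{(b)}$ commute, \Cref{prop:rank1} shows that $\rho$ decomposes as a direct sum of rank-one representations, so irreducibility forces $\rho$ itself to be one such summand; hence $\dim H=n$ and each $P_a,Q_b$ is a rank-one projection. Such a $\rho$ is exactly a pair of $\nu$-biased bases $(M,N)$, which by the discussion preceding \Cref{prop:Haagerup} is $\sim$-equivalent to a standard form $(I,H)$ with $H$ a generalized complex Hadamard matrix satisfying $|h_{ij}|=\sqrt{\nu_{ij}}$ and $HH^*=I$. The hypothesized existence of such an $H$ supplies one $n$-dimensional irreducible representation (its irreducibility follows since $H$ has no zero entries, so only scalars commute with all $|a\rangle\langle a|$ and all columns of $H$), and \Cref{prop:Haagerup} together with \Cref{eq:degngon} shows that any two such $H$ are $\approx$-equivalent, hence any two of these pairs are $\sim$-equivalent. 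Therefore $\cA_\nu$ has, up to $\sim$-equivalence, exactly one irreducible representation, and it has dimension $n$, as claimed.
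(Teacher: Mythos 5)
Your proposal is correct and follows essentially the same route as the paper: reduce to showing the $V_j^{(b)}$ commute, solve the linear relations for $n=2,3$ so that every $V_j^{(b)}$ is affine in a single operator, invoke \Cref{prop:rank1} to force rank-one (hence $n$-dimensional) irreducible summands, and conclude uniqueness up to $\sim$-equivalence from \Cref{prop:Haagerup}. If anything, your write-up is slightly more careful than the paper's: you keep the correct weights $\sqrt{\nu_{1k}}$ in the orthogonality and completeness identities (the paper states the unweighted sums $\sum_k V_k^{(b)}=0$), and you explicitly justify the normality of $V_2^{(2)}$, which the paper's ``which clearly all commute'' glosses over.
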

\begin{proof}
Due to \Cref{prop:rank1} it suffices to show that the $V_{j}^{(b)}$'s commute.
First note that since
$$
V_j^{(1)}=\sqrt{\nu_{1j}}I_{H'} \text{ and } \sum_{k}(V_k^{(b)})^*V_{k}^{(b')}=0, \ \forall b\neq b'
$$
we must have
$$
\sum_{k} V_k^{(b)}=0, \ \forall b\neq 1
$$
and since
$$
V_1^{(b)}=\sqrt{\nu_{1b}}I_{H'} \text{ and }  \sum_b V_j^{(b)} (V_k^{(b)})^* = \delta_{jk}I_{H'}
$$
we must also have
$$
\sum_b V_j^{(b)} = 0, \ \forall j\neq 1.
$$
For $n=2$ we have
$$
V_1^{(1)} = \sqrt{\nu_{11}}I_{H'}, \ V_2^{(1)} = \sqrt{\nu_{12}}I_{H'}, \ V_1^{(2)} = \sqrt{\nu_{12}}I_{H'}, \ V_2^{(2)} = -\sqrt{\nu_{12}}I_{H'}.
$$
For $n=3$ we have
\begin{equation} \nonumber
    \begin{split}
      V_1^{(1)}  = & \sqrt{\nu_{11}}I_{H'}
      \\
      V_2^{(1)} =V_1^{(2)}  = &\sqrt{\nu_{12}}I_{H'}
      \\
      V_3^{(1)}=V_1^{(3)}  = & \sqrt{\nu_{13}}I_{H'}
      \\
      V_2^{(2)} = : & V
      \\
      V_3^{(2)} = V_2^{(3)}=& -V-\sqrt{\nu_{12}}I_{H'}
      \\
      V_3^{(3)} = & V+(\sqrt{\nu_{12}}-\sqrt{\nu_{13}})I_{H'}
    \end{split}
\end{equation}
which clearly all commute.
Now, $P_a$ and $Q_b$ are direct sums of rank one $\nu$-biased measurements.
The desired now follows from Proposition~\ref{prop:Haagerup}.
\end{proof}

For $
\nu = \begin{bmatrix}
 \frac{9}{16} & \frac{1}{16} & \frac{3}{8} \\
 \frac{1}{16} & \frac{9}{16} & \frac{3}{8}  \\
 \frac{3}{8} & \frac{3}{8} & \frac{1}{4}     \end{bmatrix}
$, we further prove that $\cA_\nu$ has a unique irreducible representation up to unitary equivalence.

\begin{proof}[Proof of \Cref{lem:unique}]
    Let $\ket{\tilde{0}},\ket{\tilde{1}},\ket{\bot}$, and $\tilde{X}\tilde{Z}\tilde{X}$ be as in \Cref{sec:quantum_strategy}. For notational convenience, we let $\ket{\tilde{e}_0}:=\ket{\tilde{0}},\ket{\tilde{e}_1}:=\ket{\tilde{1}},\ket{\tilde{e}_2}:=\ket{\bot}, W:= \tilde{X}\tilde{Z}\tilde{X}$, and $\ket{\tilde{f}_i}=W\ket{\tilde{e}_i}$ for $1\leq i\leq 3$. So $(\{\ket{\tilde{e}_i}\}_{i=1}^3,\{\ket{\tilde{f}_i}\}_{i=1}^3)$ is a pair of $\nu$-biased bases. It follows that the mapping sending $E_i\mapsto \ket{\tilde{e}_i}\bra{\tilde{e}_i}$ and $F_i\mapsto \ket{\tilde{f}_i}\bra{\tilde{f}_i}$ for all $1\leq i\leq 3$ defines an irreducible representation $\tilde{\pi}$ of $\cA_\nu$ on $\C^3$.

Suppose $\pi$ is another irreducible representation of $\cA_\nu$. By \Cref{thm:irrep}, $\pi$ is $\sim$ equivalent to $\tilde{\pi}$. This means there is a pair of $\nu$-biased bases $\{\{\ket{e_i}\}_{i=1}^3, \{\ket{f_i}_{i=1}^3\}\}$ such that $\pi(E_i)=\ket{e_i}\bra{e_i}$ and $\pi(F_i)=\ket{f_i}\bra{f_i}$ for all $1\leq i\leq 3$, and that there exist permutations $\sigma,\sigma'\in S_3$ and a unitary $U$ satisfy
\begin{equation}
     \tilde{\pi}(E_i)= U \pi(E_{\sigma(i)}) U^* \text{ and }  \tilde{\pi}(F_i)= U \pi(F_{\sigma'(i)}) U^* \label{eq:unitary}
\end{equation}
   for all $1\leq i\leq$. Since both $(\{\ket{\tilde{e}_i}\}_{i=1}^3,\{\ket{\tilde{f}_i}\}_{i=1}^3)$ and $\{\{\ket{e_i}\}_{i=1}^3, \{\ket{f_i}_{i=1}^3\}\}$ are pairs of $\nu$-biased bases, by the symmetry in $\nu$, the only allowed permutations $(\sigma,\sigma')$ are $(\id,\id)$ and $((0,1),(0,1))$. If $(\sigma,\sigma')=(\id,\id)$, then \Cref{eq:unitary} implies $\pi$ and $\tilde{\pi}$ are unitarily equivalent. If $(\sigma,\sigma')=((0,1),(0,1))$, let $V:= \ket{\tilde{0}}\bra{\tilde{1}}+\ket{\tilde{1}}\bra{\tilde{0}}-\ket{\bot}\bra{\bot}$ be a unitary. We see that $\tilde{\pi}(E_i)= VU \pi(E_{i}) U^*V^* \text{ and }  \tilde{\pi}(F_i)= VU \pi(F_{i}) U^*V^*$, so we also have that  $\pi$ is unitarily equivalent to $\tilde{\pi}$. We conclude that $\tilde{\pi}$ is unique up to unitary equivalence.
\end{proof}

\end{appendix}

\bibliographystyle{alpha}
\bibliography{references}

\end{document}